\documentclass[11pt,leqno]{article}

\usepackage{amsthm,amsfonts,amssymb,amsmath}
\usepackage{fullpage}
\usepackage{graphicx}
\usepackage{mathrsfs}
\usepackage{xcolor}
\usepackage{caption,subcaption}
\usepackage[colorinlistoftodos]{todonotes}
\usepackage{overpic}

\numberwithin{equation}{section}

\newcommand{\R}{\mathbb R}

\newcommand{\uu}{\mathrm{u}}
\newcommand{\s}{\mathrm{s}}

\newcommand{\f}{\mathrm{f}}
\newcommand{\bb}{\mathrm{b}}

\newcommand{\re}{\mathrm{e}}
\newcommand{\de}{\mathrm{d}}

\newtheorem{theorem}{Theorem}[section]

\newtheorem{corollary}[theorem]{Corollary}
\newtheorem{lemma}[theorem]{Lemma}
\newtheorem{remark}[theorem]{Remark}
\theoremstyle{definition}

\newtheorem{hypothesis}[theorem]{Hypothesis}

\allowdisplaybreaks[3]

\title{A traveling wave bifurcation analysis of turbulent pipe flow}

\author{Maximilian Engel\thanks{
Department of Mathematics, Freie Universit{\"a}t Berlin, Arnimallee 6, 14195 Berlin, Germany; \texttt{maximilian.engel@fu-berlin.de}.}~, Christian Kuehn\thanks{Department of Mathematics, Technical University of Munich, Boltzmannstr.~3, 85748 Garching b.~M\"unchen, Germany; \texttt{ckuehn@ma.tum.de}}~, Bj\"orn de Rijk\thanks{Institut f\"ur Analysis, Dynamik und Modellierung, Universit\"at Stuttgart, Pfaffenwaldring 57, 70569 Stuttgart, Germany; \texttt{bjoern.derijk@mathematik.uni-stuttgart.de}}}

\date{\today}

\begin{document}

\maketitle

\begin{abstract}
Using various techniques from dynamical systems theory, we rigorously study an experimentally validated model by [Barkley et al., Nature, 526:550-553, 2015], which describes the rise of turbulent pipe flow via a PDE system of reduced complexity. The fast evolution of turbulence is governed by reaction-diffusion dynamics coupled to the centerline velocity, which evolves with advection of Burgers' type and a slow relaminarization term. Applying to this model a spatial dynamics ansatz and geometric singular perturbation theory, we prove the existence of a heteroclinic loop between a turbulent and a laminar steady state and establish a cascade of bifurcations of various traveling waves  mediating the transition to turbulence. The most complicated behaviour can be found in an intermediate Reynolds number regime, where the traveling waves exhibit arbitrarily long periodic-like dynamics indicating the onset of chaos. Our analysis provides a systematic mathematical approach to identifying the transition to spatio-temporal turbulent structures that may also be applicable to other models arising in fluid dynamics.
\end{abstract}

{\bf Keywords:} bifurcations, heteroclinic loop, pipe flow, reaction-diffusion-advection system, traveling waves, turbulence, geometric singular perturbation theory.
	

\section{Introduction}

Understanding turbulence in fluids has been among the most challenging scientific problems for many years. Even in spatial domains with a relatively simple geometry, such as pipe flow, one has still not fully understood the transition mechanism(s) to spatio-temporal turbulence. Already back in the late 19th century, experiments of Reynolds~\cite{Reynolds} indicated that above a critical velocity, turbulence seems to persist in pipe flow for quite a broad range of initial conditions~\cite{Reynolds1}. It is now common to express the critical velocity via the Reynolds number, defined for pipe flow by $\textnormal{Re} = U\rho/n$, where $U$ is the mean velocity, $\rho$ stands for pipe diameter, and $n$ is the kinematic viscosity. There is no precise single Reynolds number value for the transition from laminar flow to fully turbulent flow~\cite{EckhardtSchneiderHofWesterweel, Kerswell}. The transition occurs within an entire range of Reynolds numbers. Yet, even the boundaries of the parameter region are still somewhat unclear~\cite{Eckhardt}. One main obstacle to study the transition is that the parabolic laminar profile in pipe flow is linearly stable for all Reynolds numbers~\cite{MesguerTrefethen, SalwenCottonGrosch} hinting at a more global dynamical effect. Experiments~\cite{DarbyshireMullin,HofJuelMullin, Reynolds} and direct numerical simulations~\cite{Avilaetal,MoxeyBarkley, Shanetal} for the Navier-Stokes equation show that the transition to turbulence can be caused via finite-size perturbations from the laminar flow. A key component in the transition to turbulence are finite-time localized patterns, so-called turbulent puffs~\cite{WygnanskiChampagne,WygnanskiSokolovFriedman}. Puffs are turbulent patches existing within the laminar flow decaying after a finite~\cite{HofWesterweelSchneiderEckhardt} but very long~\cite{AvilaWillisHof,Hofetal,KuikPoelmaWesterweel} time. Puffs can not only decay but can also split so that balancing splitting and decay rates is one possible option to estimate a critical Reynolds number for the transition to turbulence~\cite{MukundHof} and to determine the interaction length of puffs~\cite{SamantaDeLozarHof}. It is highly desirable to understand the precise dynamical mechanisms~\cite{SreenivasanRamshankhar}, e.g., near the onset of the transition. The splitting and decay processes of puffs lead one to consider stochastic processes as possible models, e.g., exploiting the analogy to coupled map lattice dynamics~\cite{ChateManneville,Kaneko2} or chemical/ecological systems~\cite{ShihHsiehGoldenfeld}. Near the onset, the directed percolation universality class~\cite{Hinrichsen} matches many experiments and simulations remarkably well~\cite{Lemoultetal,SanoTamai, SiposGoldenfeld}. Although a simplified statistical description is extremely helpful~\cite{Pomeau1}, it does not illuminate the transition mechanisms and geometry in phase space~\cite{Barkleyetal,ShimizuManneville}. A direct approach would be to mathematically analyze the Navier-Stokes equations~\cite{DoeringGibbon}. However, even many elementary-looking questions about Navier-Stokes quickly run into technical problems~\cite{BardosTiti} recognized already at the beginning of the twentieth century~\cite{Prandtl}. One could even argue that the situation has recently further worsened as weak solutions to Navier-Stokes are not even unique~\cite{BuckmasterVicol}. 

A natural approach is to consider models of ``intermediate complexity'', which are more tractable than Navier-Stokes but still capture many essential spatio-temporal features, dynamical mechanisms and statistics of turbulence. In this work, we focus on one of these models recently proposed, and experimentally validated extensively for pipe flow, by Barkley et al.~\cite{Barkleyetal}
\begin{equation}
\label{SYS1}
\begin{array}{lcl}
\partial_t q &=& D\partial_x^2q+ (\zeta-u)\partial_xq + f(q,u;r),\\
\partial_t u &=& -u\partial_x u + \varepsilon g(q,u),
\end{array}
\end{equation}
where $t \geq 0$ represents time, $x \in \R$ is interpreted as the stream-wise coordinate, $u=u(x,t)$ represents the centerline velocity, $q=q(x,t)$ models the turbulence level, and the reaction terms are given by
\begin{equation}
\label{SYS11}
f(q,u;r)=q(r+u-2-(r+0.1)(q-1)^2)\quad \text{and}\quad g(q,u)=2-u+2q(1-u).
\end{equation}
Regarding the parameters, $D > 0$ controls the coupling of turbulent patches to the laminar flow via diffusion, $r > 0$ models the Reynolds number in a suitable rescaling, $\zeta > 0$ takes into account the slower time scale of turbulent advection in comparison to the centerline velocity, while the small parameter $\varepsilon > 0$ controls the time scale separation between fast excursions of $q$ relative to slow recovery of $u$ after relaminarization. Structurally, one observes that~\eqref{SYS1} is a mixed system combining a bistable reaction-diffusion system with advective nonlinear terms of Burgers' type. Both individual elements are quite classical intermediate complexity simplifications, e.g., in modeling approaches~\cite{BecKhanin,Kida, Pomeau} as well as in localized reduced amplitude equations~\cite{IoossMielke,Schneider6}.

In this work, we are going to rigorously establish the existence of a wide variety of different traveling waves for the pipe flow turbulence model~\eqref{SYS1}-\eqref{SYS11}. There is ample motivation to study traveling waves in more detail~\cite{KawaharaUhlmannVanVeen}. Navier-Stokes simulations and experiments strongly indicate that the transition to turbulence in pipe flow is intimately connected to traveling waves~\cite{BudanurHof1,BudanurHof,FaistEckhardt,PringleKerswell,TohItano}. In particular, one current conjecture is that the existence of a boundary crisis~\cite{PeixinhoMullin,RitterMellibovskyAvila}, where an attractor collides with its basin boundary, generates a chaotic saddle. This saddle (or edge) state~\cite{Mellibovskyetal,SchneiderEckhardtYorke} seems to contain several interesting traveling waves~\cite{DuguetWillisKerswell}.

Our analysis of traveling waves in the turbulence model \eqref{SYS1}-\eqref{SYS11} is based on several steps. (S1) We re-write the problem via a standard spatial dynamics ansatz~\cite{KuehnBook1,Sandstede1} obtaining a three-dimensional system of ordinary differential equations (ODEs). In this context, bounded orbits correspond to traveling waves, e.g., homoclinic orbits to impulses, heteroclinic orbits to fronts, and periodic orbits to wave trains. (S2) The ODEs are singularly perturbed in the small parameter $\varepsilon > 0$ having the structure of a fast-slow dynamical system~\cite{FEN2,Kaper,KuehnBook,Jones} with two fast and one slow variable. We exploit the associated geometric decomposition in phase space and very explicitly construct singular heteroclinic orbits, which correspond to laminar-to-turbulent fronts and backs. Employing geometric singular perturbation theory in combination with Melnikov's method, one obtains the persistence of these orbits for the full three-dimensional ODE system. (S3) Using the existence of these heteroclinic orbits, we distinguish two parameter regimes, one for large Reynolds number, represented by the parameter $r$, and one for intermediate $r$. For the large $r$ case, we establish that the heteroclinic orbits form a twisted heteroclinic loop, while for the intermediate $r$ scenario a double-twisted heteroclinic loop is proven to exist. (S4) Having the existence of these orbits corresponding to spatio-temporal puff structures, we employ results due to Deng~\cite{Deng91a} as well as Homburg and Sandstede~\cite{HomburgSandstede}, to study bifurcations under parameter variation, which yields additional heteroclinic and also homoclinic structures forming connections between the laminar and turbulent states. All in all, we establish:

\begin{theorem}[Informal statement]
Let $\zeta > \frac{2}{5}$.
\begin{itemize}
\item[(i)] \textbf{Large Reynolds number regime, single twist:} For sufficiently large $r > 0$ and sufficiently small $\varepsilon > 0$ there exists an open interval $I_{\varepsilon,r} \subset (0,\infty)$ of diffusion rates such that for $D \in I_{\varepsilon,r}$ the pipe flow model~\eqref{SYS1}-\eqref{SYS11} exhibits simple laminar-to-turbulent front and back solutions, which are traveling-wave solutions, whose profiles possess a single interface. The fronts and backs can be propagating upstream or downstream, depending on the value of $\zeta$.
\item[(ii)] \textbf{Intermediate Reynolds number regime, double twist:} There exists $\gamma > 0$ such that for $r \in (\frac{2}{3},\frac{2}{3} + \gamma)$ and sufficiently small $\varepsilon > 0$ there exists an open interval $I_{\varepsilon,r} \subset (0,\infty)$ of diffusion rates such that for $D \in I_{\varepsilon,r}$ the pipe flow model~\eqref{SYS1}-\eqref{SYS11} admits infinitely many laminar-to-turbulent $k$-front and $k$-back solutions for arbitrary $k\in \mathbb{N}_0$, which are traveling waves whose profiles exhibit $k$ well-separated patches of turbulence, before converging towards fixed (laminar or turbulent) states at $\pm \infty$. The $k$-fronts and $k$-backs can be propagating upstream or downstream, depending on the value of $\zeta$.
\end{itemize}
In both of the above parameter regimes, the pipe flow model~\eqref{SYS1}-\eqref{SYS11} admits impulse solutions, which are traveling waves whose profiles are either laminar with a localized patch of turbulence, or they exhibit a localized absence of turbulence.
\end{theorem}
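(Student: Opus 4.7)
The plan is to realize the four-step programme (S1)--(S4) sketched in the introduction. First, I would apply the traveling-wave ansatz $q(x,t)=Q(\xi)$, $u(x,t)=U(\xi)$ with $\xi=x-ct$, and introduce $P=Q'$ to rewrite~\eqref{SYS1}--\eqref{SYS11} as a three-dimensional ODE in $(Q,P,U)$. Because the $U$-equation contains the factor $\varepsilon$, the resulting system is singularly perturbed with two fast variables $(Q,P)$ and one slow variable $U$; the critical manifold $\mathcal{C}_0=\{P=0,\,f(Q,U;r)=0\}$ consists of several sheets indexed by the roots of $f(\cdot,U;r)$ in $Q$, corresponding to the laminar ($Q=0$) and turbulent ($Q>0$) branches. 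Hyperbolicity of the outer sheets away from the fold locus will then yield, by Fenichel's theorem, locally invariant slow manifolds $\mathcal{C}_\varepsilon^{\mathrm{lam}},\mathcal{C}_\varepsilon^{\mathrm{turb}}$ with fibrations that organize the global dynamics.

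Next I would construct the \emph{singular} heteroclinic loop at $\varepsilon=0$. In the frozen fast subsystem, the $(Q,P)$-equation is, for each fixed $U$, a planar flow with a cubic reaction term coming from $f$; for a distinguished value $c=c_*(U;D,\zeta)$ one obtains an explicit front connecting the laminar zero to the turbulent root, solvable because $f$ is cubic and admits the standard Huxley-type ansatz $P=\alpha Q(Q-Q_*)$. A symmetric computation on the reverse fast layer produces the back connection. Concatenating these fast heteroclinics with slow segments along $\mathcal{C}_0^{\mathrm{lam}}$ and $\mathcal{C}_0^{\mathrm{turb}}$, where $U$ evolves under the reduced flow driven by $g$, gives a singular heteroclinic loop between the laminar and turbulent equilibria of the full 3D system. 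Persistence for $0<\varepsilon\ll 1$ follows from the exchange lemma and a Melnikov-type transversality computation, where the Melnikov integral along the explicit front can be computed in closed form thanks to the cubic structure of $f$; the same integral, regarded as a function of $D$, will vanish at isolated values and be transverse in $D$, yielding the open intervals $I_{\varepsilon,r}$ in the theorem. The direction of propagation (upstream vs.\ downstream) is read off from the sign of $c_*(\cdot)-\zeta$.

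To distinguish the two regimes of the theorem I would then analyze the \emph{twist} of the loop. Twist is a topological quantity measured by tracking the leading stable and unstable eigenspaces of the laminar and turbulent equilibria along each heteroclinic branch and comparing the resulting orientation after one circuit. A direct computation using the linearizations of~\eqref{SYS1}--\eqref{SYS11} at both equilibria, together with the explicit expressions for the singular front/back, reduces the twist count to the sign of a simple expression in $r$ and $\zeta$. For large $r$ this expression is non-zero and yields the single-twisted loop of part~(i); near the critical value $r=\tfrac{2}{3}$ it changes sign, and I would show that in a small neighbourhood $(\tfrac{2}{3},\tfrac{2}{3}+\gamma)$ the loop is in fact double-twisted in the sense of Homburg--Sandstede.

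Finally, with the twisted-loop structure established, the bifurcation consequences are obtained by invoking known codimension-two unfoldings: Deng's theorem~\cite{Deng91a} applied to the single-twisted loop of part~(i) produces, in any neighbourhood of the loop in parameter space, the simple laminar-to-turbulent fronts, backs and impulses; the results of Homburg--Sandstede~\cite{HomburgSandstede} applied to the double-twisted loop of part~(ii) yield the full family of $k$-fronts, $k$-backs and $k$-impulses for all $k\in\mathbb{N}_0$, realized as transverse intersections of invariant manifolds inside the unfolding. The interval $I_{\varepsilon,r}$ is just the parameter window in $D$ in which the twisted loop, and hence the unfolding, exists. The main obstacle I anticipate is the twist computation: it requires carefully tracking the orientation of one-dimensional invariant manifolds through both fast layers and along the slow manifolds, and in particular identifying the precise parameter value $r=\tfrac{2}{3}$ at which a strong-stable/unstable eigenvalue resonance changes the sign of the twist. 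Everything else (the singular construction, Melnikov persistence, and invocation of the abstract unfolding theorems) is essentially a matter of careful bookkeeping given the explicit form of $f$ and $g$.
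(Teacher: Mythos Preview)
Your overall architecture matches the paper's: spatial dynamics reduction to a three-dimensional fast--slow ODE, construction of a singular heteroclinic loop via explicit Nagumo-type layer fronts concatenated with slow orbit segments, Melnikov/Fenichel persistence, and finally invocation of Deng's and Homburg--Sandstede's unfolding theorems once the twist type is identified. Two points, however, are genuinely off and would derail the argument if left as stated.

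First, the heteroclinic loop is not produced by a single Melnikov function in $D$ vanishing on an isolated set. There are \emph{two} separate Melnikov functions, $Q_{\mathrm f}(\pmb\alpha;r)$ and $Q_{\mathrm b}(\pmb\alpha;r)$, one for the front layer and one for the back layer, and the loop exists only where \emph{both} vanish simultaneously. One must therefore solve the system $Q_{\mathrm f}=Q_{\mathrm b}=0$ in the two unknowns $(D,\mu)$ by the implicit function theorem, which requires the $2\times 2$ Jacobian $\smash{\widehat M}(r)$ in~\eqref{defhatM} to be nonsingular. This yields a single point $(\widehat D(\varepsilon,r),\widehat\mu(\varepsilon,r))$, not an interval. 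The open interval $I_{\varepsilon,r}$ in the statement is instead the $D$-window around $\widehat D(\varepsilon,r)$ on which the bifurcation curves $s_{\mathrm f,k}(D)$, $s_{\mathrm b,k}(D)$ emanating from the loop (cf.\ Figure~\ref{fig:twist}) are defined; the wave speed is selected along each curve, not free.

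Second, your description of the twist mechanism is not what actually happens. The value $r=\tfrac23$ is where the turbulent equilibrium $X_2$ first appears (onset of bistability), not where the twist changes, and there is no eigenvalue resonance involved. The twist of each heteroclinic branch is governed by the sign of a further Melnikov-type integral with respect to the \emph{slow variable} $u$, namely $\partial_u\mathcal Q_j(u_j(r),D_0(r),\mu_0(r);r)$ for $j=\mathrm f,\mathrm b$, which encodes how the stable and unstable manifolds $W^{\mathrm s/\mathrm u}(K_{i,0})$ separate in the direction transverse to the layer. The back integral $\partial_u\mathcal Q_{\mathrm b}$ is negative for all $r>\tfrac23$ (so the back is always twisted), whereas $\partial_u\mathcal Q_{\mathrm f}$ changes sign near $r\approx 0.73$; it is positive on $(\tfrac23,\tfrac23+\gamma)$ (giving the double twist of part~(ii)) and negative for large $r$ (giving the single twist of part~(i)). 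Establishing these signs, not tracking any resonance, is the actual content of the twist step.
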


More details can be found in~\S\ref{sec:mainresults}, where we rigorously state our main results. The technical challenges of the associated mathematical proofs lie in finding a suitable approximation of the Melnikov integrals and a detailed geometric analysis of the twist regimes. 

The most interesting part of our result is the case of intermediate Reynolds number (ii). This is the regime, where the model~\eqref{SYS1}-\eqref{SYS11} was cross-validated both experimentally and via full Navier-Stokes simulations in~~\cite{Barkleyetal}. Our main results establish in the intermediate Reynolds number regime that infinitely many spatio-temporal invariant structures with different arbitrarily long transient periodic-like dynamics exist in the model~\eqref{SYS1}-\eqref{SYS11}, cf.~Figure~\ref{fig:1}. The existence of infinitely many different periodic structures is a \emph{common hallmark} feature among various definitions of \emph{chaos}~\cite{GH,Wiggins1}. Therefore, we have identified, via a concrete phase space construction, dynamical solutions that can organize the transition to turbulence. Furthermore, our steps (S1)-(S4) provide a general strategy, how to very explicitly and fully mathematically rigorously unravel many important parts of the spatio-temporal features of turbulent dynamics. Hence, we have not only obtained important results for the turbulence model~\eqref{SYS1}-\eqref{SYS11} concerning pipe flow but established a systematic approach to identify spatio-temporal turbulent solutions rigorously that can be applicable for other models in fluid dynamics and potentially in the long-term even directly to the Navier-Stokes equations. Regarding our mathematical results (i) for large Reynolds numbers for~\eqref{SYS1}-\eqref{SYS11}, additional experimental and/or numerical cross-validation would be necessary to make them directly applicable; see also~\S\ref{sec:outlook}.   

The remainder of the paper is structured as follows. In~\S\ref{sec:mainresults} we rigorously state our main results after introducing the necessary terminology. The proofs of our main results can be found in~\S\ref{sec:slowfast} and~\S\ref{sec:proof_twists}. In particular, in~\S\ref{sec:slowfast} we establish a heteroclinic loop, whereas in~\S\ref{sec:proof_twists} we analyze the bifurcating traveling waves in the large and intermediate Reynolds number regimes. We conclude with a discussion and outlook in~\S\ref{sec:outlook}.

\section{Main results} \label{sec:mainresults}

In this paper we establish a wide variety of traveling waves for the pipe flow turbulence model~\eqref{SYS1}-\eqref{SYS11}. \emph{Traveling waves} are solutions to~\eqref{SYS1} of the form
\begin{align} \left(q(x,t),u(x,t)\right) = \left(q_*(x-st),u_*(x-st)\right),\label{TW} \end{align}
which propagate with a fixed speed $s \in \R$ without changing their profile. Inserting the ansatz~\eqref{TW} into~\eqref{SYS1} and introducing the co-moving variable $\xi = x-st$, we arrive at the so-called \emph{traveling-wave equation}. We adopt a spatial dynamics formulation, cf.~\cite{KuehnBook1,Sandstede1}, and write the traveling-wave equation as a three-dimensional dynamical system
\begin{align}
\begin{split}
q_\xi &= p,\\
p_\xi &= D^{-1}\left((u + \mu)p - f(q,u;r)\right),\\
u_\xi &= \frac{\varepsilon g(q,u)}{u-s},
\end{split} \label{SYS2}
\end{align}
where we have conveniently replaced the variable $\zeta$ in~\eqref{SYS1}, which accounts for the difference in advection between turbulence and the centerline velocity, by the new variable $\mu = -\zeta-s$, which then represents this difference in advection relative to the speed $s$ of the traveling wave.

Bounded orbits $(q_*(\xi),p_*(\xi),u_*(\xi))$ in~\eqref{SYS2} directly correspond to traveling-wave solutions to~\eqref{SYS1}. In particular, the parabolic laminar flow in~\eqref{SYS1}, exhibiting no turbulence and a constant centerline velocity $u = 2$, corresponds to the equilibrium $X_1 = (0,0,2)$ in~\eqref{SYS2}. We are interested in the Reynolds number regime $r > \frac{2}{3}$, where~\eqref{SYS2} admits a second, $r$-dependent equilibrium $X_2 = (q_{\bb,+}(r),0,u_{\bb}(r))$, which corresponds to a turbulent steady state in~\eqref{SYS1} with constant (but non-zero) turbulence level $q = q_{\bb,+}(r) > 0$ and centerline velocity $u = u_{\bb}(r) \in (\frac{6}{5},\frac{4}{3})$.

The stability of the steady states $(q,u) = (0,2)$ and $(q,u) = (q_{\bb,+}(r),u_{\bb}(r))$ in the pipe flow model~\eqref{SYS1} is readily established by computing the spectra of the linearizations of~\eqref{SYS1} about both states. Thus, for $r > \frac{2}{3}$ one finds that the pipe flow model~\eqref{SYS1} is \emph{bistable}.

Our results strongly rely on the existence of forward and a backward heteroclinic connections between the equilibria $X_1$ and $X_2$ in~\eqref{SYS2}. Such heteroclinic connections directly correspond to traveling laminar-to-turbulent fronts and backs in~\eqref{SYS1}. A \emph{traveling front} is a solution to~\eqref{SYS1} of the form~\eqref{TW} with
\begin{equation*}
\lim_{\xi \to -\infty} \left(q_*(\xi),q_*'(\xi),u_*(\xi)\right) = X_1, \quad \lim_{\xi \to \infty} \left(q_*(\xi),q_*'(\xi),u_*(\xi)\right) = X_2.
\end{equation*}
Similarly, a \emph{traveling back} is a solution to~\eqref{SYS1} of the form~\eqref{TW} satisfying
\begin{equation*}
\lim_{\xi \to -\infty} \left(q_*(\xi),q_*'(\xi),u_*(\xi)\right) = X_2, \quad \lim_{\xi \to \infty} \left(q_*(\xi),q_*'(\xi),u_*(\xi)\right) = X_1.
\end{equation*}
Thus, depending on the sign of the speed $s$ in~\eqref{TW} such front and backs travel up- or downstream, and are describing either invasion of the parabolic laminar flow by the turbulent state or a recovery process of the laminar flow from the turbulent state, see Figure~\ref{fig:1}.

We will identify parameter regimes such that the pipe flow model~\eqref{SYS1} admits both a traveling front and a traveling back propagating with the same speed $s$ (hence one must be invading and the other recovering the laminar state). Together, the associated forward and backward heteroclinic connection in~\eqref{SYS2} form a so-called \emph{heteroclinic loop or cycle}, see Figure~\ref{fig:2}. We will prove that the heteroclinic loop satisfies certain twisting conditions, so that the general bifurcation theory of Deng~\cite{Deng91a} applies, see also~\cite{HomburgSandstede}. More specifically, we establish that the heteroclinic loop is single twisted in the large Reynolds number regime, whereas for intermediate Reynolds numbers we show that it is double twisted. This leads us to infinitely many nearby heteroclinics and homoclinics in~\eqref{SYS2} connecting the equilibria $X_1$ and $X_2$, which directly correspond to traveling waves in the pipe flow model~\eqref{SYS1} exhibiting arbitrarily long transient periodic-like dynamics. Before stating our main results, we introduce the necessary terminology to specify the type of traveling waves generated by such heteroclinic and homoclinic connections.

\begin{remark}
{\upshape We note that the pipe flow model~\eqref{SYS1} was also studied (and experimentally validated) in the bistable regime $r > \frac{2}{3}$ in~\cite{Barkleyetal}. Here, the existence of traveling laminar-to-turbulent fronts and backs was also recognized (albeit formally), see Remark~\ref{rem:expand} for more details. However, the existence of a single or double twisted heteroclinic loop in~\eqref{SYS2} and the large variety of nearby homoclinic and heteroclinic structures, yielding traveling waves in~\eqref{SYS1} with arbitrarily long transient periodic-like dynamics, seems to not have been observed before. Global homoclinic/heteroclinic structures often act as organizing centers of chaotic dynamics~\cite{Bertozzi1,GH,Ottino,Wiggins1} but are incredibly difficult to verify in spatio-temporal dynamics. This makes it very remarkable that we can, mathematically rigorously and via explicit calculations, construct such an organizing center for~\eqref{SYS1}. 
}
\end{remark}

\begin{figure}
\centering
\includegraphics[scale=0.45]{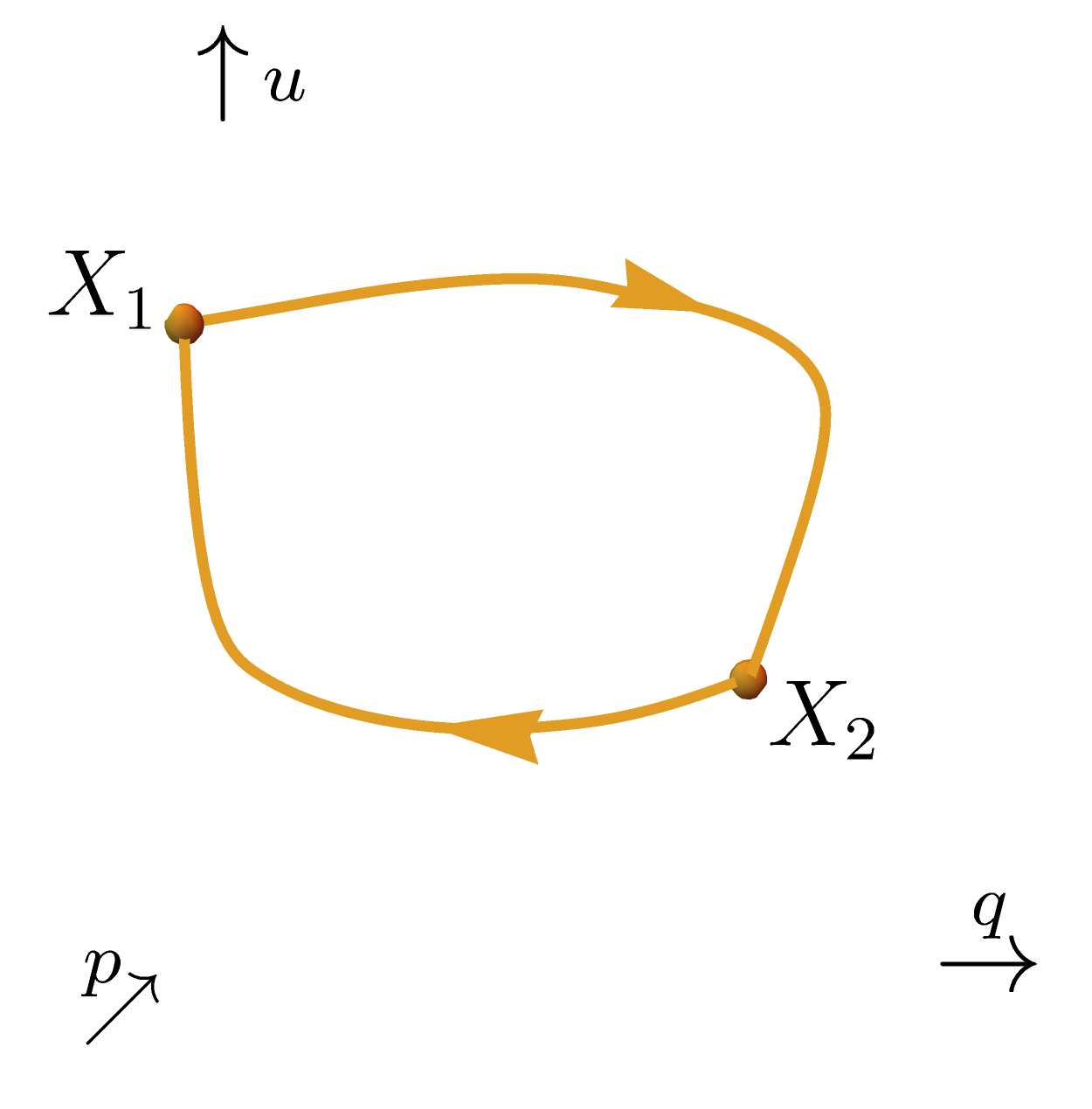}
\caption{\label{fig:2}
A heteroclinic loop or cycle connecting the equilibria $X_1$ and $X_2$ in the dynamical system~\eqref{SYS2}. The forward and backward connection between the equilibria $X_1$ and $X_2$ correspond to a simple front and back solution in the pipe flow model~\eqref{SYS1}, respectively, see Figure~\ref{fig:1}.}
\end{figure}

\subsection{Terminology}

First, we call a traveling wave~\eqref{TW} an \emph{impulse} of $X_1$ if
\begin{equation*}
\lim_{\xi \to \pm \infty} \left(q_*(\xi),q_*'(\xi),u_*(\xi)\right) = X_1,
\end{equation*}
which corresponds to an orbit homoclinic to the equilibrium $X_1$ in the dynamical system~\eqref{SYS2}. We will say that a traveling wave~\eqref{TW} has a \emph{pulse} or a \emph{puff} if there is a closed interval $\xi_0 \leq \xi \leq \xi_1$ such that, as $\xi$ increases, the associated solution $(q_*(\xi),q_*'(\xi),u_*(\xi))$ to~\eqref{SYS2} leaves a small neighborhood of $X_1$, enters a small neighborhood of $X_2$, and then goes back to the neighborhood of $X_1$ where it is found at $\xi = \xi_1$. That way, we can define \emph{$k$-fronts}, \emph{$k$-backs} and \emph{$k$-impulses} as fronts, backs and impulses with $k$ pulses, see Figure~\ref{fig:1}. A front (back) without a pulse is called a \emph{simple front (back)}, whereas an impulse with a single pulse is called a \emph{simple impulse}. For traveling fronts, backs or impulses the number of pulses, $k$, can be identified with the winding number of the corresponding heteroclinic or homoclinic orbits in~\eqref{SYS2} in a small tubular neighborhood of the heteroclinic loop, see Figure~\ref{fig:2}, with one full circumvolution corresponding to a single pulse. Thus, these orbits are called \emph{$k$-heteroclinic} or \emph{$k$-homoclinic} (and again simple for heteroclinics with $k=0$ and homoclinics with $k=1$).

\begin{figure}
\centering
\begin{subfigure}[b]{0.3\textwidth}
\centering
\includegraphics[scale=0.405]{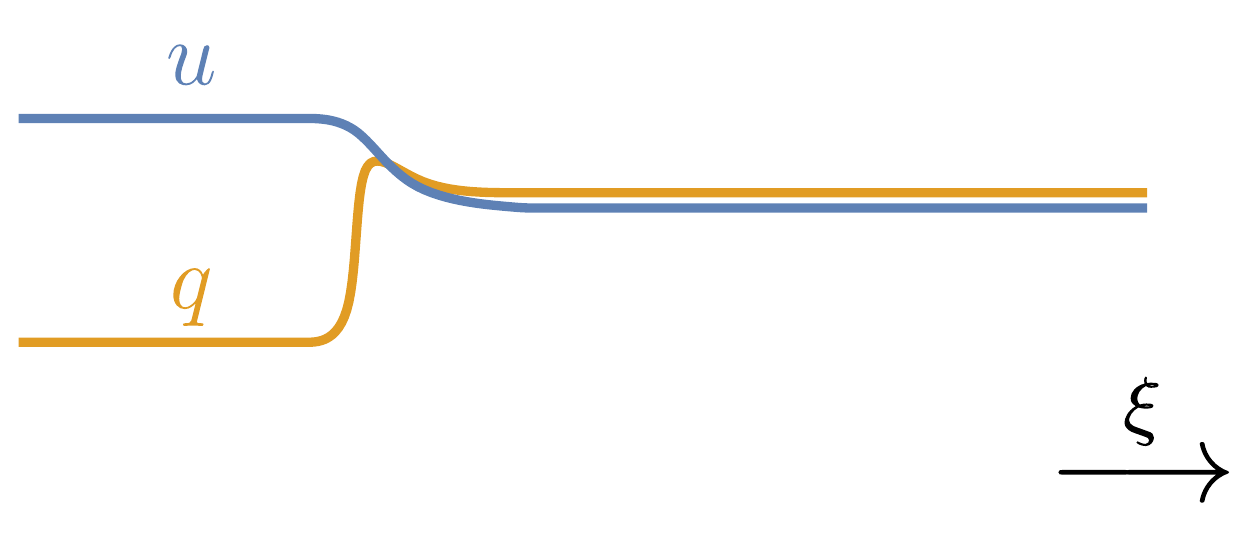}
\caption{Simple front}
\end{subfigure}
\hfill
\begin{subfigure}[b]{0.3\textwidth}
\centering
\includegraphics[scale=0.405]{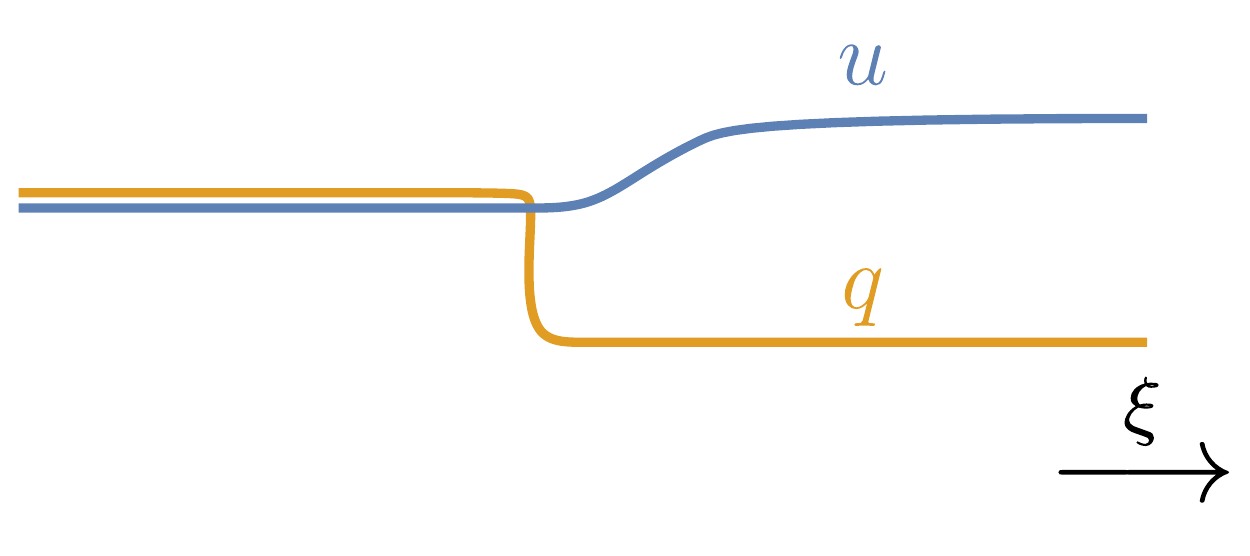}
\caption{Simple back}
\end{subfigure}
\hfill
\begin{subfigure}[b]{0.3\textwidth}
\centering
\includegraphics[scale=0.405]{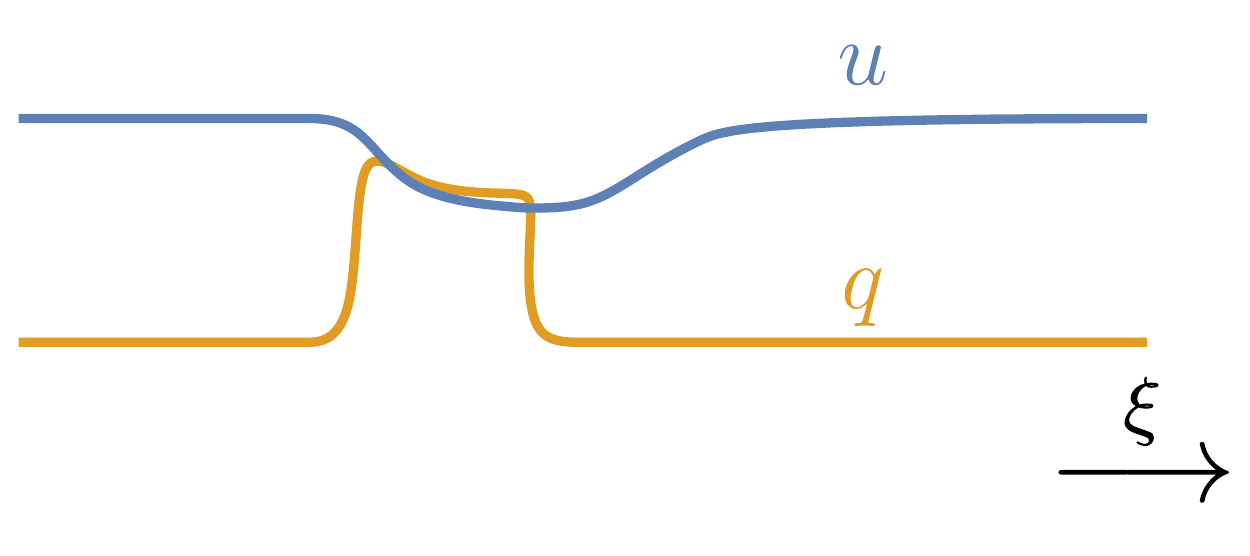}
\caption{Simple impulse}
\end{subfigure}\\
\begin{subfigure}[b]{0.3\textwidth}
\centering
\includegraphics[scale=0.405]{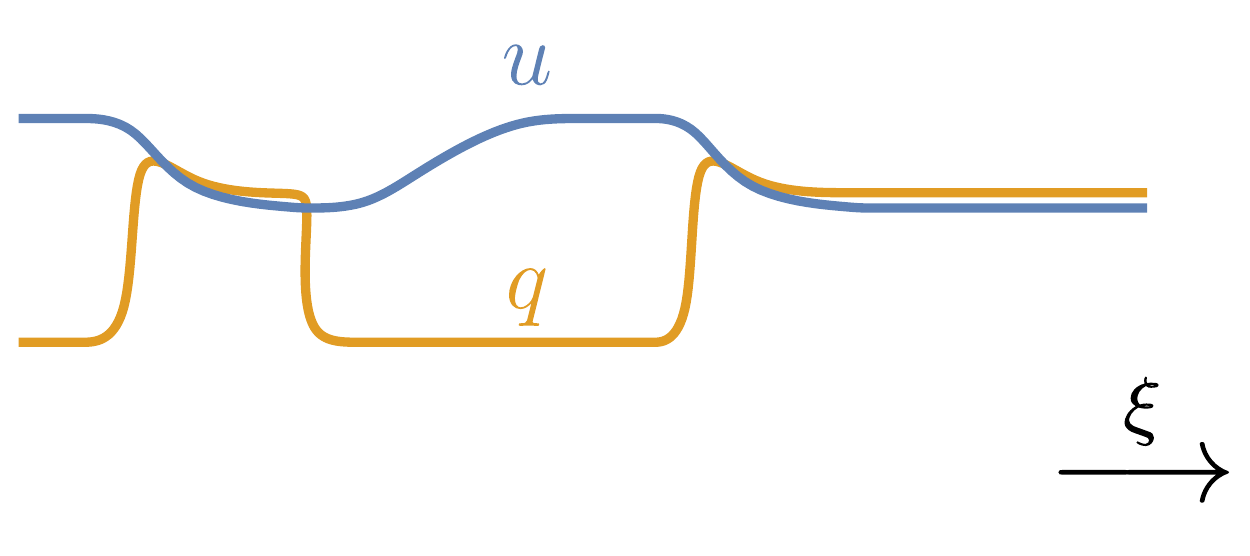}
\caption{$1$-front}
\end{subfigure}
\hfill
\begin{subfigure}[b]{0.3\textwidth}
\centering
\includegraphics[scale=0.405]{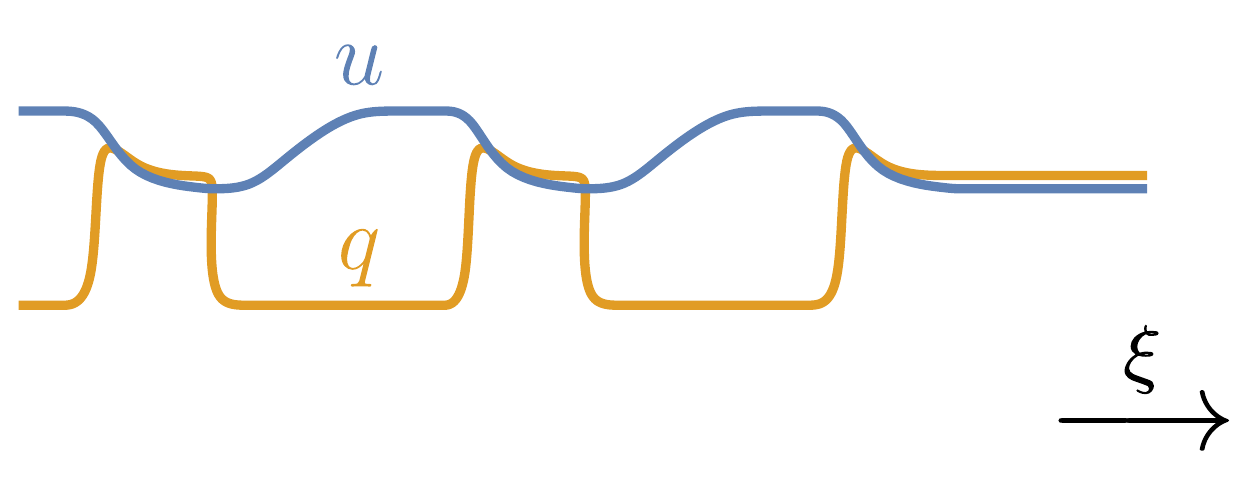}
\caption{$2$-front}
\end{subfigure}
\hfill
\begin{subfigure}[b]{0.3\textwidth}
\centering
\includegraphics[scale=0.405]{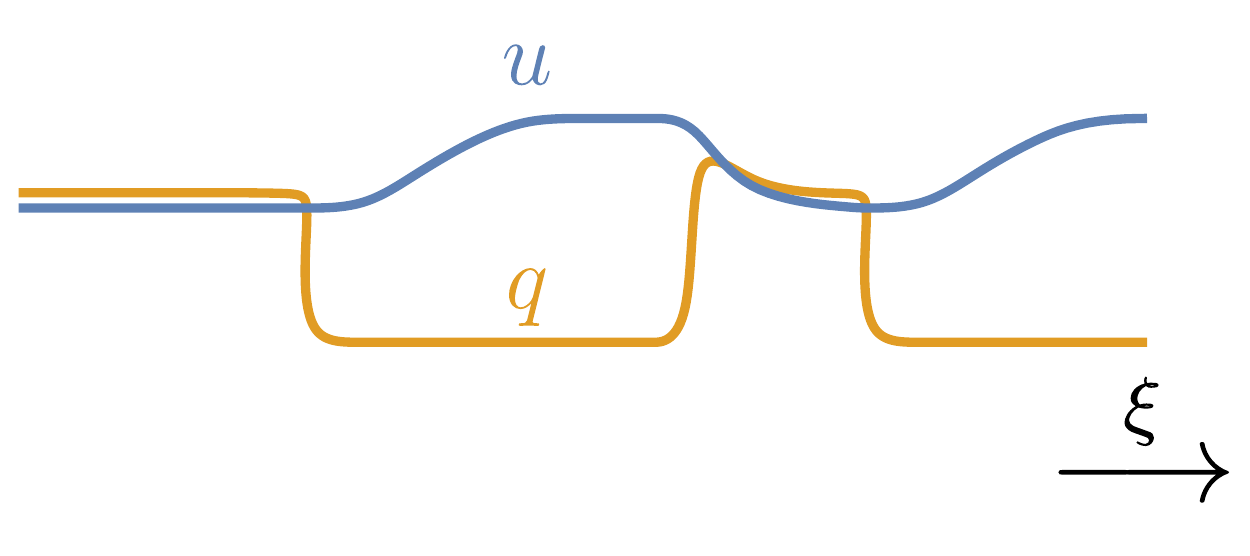}
\caption{$1$-back}
\end{subfigure}
\caption{\label{fig:1}
Schematic depiction of the profiles of various traveling waves exhibited by the pipe flow model~\eqref{SYS1}. The profiles exhibit sharp transitions in turbulence level $q$. Along such transitions the centerline velocity $u$ stays to leading order constant. In between the sharp transitions the turbulence level and centerline velocity evolve slowly. More precisely, in the presence of turbulence the centerline velocity slowly decreases down to the point where a new balance between turbulence and the centerline velocity is reached (corresponding to the equilibrium $X_2$ in~\eqref{SYS2}). On the other hand, in the absence of turbulence the centerline velocity slowly recovers up to the point where it reaches the laminar state (corresponding to the equilibrium $X_1$ in~\eqref{SYS2}). We note that the associated orbits in~\eqref{SYS2} lie in a tubular neighborhood of the heteroclinic loop depicted in Figure~\ref{fig:2}. Thus, the simple front and back depicted in the first two panels are the building blocks for the more complicated profiles depicted in the other panels.}
\end{figure}

\subsection{Existence of a heteroclinic loop}

We are now in the position to formulate our first result, which establishes a parameter regime in which the traveling-wave equation~\eqref{SYS2} admits a heteroclinic loop connecting the equilibria $X_1$ and $X_2$, see Figure~\ref{fig:2}.

We obtain such a heteroclinic loop by exploiting the fact that $\varepsilon$, which controls the time scale separation in~\eqref{SYS1} between fast excursions of $q$ relative to slow recovery of $u$ after relaminarization, is a small parameter. That is, by taking the limit $\varepsilon \downarrow 0$ in properly scaled versions of~\eqref{SYS2}, we arrive at the so-called fast and slow subsystems; we note that the fast subsystem is sometimes called layer equation and the slow subsystem is referred to as reduced system, which can be slightly confusing as both subsystems effectively ``reduce'' the dimensionality of the problem in the singular limit $\varepsilon \downarrow 0$ as we shall see below. Next, we look for parameter values for which a \emph{singular} heteroclinic loop exists, which is a concatenation of orbits in these slow and fast subsystems. Geometrically piecing these orbits together yields two algebraic matching conditions in the parameters $D,\mu$ and $r$ in~\eqref{SYS2}, which can then be explicitly solved for $D$ and $\mu$, leading to expressions $\mu_0(r)$ and $D_0(r)$, see Figure~\ref{fig:10}. An additional condition arises by requiring that the flow on the slow orbit segments is directed towards the equilibria $X_1$ and $X_2$. That is, fixing the free parameter $r > \frac{2}{3}$, we will observe that the wave speed $s$ has to satisfy $s < \min\{u_{\bb}(r),-\mu_0(r)\}$. Since the algebraic matching conditions yield $\mu_0(r) > -\frac{8}{5}$ and since we have $u_{\bb}(r) > \frac{6}{5}$, we will assume $\zeta > \frac{2}{5}$ in the following so that the variables $\mu$ and $s = -\zeta - \mu$ are interchangeable. We stress that $\zeta > \frac{2}{5}$ captures the physically relevant regime. Indeed, in~\cite{Barkleyetal}, one finds $\zeta = 0.79$ for pipe flow and $\zeta = 0.56$ for duct flow to be in good accordance with experimental data.

Having established a singular heteroclinic loop, we can then employ Melnikov's method and geometric singular perturbation theory, cf.~\cite{KuehnBook} and references therein, to prove that, for each $\varepsilon > 0$ sufficiently small, an actual heteroclinic loop exists in the vicinity of the singular one for parameter values $(\mu,D)$ close to $(\mu_0(r),D_0(r))$. 

The Melnikov analysis provides the technical nondegeneracy assumption $\smash{\widehat{M}}(r) \neq 0$ in terms of the function $\smash{\widehat{M}} \colon (\frac{2}{3},\infty) \to \R$, defined in~\eqref{defhatM}. We emphasize that $\smash{\widehat{M}}(r)$ consists of Melnikov integrals, which are all fully explicit in terms of $r$.\footnote{This can be readily seen by using identities~\eqref{qbbdef},~\eqref{defqf},~\eqref{defqb-},~\eqref{defphi},~\eqref{defD0mu0},~\eqref{defhatMf},~\eqref{defMfr},~\eqref{defhatMb} and~\eqref{defMbr} and the fact that $u_{\bb}(r)$ is the smallest root of the cubic~\eqref{defcubic}, see Lemma~\ref{lem:equilibria}.} Thus, the nondegeneracy assumption $\smash{\widehat{M}}(r)\neq0$ could theoretically be verified. However, we refrain from doing so as the resulting expressions are highly involved. A numerical computation, see also the plot in Figure~\ref{fig:122}, suggests that the assumption is in fact satisfied for all $r > \frac{2}{3}$. On the other hand, we will theoretically establish $\lim_{r \to \infty} \smash{\widehat{M}}(r) \neq 0$, so that the assumption is rigorously satisfied for all $r > \frac{2}{3}$ sufficient large.

All in all, we establish the following result, which is valid in the regime of intermediate and large Reynolds number.

\begin{theorem} \label{heteroclinicloop}
There are smooth functions $\mu_0 \colon \left(\frac{2}{3},\infty\right) \to \left(-\frac{8}{5}, \smash{\frac{1}{66} \left(3 \sqrt{115}-65\right)}\right)$ and $D_0 \colon \left(\frac{2}{3},\infty\right)  \to \left(0,\infty\right)$ satisfying
\begin{align}\lim_{r \downarrow \frac{2}{3}} \mu_0(r) = \frac{1}{66} \left(3 \sqrt{115}-65\right), \qquad \lim_{r \to \infty} \mu_0(r) = -\frac{8}{5}, \label{limitsspeed}\end{align}
and
\begin{align}\lim_{r \downarrow \frac{2}{3}} D_0(r) = \frac{10}{363} \left(34 + 3 \sqrt{115}\right), \qquad \lim_{r \to \infty} D_0(r) = 0,\label{limitsspeed2}\end{align}
such that for each fixed Reynolds parameter $r > \frac{2}{3}$ satisfying $\smash{\widehat{M}}(r) \neq 0$, cf.~\eqref{defhatM}, there exists $\varepsilon_0(r) > 0$ such that the following holds:
for each $\varepsilon \in (0,\varepsilon_0(r))$ there exist a diffusion rate $D = \smash{\widehat{D}}(\varepsilon,r)$ and a velocity $\mu = \smash{\widehat{\mu}}(\varepsilon,r)$ such that system~\eqref{SYS2} admits a heteroclinic loop, which consists of a simple heteroclinic front and a simple heteroclinic back connecting the equilibria $X_1$ and $X_2$. The functions $\smash{\widehat{D}}(\varepsilon,r)$ and $\smash{\widehat{\mu}}(\varepsilon,r)$ are smooth in their variables and it holds
\begin{align} \lim_{\varepsilon \downarrow 0} \smash{\widehat{D}}(\varepsilon,r) = D_0(r), \qquad \lim_{\varepsilon \downarrow 0} \smash{\widehat{\mu}}(\varepsilon,r) = \mu_0(r). \label{limits}\end{align}
\end{theorem}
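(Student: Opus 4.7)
The plan is to follow the geometric singular perturbation recipe suggested between \eqref{SYS2} and the theorem statement. In the singular limit $\varepsilon = 0$, system \eqref{SYS2} decomposes into a planar layer system in $(q,p)$ parametrised by $u$, and a slow reduced flow on the critical manifold $\mathcal{S}_0 = \{(q,p,u)\ :\ p = 0,\ f(q,u;r) = 0\}$. The cubic structure of $f$ in $q$ yields three sheets: the laminar sheet $\mathcal{M}_0 = \{q = 0\}$ and the two turbulent sheets $\mathcal{M}_\pm = \{q = 1 \pm \sqrt{(r+u-2)/(r+0.1)}\}$, well-defined for $u \geq 2 - r$. Both equilibria live on $\mathcal{S}_0$: $X_1 \in \mathcal{M}_0$ at $u=2$ and $X_2 \in \mathcal{M}_+$ at $u = u_{\bb}(r)$. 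A singular heteroclinic loop is then built as the concatenation of (a) a fast heteroclinic in the layer system at the frozen value $u = 2$ connecting $(0,0) \in \mathcal{M}_0$ to the upper root $(q_{\f,+},0) \in \mathcal{M}_+$; (b) a slow segment along $\mathcal{M}_+$ driven by the reduced equation $u_\tau = g(q_+(u,r),u)/(u-s)$ from $u = 2$ down to $u = u_{\bb}(r)$, landing at $X_2$; (c) a fast heteroclinic in the layer system at the frozen value $u = u_{\bb}(r)$ from $(q_{\bb,+}(r),0)$ back to $(0,0) \in \mathcal{M}_0$; and (d) a slow segment along $\mathcal{M}_0$ from $u = u_{\bb}(r)$ back up to $u = 2$, landing at $X_1$.

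Each of the two fast pieces lives in a planar bistable reaction-advection ODE, $D q'' - (u+\mu) q' + f(q,u;r) = 0$, whose heteroclinic connection between the two attracting roots of $f(\cdot,u;r)$ exists only along a one-parameter curve in $(D,\mu)$ determined by the classical Melnikov identity $\int f(q,u;r)\,\de q = (u+\mu)\int p^2\,\de\xi / $ (with the integral taken along the candidate kink). Writing this condition at the two frozen levels $u = 2$ and $u = u_{\bb}(r)$ yields two algebraic matching equations, affine in the combinations $\mu/\sqrt{D}$ and $1/\sqrt{D}$ because $f$ is cubic and the integrals can be computed in closed form; solving them simultaneously produces the stated smooth functions $\mu_0(r)$ and $D_0(r)$. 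The prescribed limits \eqref{limitsspeed} and \eqref{limitsspeed2} then follow by inspection of these closed-form expressions at the saddle-node $r \downarrow \tfrac{2}{3}$ (where $\mathcal{M}_\pm$ collide and $q_{\bb,\pm}(r)$ merge) and at $r \to \infty$. Admissibility of the slow segments (flow on $\mathcal{M}_0$ and $\mathcal{M}_+$ directed toward $X_1$ and $X_2$, respectively) demands $s < \min\{u_{\bb}(r), -\mu_0(r)\}$, which together with $\mu_0(r) > -\tfrac{8}{5}$ and $u_{\bb}(r) > \tfrac{6}{5}$ is precisely encoded by the standing hypothesis $\zeta > \tfrac{2}{5}$ and the change of variables $s = -\zeta - \mu$.

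The persistence step is classical GSPT plus Melnikov. On compact neighbourhoods of the relevant slow segments of $\mathcal{M}_0$ and $\mathcal{M}_+$, both sheets are normally hyperbolic of saddle type with respect to the layer dynamics, so for every sufficiently small $\varepsilon > 0$ Fenichel theory supplies locally invariant manifolds $\mathcal{M}_0^\varepsilon$, $\mathcal{M}_+^\varepsilon$ and $C^1$ stable/unstable foliations that are $O(\varepsilon)$-close to their singular counterparts. A heteroclinic front in the full system \eqref{SYS2} corresponds to the transverse intersection $W^u(\mathcal{M}_0^\varepsilon) \pitchfork W^s(\mathcal{M}_+^\varepsilon)$ along the persisted fast fibre near the singular front, and similarly for the back. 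The signed separations of these manifolds at first order in $\varepsilon$ are two Melnikov integrals $M_\f(D,\mu,r)$, $M_\bb(D,\mu,r)$ computed along the singular fast heteroclinics, both of which vanish at $(D_0(r), \mu_0(r))$ by construction. Thus the heteroclinic loop persists for small $\varepsilon > 0$ iff the system $M_\f = 0 = M_\bb$ can be solved for $(D,\mu)$ as smooth functions of $\varepsilon$ near $(D_0(r),\mu_0(r))$; by the implicit function theorem this reduces to the invertibility of the Jacobian $\partial(M_\f,M_\bb)/\partial(D,\mu)$ at the singular loop, and $\widehat{M}(r)$ defined in \eqref{defhatM} is precisely (a nonzero explicit multiple of) that Jacobian determinant. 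Under $\widehat{M}(r) \neq 0$ we obtain the desired $\widehat{D}(\varepsilon, r)$ and $\widehat{\mu}(\varepsilon, r)$ with the asymptotics \eqref{limits}.

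The main obstacle I anticipate is the Melnikov bookkeeping. The fast integrals are computable in closed form thanks to the cubic structure of $f$, but getting the Jacobian $\partial(M_\f, M_\bb)/\partial(D,\mu)$ into a form one can analyse globally in $r$, and in particular verifying $\lim_{r \to \infty} \widehat{M}(r) \neq 0$ to make the theorem rigorous in the large Reynolds regime, requires carefully tracking how the front and back kinks (and their sensitivities in $D$ and $\mu$) behave as $u_{\bb}(r)$ and $q_{\bb,+}(r)$ approach their limits; the closed-form expressions for $\mu_0(r)$ and $D_0(r)$ are the key technical ingredient that keeps these computations controllable.
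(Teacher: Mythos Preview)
Your proposal is correct and follows essentially the same route as the paper: build a singular heteroclinic loop by concatenating two fast Nagumo-type heteroclinics (at $u=2$ and $u=u_{\bb}(r)$) with two slow segments on the critical manifold, solve the resulting pair of matching conditions for $(\mu_0(r),D_0(r))$, and then persist via Fenichel theory plus a two-equation Melnikov/implicit-function argument whose Jacobian is encoded by $\widehat{M}(r)$. The only cosmetic differences are notational (the paper writes the critical manifold as $M_1\cup M_2$ and the separation functions as $Q_{\f},Q_{\bb}$, and it obtains the front/back wave-speed conditions by an explicit rescaling to the Nagumo traveling-wave ODE rather than via the integral identity you sketch), and the paper is slightly more explicit about why the stable manifold of the slow manifold coincides with $W^{\s}(X_i)$ for $\varepsilon>0$ (namely because $X_i$ is a sink for the reduced flow on $K_{i,0}$).
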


\begin{figure}
\centering
\includegraphics[scale=0.5]{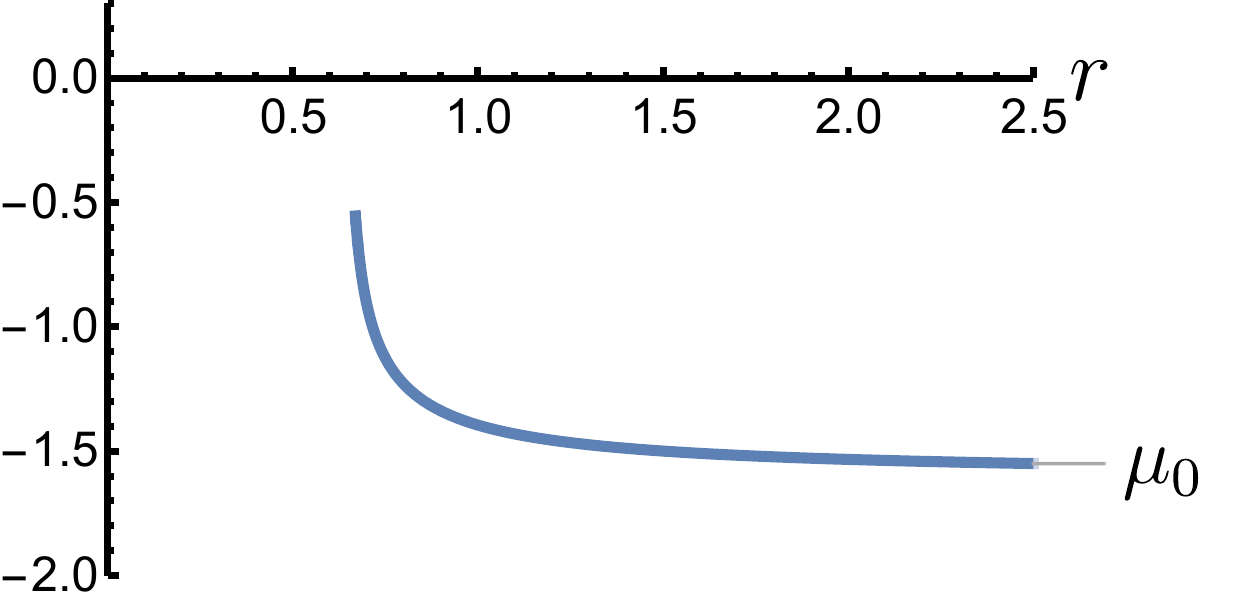} \hspace{1cm} \includegraphics[scale=0.5]{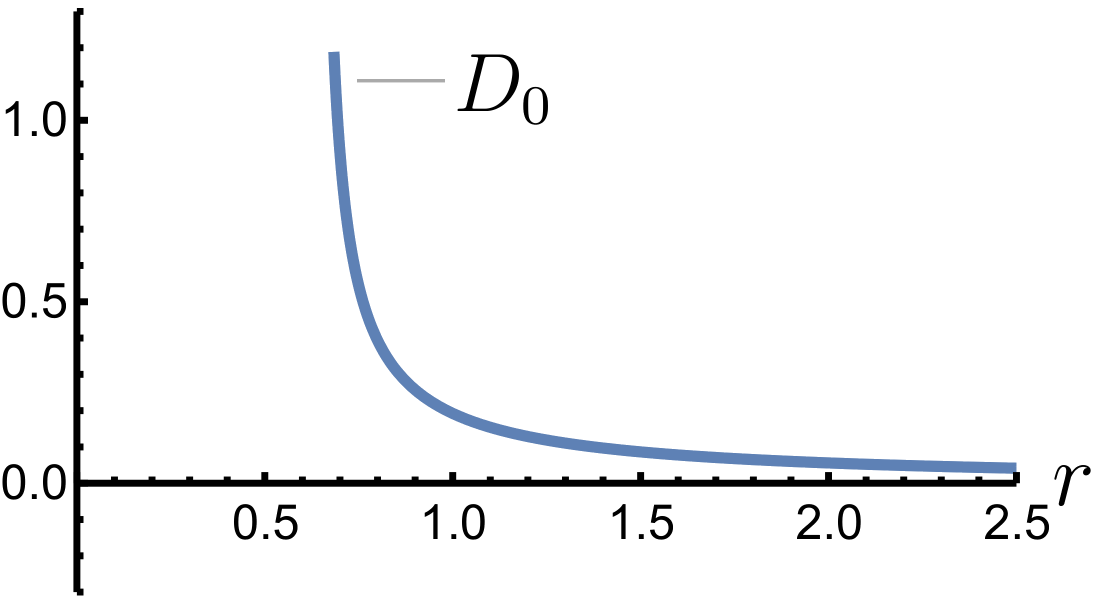}
\caption{Plots of the functions $D_0(r)$ and $\mu_0(r)$ established in Theorem~\ref{heteroclinicloop}.}
\label{fig:10}
\end{figure}

We emphasize that our analysis provides explicit expressions for the functions $\mu_0(r)$ and $D_0(r)$ in Theorem~\ref{heteroclinicloop}, see~\eqref{defD0mu0}. Moreover, the wave speed of the orbits in the heteroclinic loop is selected through $\smash{\widehat{s}}(\varepsilon,r,\zeta) = -\zeta-\smash{\widehat{\mu}}(\varepsilon,r)$. Noting that
\begin{align}s_0(r,\zeta) := \lim_{\varepsilon \downarrow 0} \smash{\widehat{s}}(\varepsilon,r,\zeta) = - \zeta - \mu_0(r), \label{defs0}\end{align}
we observe that the wave speed $\smash{\widehat{s}}(\varepsilon,r,\zeta)$ can be positive or negative for all $\varepsilon \in (0,\varepsilon_0(r))$, depending on the values of $r$ and $\zeta$.  That is, the heteroclinic loop can consist of a simple front recovering the laminar state and a simple back invading the laminar state, or a simple front invading the laminar state and a simple back recovering the laminar state, depending on the values of $r$ and $\zeta$. For instance, taking $\zeta = 0.79$ as in~\cite{Barkleyetal}, we find by~\eqref{limitsspeed} that the wave speed $\smash{\widehat{s}}(\varepsilon,r,\zeta)$ is negative for intermediate $r$-value close to $\frac{2}{3}$ and positive for large $r$-values. For all further details and the proof of Theorem~\ref{heteroclinicloop} we refer to~\S\ref{sec:slowfast}.

\begin{remark} \label{rem:expand}
{\upshape
Also in~\cite{Barkleyetal} simple fronts and backs are constructed for the pipe flow model~\eqref{SYS1} in the bistable regime $r > \frac{2}{3}$ (albeit formally). One finds the same matching conditions for the existence of the associated singular forward and backward heteroclinic connecting $X_1$ with $X_2$ in~\eqref{SYS2}, see~\eqref{cond:front} and~\eqref{cond:back}. However, instead of solving this as a \emph{system} of two algebraic equations in $D,\mu,r$ with respect to $D$ and $\mu$ in order to find a singular heteroclinic loop, the equations are solved separately with respect to $\mu$ leading to two solutions $\mu_1(r,D)$ and $\mu_2(r,D)$. That is, fixing free parameters $D,r > 0$, one finds a simple front and a simple back in~\eqref{SYS1} propagating with \emph{different} speeds. These fronts and backs are then formally pieced together in~\cite{Barkleyetal} yielding an expanding plateau state in~\eqref{SYS1}. We stress that such expanding states do no satisfy the definition of a traveling wave (as their profiles are not fixed), and are therefore necessarily different from the traveling-wave solutions constructed in this paper.
}\end{remark}

\begin{remark}
{\upshape
We note that given free parameters $r > \frac{2}{3}$, $\zeta > \frac{2}{5}$ and $0 < \varepsilon < \varepsilon_0(r)$, Theorem~\ref{heteroclinicloop} selects a diffusion coefficient $D = \smash{\widehat{D}}(\varepsilon,r)$ and a speed $s = \smash{\widehat{s}}(\varepsilon,r,\zeta) = -\zeta-\smash{\widehat{\mu}}(\varepsilon,r)$ for which a heteroclinic loop exists. Although the selection of a wave speed by the model parameters is natural, see also~\cite{Barkleyetal}, the selection of a diffusion coefficient indicates that the existence of a heteroclinic loop is a codimension one-phenomenon. In contrast, the traveling waves, whose profiles lie in the vicinity of the heteroclinic loop and which are constructed in upcoming Theorems~\ref{thm:single_twist} and~\ref{thm:double_twist}, exist in \emph{open} regions of the $(r,\zeta,D,\varepsilon)$-parameter space with only the wave speed being selected. More precisely, for any combination of model parameters $r > \frac{2}{3}$, $\zeta > \frac{2}{5}$, $\varepsilon \in (0,\varepsilon_0(r))$ and $D \in [\smash{\widehat{D}}(\varepsilon,r)-\delta(\varepsilon,r),\smash{\widehat{D}}(\varepsilon,r) + \delta(\varepsilon,r)]$, a wave speed $s$ is selected by the bifurcation curves in Figure~\ref{fig:twist}. 

We note that for the same parameter values there is no co-existence of the heteroclinic loop, which is established in Theorem~\ref{heteroclinicloop} and corresponds to the intersection point of the bifurcation curves in Figure~\ref{fig:twist}, and the traveling waves, which are established in Theorems~\ref{thm:single_twist} and~\ref{thm:double_twist} and correspond to all points on the bifurcation curves away from the intersection point in Figure~\ref{fig:twist}.}
\end{remark}

\subsection{Large Reynolds number regime}

We state our first bifurcation result about the heteroclinic loop, for the case of large Reynolds number.

\begin{theorem} \label{thm:single_twist}
Consider system~\eqref{SYS1} and let $\smash{\widehat{D}}(\varepsilon,r)$ and $\smash{\widehat{s}}(\varepsilon,r,\zeta) = -\zeta-\smash{\widehat{\mu}}(\varepsilon,r)$ be as in Theorem~\ref{heteroclinicloop}. For any $\zeta > \frac{2}{5}$ and any sufficiently large $r > \frac{2}{3}$ there exists $\varepsilon_0(r) > 0$ such that for any $\varepsilon \in (0,\varepsilon_0(r))$ there exists $\delta(\varepsilon,r) > 0$, which depends smoothly on $\varepsilon$ and $r$, such that the following holds:
\begin{enumerate}
\item \textbf{Simple fronts and back}: In the $(D,s)$-parameter plane there are the smooth curves $s= s_{i,0}(D)$, $i=\f,\bb$, defined on the interval $[\smash{\widehat{D}}(\varepsilon,r) - \delta(\varepsilon,r), \smash{\widehat{D}}(\varepsilon,r) + \delta(\varepsilon,r)]$, corresponding to simple fronts of speed $s_{\f,0}(D)$ and simple backs of speed $s_{\bb,0}(D)$, respectively, which intersect transversely at $(\smash{\widehat{D}}(\varepsilon,r),\smash{\widehat{s}}(\varepsilon,r,\zeta))$.
\item \textbf{Bifurcation of $1$-backs, and generation of impulses}: There are smooth curves $s_{\bb,1}(D)$ and $s_{\bb,\infty}(D)$, defined on $[\smash{\widehat{D}}(\varepsilon,r)-\delta(\varepsilon,r), \smash{\widehat{D}}(\varepsilon,r)]$, corresponding to $1$-backs of wave speed $s_{\bb,1}(D)$ and simple impulses of $X_2$ of wave speed $s_{\bb,\infty}(D)$, respectively. Furthermore, there is a smooth curve $s_{\f,\infty}(D)$, defined on $[\smash{\widehat{D}}(\varepsilon,r), \smash{\widehat{D}}(\varepsilon,r)+\delta(\varepsilon,r)]$, of waves speeds associated with simple impulses of $X_1$. The curves $s_{\f,\infty}(D), s_{\bb,\infty}(D)$ and $s_{\bb,1}(D)$ intersect at $(\smash{\widehat{D}}(\varepsilon,r),\smash{\widehat{s}}(\varepsilon,r,\zeta))$.
\end{enumerate}
\end{theorem}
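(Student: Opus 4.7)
The plan is to treat Theorem~\ref{thm:single_twist} as a bifurcation analysis carried out in a small neighbourhood of the heteroclinic loop furnished by Theorem~\ref{heteroclinicloop}. Throughout I fix $\zeta > \frac{2}{5}$, $r > \frac{2}{3}$ large, and $\varepsilon \in (0,\varepsilon_0(r))$, and work in the spatial dynamics formulation~\eqref{SYS2} with $\mu = -\zeta - s$, so that the $(D,s)$ and $(D,\mu)$ coordinates are interchangeable.

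For part (i) the strategy is a two-parameter Melnikov argument in $(D,\mu)$. The existence of a simple front is equivalent to a transverse intersection of the one-dimensional unstable manifold of $X_1$ with the two-dimensional stable manifold of $X_2$ in~\eqref{SYS2}, and analogously for a simple back with the roles of $X_1$ and $X_2$ interchanged. I would measure the signed distance between these manifolds by Melnikov functions $M_{\f}(D,\mu,\varepsilon)$ and $M_{\bb}(D,\mu,\varepsilon)$ built, as in the proof of Theorem~\ref{heteroclinicloop}, from the explicit singular front and back. At $(D,\mu)=(\smash{\widehat{D}}(\varepsilon,r),\smash{\widehat{\mu}}(\varepsilon,r))$ both functions vanish, and the nondegeneracy condition $\smash{\widehat{M}}(r)\neq 0$ from~\eqref{defhatM}, which holds for large $r$ by the $r\to\infty$ limit argument alluded to before the statement of Theorem~\ref{heteroclinicloop}, guarantees that each Melnikov function has nonvanishing derivative in $\mu$. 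The implicit function theorem then produces smooth curves $s=s_{\f,0}(D)$ and $s=s_{\bb,0}(D)$ on $[\smash{\widehat{D}}(\varepsilon,r)-\delta(\varepsilon,r),\smash{\widehat{D}}(\varepsilon,r)+\delta(\varepsilon,r)]$. Their transverse intersection at $(\smash{\widehat{D}}(\varepsilon,r),\smash{\widehat{s}}(\varepsilon,r,\zeta))$ reduces to invertibility of the full Jacobian of $(M_{\f},M_{\bb})$ with respect to $(D,\mu)$, which I would extract as the $\varepsilon\downarrow 0$ leading order from the nondegenerate solvability of the singular matching conditions used to determine $D_0(r)$ and $\mu_0(r)$.

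For part (ii) the plan is to invoke Deng's bifurcation theorem for twisted heteroclinic loops~\cite{Deng91a}, in the formulation of~\cite{HomburgSandstede}. The hypotheses to check are: (a) both $X_1$ and $X_2$ are hyperbolic saddles of~\eqref{SYS2} with a two-dimensional stable and a one-dimensional unstable manifold, which follows from the linearization at the equilibria; (b) a leading-eigenvalue nonresonance condition, again a direct spectral computation at $X_i$; (c) generic unfolding by the parameter $(D,\mu)$, which is precisely the transverse intersection established in part (i); and (d) the single-twist condition on the heteroclinic loop. Once (a)--(d) are verified, Deng's theorem produces smooth curves $s_{\bb,1}(D)$ of $1$-backs and $s_{\bb,\infty}(D)$ of simple impulses of $X_2$ on one side of $\smash{\widehat{D}}(\varepsilon,r)$, together with a smooth curve $s_{\f,\infty}(D)$ of simple impulses of $X_1$ on the opposite side, all three accumulating at the loop point, which is the statement of part (ii).

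The main obstacle is the verification of the single-twist condition (d) for large $r$. Concretely, one must transport a normal vector along the front inside the two-dimensional stable manifold of $X_2$, then along the back inside the two-dimensional stable manifold of $X_1$, and compare the result with the strong stable direction at $X_1$; single twist means exactly one sign reversal around the loop. In the singular limit $\varepsilon\downarrow 0$ this transport splits naturally: along each fast jump it reduces to a scalar variational equation for the linearization of the fast $q$-component of~\eqref{SYS2}, while along the slow segments on the Fenichel manifolds it degenerates to transport inside an essentially one-dimensional slow bundle. Using the asymptotics $\mu_0(r)\to-\frac{8}{5}$ and $D_0(r)\to 0$ from~\eqref{limitsspeed}--\eqref{limitsspeed2}, the singular twist index should be computable in closed form for $r$ large, with a definite sign corresponding to a single twist; persistence for $\varepsilon>0$ small then follows from smooth dependence of the relevant invariant manifolds on $\varepsilon$ via Fenichel theory.
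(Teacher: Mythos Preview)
Your overall architecture is the same as the paper's: Melnikov analysis in $(D,\mu)$ for part~(i), then Deng's twisted-loop bifurcation theorem~\cite{Deng91a,HomburgSandstede} for part~(ii). Part~(i) is fine and matches Lemmas~\ref{lem:nondegen_heteroclinic} and~\ref{lem:transverse} almost verbatim. However, your list (a)--(d) of hypotheses for Deng's theorem has two genuine omissions, and your plan for~(d) misses the concrete computational device the paper uses.

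First, you omit the nondegeneracy conditions of Lemma~\ref{lem:nondegen}: (i) each heteroclinic must approach the endpoints along the \emph{principal} (weak) stable and unstable eigendirections, not the strong stable one; and (ii) the \emph{strong inclination property} must hold, i.e.\ $\lim_{\xi\to-\infty} T_{X_j^\varepsilon(\xi)}W^{\s}(X_k) = T_{X_i}W^{\uu}(X_i)\oplus T_{X_i}W^{\s\s}(X_i)$. Neither is automatic from (a)--(c). The paper handles (i) by observing that $W^{\s\s}(X_2)$ lies in the layer $u=u_{\bb}(r)$ while the front approaches along the slow manifold, and (ii) by showing $\partial_u\mathcal Q_j\neq 0$ and invoking the strong $\lambda$-lemma. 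Your item~(b), ``leading-eigenvalue nonresonance'', is also weaker than what is actually needed: Deng requires the \emph{relative expansion} condition $\lambda_2(X_i)+\lambda_3(X_i)>0$ of Lemma~\ref{lem:relative_expansion}, which in this fast--slow setting follows because $\lambda_2(X_i)=\mathcal O(\varepsilon)$ while $\lambda_3(X_i)$ is bounded away from zero.

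Second, for the twist condition~(d) the paper does not transport a normal bundle around the loop. Instead it treats $u$ as a parameter in the two-dimensional fast layer system~\eqref{FASTR} and computes the Melnikov derivatives $\partial_u\mathcal Q_{\f}$ and $\partial_u\mathcal Q_{\bb}$ at the singular front and back. The sign of $\partial_u\mathcal Q_j$ determines on which side of $W^{\s}_{\pmb\alpha_0}(K_{i,0})$ the manifold $W^{\uu}_{\pmb\alpha_0}(K_{j,0})$ lies, and hence whether the relevant principal eigenvectors point to the same or opposite sides of the tangent space (Figure~\ref{fig:manifolds}). One finds $\partial_u\mathcal Q_{\bb}<0$ for all $r>\tfrac{2}{3}$ directly from the monotonicity of the back profile, while $\partial_u\mathcal Q_{\f}<0$ for large $r$ follows from $\smash{\widetilde M}_{\f}(r)\to-\infty$; this gives twisted back and non-twisted front, i.e.\ single twist. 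Your holonomy description could in principle be made to work, but the $u$-Melnikov computation is the missing concrete mechanism that makes the large-$r$ sign determination tractable.
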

The statements are illustrated in Figure~\ref{fig:twist}. Note that, for an open region in the $(r,\zeta,D,\varepsilon)$-parameter space corresponding to large Reynolds numbers, we have established the existence of various heteroclinic and homoclinic structures associated with different patterns of turbulent puffs. Whereas the homoclinics of $X_1$ and $X_2$ correspond to excursions from the laminar and turbulent states, respectively, the heteroclinics describe changes between the two regimes.

In the large Reynolds number regime it follows by~\eqref{limitsspeed} and~\eqref{defs0} that the selected wave speed $\smash{\widehat{s}}(\varepsilon,r,\zeta)$ can be positive or negative for all $\varepsilon \in (0,\varepsilon_0(r))$, depending on the value of $\zeta$. Indeed, taking $\zeta = 0.79$ as in~\cite{Barkleyetal} one finds a positive wave speed, whereas taking $\zeta > \frac{8}{5}$ leads to a negative wave speed for sufficiently large $r > \frac{2}{3}$ and sufficiently small $\varepsilon > 0$. Consequently, the fronts, backs and impulses established in Theorem~\ref{thm:single_twist} can be propagating upstream or downstream.

For the proof of Theorem~\ref{thm:single_twist}, as well as the upcoming Theorem~\ref{thm:double_twist}, we will follow the arguments by Bo Deng~\cite{Deng91b} and Homburg and Sandstede~\cite{HomburgSandstede}. Note that the statements in Theorems~\ref{thm:single_twist} and~\ref{thm:double_twist} correspond directly to statements on heteroclinic and homoclinic orbits in terms of the ODE~\eqref{SYS2}. The existence of such heteroclinic and homoclinic orbits follow from an application of Deng's general results on the bifurcations of a single twisted (Theorem~\ref{thm:single_twist}) and double twisted (Theorem~\ref{thm:double_twist}) heteroclinic loop~\cite{Deng91a}. To apply Deng's result we verify five conditions on the vector field of the ODE~\eqref{SYS2}. Here, we use a small variation of~\cite[Theorem 2.1]{Deng91b}, in combination with~\cite[Hypothesis 5.16 (ii)]{HomburgSandstede}. For all further details and the proof of Theorems~\ref{thm:single_twist} and~\ref{thm:double_twist} we refer to~\S\ref{sec:proof_twists}.

\begin{remark}
\label{rem:largeRE}
{\upshape
We note that, in the case of positive wave speed, the upstream $1$-backs are the most intricate solutions established in Theorem~\ref{thm:single_twist}, which correspond to turbulent flows that almost fully relaminarize before going back again to the vicinity of the turbulent steady state $X_2$, just to finally leave this regime and asymptotically relaminarize again, see Figure~\ref{fig:1}(f). This may seem surprising for large $r$, as one would rather expect the stronger concentration at the turbulent part. However, note that such an upstream $1$-back does not have to be a stable object.\footnote{For the general question of stability of the established traveling waves we refer to the discussion in~\S\ref{sec:outlook}.} Moreover, Theorem~\ref{thm:single_twist} does not imply that no other (bifurcating) traveling waves exist in~\eqref{SYS1}, which might represent invasion of turbulence, see also Remark~\ref{rem:compare}.
}\end{remark}

\subsection{Intermediate Reynolds number regime}

Our bifurcation result about the heteroclinic loop for the intermediate Reynolds number regime requires the following technical assumption:
\begin{hypothesis} \label{hyp}
There exists $\gamma > 0$ such that the function $\smash{\widetilde{M}}_\f \colon (\frac{2}{3},\infty) \to \R$ defined by~\eqref{defMtf} satisfies $\smash{\widetilde{M}}_\f(r) > 0$ for all $r \in (\frac{2}{3},\frac{2}{3} + \gamma)$.
\end{hypothesis}

We emphasize that $\smash{\widetilde{M}}_\f(r)$ is fully explicit in terms of $r$. Thus, Hypothesis~\ref{hyp} could theoretically be verified. However, we refrain from doing so as the resulting expressions are highly involved. A numerical computation, see also the plot in Figure~\ref{fig:12}, suggest that one can take $\gamma = 0.0627$.

We are now in the position to formulate our bifurcation result about the heteroclinic loop for the intermediate Reynolds number regime.

\begin{theorem} \label{thm:double_twist}
Assume Hypothesis~\ref{hyp} is satisfied and let $\smash{\widehat{D}}(\varepsilon,r)$ and $\smash{\widehat{s}}(\varepsilon,r,\zeta) = -\zeta-\smash{\widehat{\mu}}(\varepsilon,r)$ be as in Theorem~\ref{heteroclinicloop}. For system~\eqref{SYS1}, taking $\zeta > \frac{2}{5}$ and $r  \in (\frac{2}{3}, \frac{2}{3} + \gamma)$, satisfying $\smash{\widehat{M}}(r) \neq 0$, cf.~\eqref{defhatM}, there exists $\varepsilon_0(r) > 0$ such that for any $\varepsilon \in (0,\varepsilon_0(r))$ there exists $\delta(\varepsilon,r) > 0$, which depends smoothly on $\varepsilon$ and $r$, such that the following holds:
\begin{enumerate}
\item \textbf{Simple fronts and back}: In the $(D,s)$-parameter plane there are the smooth curves $s= s_{i,0}(D)$, $i=\f, \bb$, defined on the interval $[\smash{\widehat{D}}(\varepsilon,r) - \delta(\varepsilon,r), \smash{\widehat{D}}(\varepsilon,r) + \delta(\varepsilon,r)]$, corresponding to simple fronts of speed $s_{\f,0}(D)$ and simple backs of speed $s_{\bb,0}(D)$, respectively, which intersect transversely at $(\smash{\widehat{D}}(\varepsilon,r),\smash{\widehat{s}}(\varepsilon,r,\zeta))$.
\item \textbf{Bifurcation of $k$-fronts and $k$-backs, and generation of impulses}: There is a sequence $\{s_{\f,k}(D)\}_{k=1}^{\infty}$ of smooth curves, defined on $[\smash{\widehat{D}}(\varepsilon,r), \smash{\widehat{D}}(\varepsilon,r)+\delta(\varepsilon,r)]$, corresponding to $k$-fronts of wave speed $s_{\f,k}(D)$, 
and converging to a curve $s_{\f,\infty}(D)$ of waves speeds  associated with simple impulses of $X_1$. Similarly, there is a sequence $\{s_{\bb,k}(D)\}_{k=1}^{\infty}$ of smooth curves, defined on $[\smash{\widehat{D}}(\varepsilon,r)-\delta(\varepsilon,r), \smash{\widehat{D}}(\varepsilon,r)]$, corresponding to $k$-backs of wave speed $s_{\bb,k}(D)$, 
and converging to a curve $s_{\bb,\infty}(D)$ of wave speeds  associated with simple impulses of $X_2$. The curves $s_{\f,k}(D), s_{\bb,k}(D), k = 1,\ldots,\infty$ intersect at $(\smash{\widehat{D}}(\varepsilon,r),\smash{\widehat{s}}(\varepsilon,r,\zeta))$.
\end{enumerate}
\end{theorem}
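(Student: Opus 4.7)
The plan is to reduce the statement to an application of Deng's bifurcation theorem for \emph{double-twisted} heteroclinic loops, following the framework developed in~\cite{Deng91a,Deng91b} and the reformulation in~\cite{HomburgSandstede}. Theorem~\ref{heteroclinicloop} already delivers, for each $r\in(\tfrac23,\tfrac23+\gamma)$ with $\smash{\widehat M}(r)\neq 0$ and each sufficiently small $\varepsilon>0$, a genuine heteroclinic loop $\Gamma_{\f}\cup\Gamma_{\bb}$ of~\eqref{SYS2} connecting $X_1$ and $X_2$ at parameters $(D,\mu)=(\smash{\widehat D}(\varepsilon,r),\smash{\widehat\mu}(\varepsilon,r))$. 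Thus the first task is only to verify the structural hypotheses under which Deng's theorem produces precisely the bifurcation diagram claimed, and then to show that in the regime $r\in(\tfrac23,\tfrac23+\gamma)$ the relevant twist quantity has a sign opposite to that of the large-$r$ regime of Theorem~\ref{thm:single_twist}.

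First, I would check the linear data at $X_1$ and $X_2$. Writing out the Jacobian of~\eqref{SYS2} and using the fast-slow structure, one obtains one slow eigenvalue of order $\varepsilon$ and two fast eigenvalues which to leading order are the roots of the $2\times 2$ layer linearization. A short computation shows that at both equilibria the leading unstable eigenvalue is real, simple and strictly larger in magnitude than any other real part among stable eigenvalues, so each $X_i$ is a saddle of the type required by Deng's setup, and the non-resonance and spectral-gap conditions in~\cite[Thm.~2.1]{Deng91b} hold automatically for $\varepsilon$ small. Transversality of the two parameter curves $s=s_{\f,0}(D)$ and $s=s_{\bb,0}(D)$ at the loop point $(\smash{\widehat D}(\varepsilon,r),\smash{\widehat s}(\varepsilon,r,\zeta))$ follows from the same Melnikov calculation that underlies the persistence argument in Theorem~\ref{heteroclinicloop}: the two components of the bifurcation function $\smash{\widehat M}=(\smash{\widehat M}_{\f},\smash{\widehat M}_{\bb})$ have nonzero partial derivatives in the unfolding parameters $(D,\mu)$, which is precisely the content of $\smash{\widehat M}(r)\neq 0$ and yields distinct transverse curves for the individual front and back.

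The central and genuinely new ingredient relative to Theorem~\ref{thm:single_twist} is the verification of the \emph{double-twist} condition, i.e.\ Hypothesis~5.16(ii) of~\cite{HomburgSandstede}. Geometrically, along the front orbit $\Gamma_{\f}$ one must track the strong-unstable direction of $X_1$ as it is transported by the linearized flow and ask whether, upon approaching $X_2$, this direction lies on the ``same'' or ``opposite'' side of the center-stable manifold $W^{cs}(X_2)$ compared with the corresponding object along $\Gamma_{\bb}$. Analytically this sign is captured by the Melnikov-type integral $\smash{\widetilde M}_{\f}(r)$ introduced in~\eqref{defMtf}, and the opposite sign of $\smash{\widetilde M}_{\f}(r)$ relative to the large-$r$ case distinguishes double from single twist. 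In the intermediate regime $r\in(\tfrac23,\tfrac23+\gamma)$, Hypothesis~\ref{hyp} asserts exactly $\smash{\widetilde M}_{\f}(r)>0$, which I would verify by computing the singular limit of this integral as $r\downarrow \tfrac23$ using the explicit forms of the fast layer solution in~\eqref{defqf} and the back $\eqref{defqb-}$ together with the limits~\eqref{limitsspeed}-\eqref{limitsspeed2}; continuity then gives an open neighborhood of $\tfrac23$ on which positivity persists. I expect this computation to be the main technical obstacle, because $\smash{\widetilde M}_{\f}$ is an integral along a heteroclinic of a cubic reaction-diffusion equation with $r$-dependent endpoints, and obtaining a tractable leading-order expression requires carefully expanding $q_{\bb,+}(r),u_{\bb}(r)$ and the associated fast profile simultaneously around $r=\tfrac23$.

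Once the two sign conditions $\smash{\widehat M}(r)\neq 0$ and $\smash{\widetilde M}_{\f}(r)>0$ are in hand, the remainder of the proof is a direct application of Deng's theorem in the double-twist case. The theorem produces, in any sufficiently small neighborhood of $(\smash{\widehat D}(\varepsilon,r),\smash{\widehat s}(\varepsilon,r,\zeta))$ in the $(D,s)$-plane, two sequences of smooth bifurcation curves $s_{\f,k}(D)$ (on $[\smash{\widehat D},\smash{\widehat D}+\delta]$) and $s_{\bb,k}(D)$ (on $[\smash{\widehat D}-\delta,\smash{\widehat D}]$), corresponding to $k$-homoclinic excursions following $\Gamma_{\f}\cup\Gamma_{\bb}$ exactly $k$ times before landing back near $X_1$ respectively $X_2$, together with limiting curves $s_{\f,\infty}(D)$ and $s_{\bb,\infty}(D)$ of simple impulses of $X_1$ and $X_2$. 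Smoothness of $\delta(\varepsilon,r)$ in its arguments follows from smoothness of the unfolding and of the constants entering Deng's estimates. Translating these heteroclinic and homoclinic orbits of~\eqref{SYS2} back via the spatial-dynamics ansatz~\eqref{TW} yields precisely the $k$-fronts, $k$-backs and simple impulses claimed in the theorem, which completes the proof.
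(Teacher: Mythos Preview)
Your overall strategy is correct and coincides with the paper's: verify the hypotheses of Deng's theorem (via the formulation in \cite{HomburgSandstede}) for the heteroclinic loop of Theorem~\ref{heteroclinicloop}, establish that the loop is double twisted in the regime $r\in(\tfrac23,\tfrac23+\gamma)$, and read off the bifurcation diagram. The eigenvalue and transversality checks you describe match Lemmas~\ref{lem:relative_expansion} and~\ref{lem:nondegen_heteroclinic}/\ref{lem:transverse} of the paper.

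Two points deserve correction. First, you propose to \emph{prove} $\smash{\widetilde M}_{\f}(r)>0$ near $r=\tfrac23$ by expanding the singular limit; this is unnecessary, since the theorem explicitly \emph{assumes} Hypothesis~\ref{hyp}. The paper does not verify this sign rigorously (it only provides numerical evidence, Figure~\ref{fig:12}), and neither must you. Second, and more substantively, your twist discussion is incomplete: a \emph{double} twist means that \emph{both} legs $\Gamma_{\f}$ and $\Gamma_{\bb}$ are twisted, and these are controlled by two separate Melnikov quantities. You correctly identify $\smash{\widetilde M}_{\f}(r)>0$ as governing the twist of the front, but you do not address the back. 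The paper handles this in Lemma~\ref{lem:twistw}: the back is twisted because $\partial_u\mathcal Q_{\bb}(u_{\bb}(r),D_0(r),\mu_0(r);r)<0$, which is established directly (and holds for all $r>\tfrac23$, not just near $\tfrac23$) from the monotonicity of the back profile $q_{\bb}(\xi;r)$. Without this, you have only shown a single twist. Finally, your outline omits the strong inclination property (Lemma~\ref{lem:nondegen}(ii) in the paper), which is one of Deng's required nondegeneracy conditions; it follows from $\partial_u\mathcal Q_j\neq 0$ for $j=\f,\bb$ together with the strong $\lambda$-lemma, but should be mentioned. (Also, the relevant hypothesis in \cite{HomburgSandstede} for the double-twist case is 5.16(iii), not (ii).)
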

The statements are illustrated in Figure~\ref{fig:twist}. The key observation here is that local bifurcations along the double-twisted heteroclinic loop generate infinitely many $k$-front and $k$-back solutions for arbitrary $k\in \mathbb{N}$. These are traveling waves whose profiles exhibit $k$ well separated patches of turbulence, before converging towards fixed states at $\pm \infty$.

Here, the wave speed $\smash{\widehat{s}}(\varepsilon,r,\zeta)$ can also be positive or negative for all $\varepsilon \in (0,\varepsilon_0(r))$ depending on the value of $\zeta$. Indeed, it follows by~\eqref{limitsspeed} and~\eqref{defs0} that for $\zeta \in \left(\frac{2}{5},\frac{1}{66} (65-3 \sqrt{115})\right)$ the wave speed is positive, whereas for $\zeta >  \frac{1}{66} (65-3 \sqrt{115})$ the wave speed is negative, provided $r > \frac{2}{3}$ lies sufficiently close to $\frac{2}{3}$ and $\varepsilon > 0$ is sufficiently small. Consequently, the $k$-fronts and $k$-backs established in Theorem~\ref{thm:double_twist} can be propagating upstream or downstream, which determines whether they are invading or recovering the laminar state.

Further details, as well as the proof of Theorem~\ref{thm:double_twist}, can be found in~\S\ref{sec:proof_twists}.

\begin{remark} \label{rem:compare}
{\upshape
Comparing Theorems~\ref{thm:single_twist} and~\ref{thm:double_twist} it may seem rather surprising that the result in the large Reynolds number regime provides fewer bifurcating traveling waves than in the intermediate Reynolds number regime, whereas one could naively expect that a large Reynolds number leads to ``more complex'' pipe flow. There are several reasons that may explain why we obtain fewer bifurcating travelling waves for large Reynolds numbers here. First, the (statistical) description of fully-developed turbulence may potentially be possible with a much simpler phase space structure compared to the intermediate regime characterizing the transition from laminar to turbulent flow. Second, we stress that Theorem~\ref{thm:single_twist} does not imply that no other bifurcating traveling waves exist. Third, the pipe flow model~\eqref{SYS1} might exhibit intricate solutions for large Reynolds number, which are not of traveling-wave type, or which are of traveling-wave type, but do not arise as bifurcating orbits from the heteroclinic loop. Finally, as outlined in~\S\ref{sec:outlook} the pipe flow model has been validated experimentally and numerically for the onset of turbulence and it is therefore an open question, whether it is valid in the large Reynolds number regime.
}\end{remark}

\begin{figure}
\centering
\begin{subfigure}[b]{0.45\textwidth}
\centering
\begin{overpic}[scale=0.4]{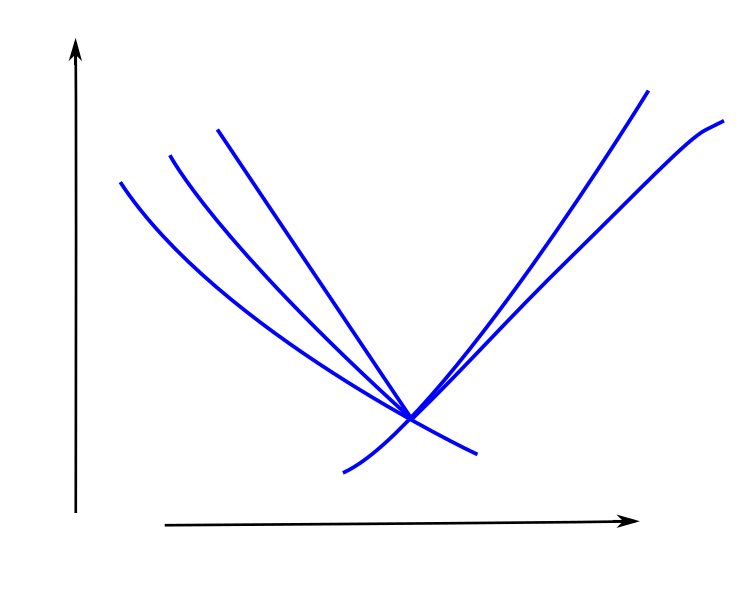}
\put(50, 3) { $D$}
\put(0, 40) { $s$}
\put(80, 45) { $s_{\f,\infty}$ }
\put(70, 60) { $s_{\f,0}$ }
\put(15, 62) { $s_{\bb,\infty}$ }
\put(33, 60) { $s_{\bb,1}$ }
\put(20, 35) { $s_{\bb,0}$ }
\end{overpic}
\caption{Single twist}
\end{subfigure}
\hfill
\begin{subfigure}[b]{0.45\textwidth}
\centering
\begin{overpic}[scale=0.4]{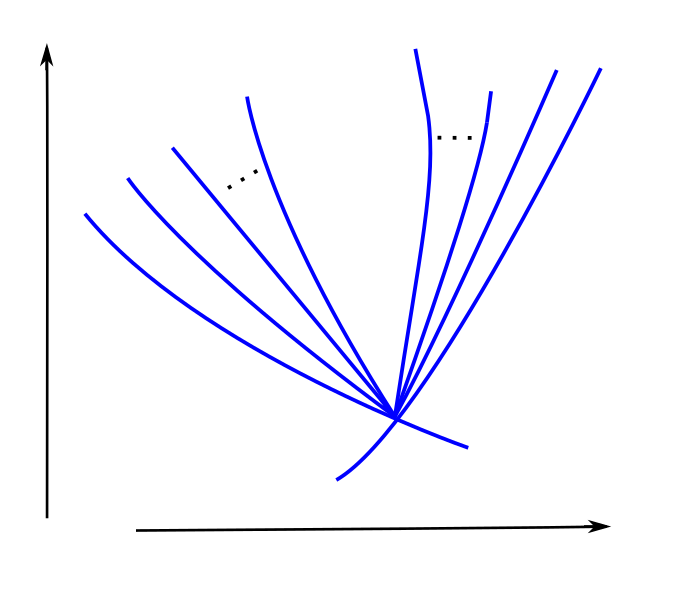}
\put(50, 3) { $D$}
\put(-2, 40) { $s$}
\put(57, 79) { $s_{\f,\infty}$ }
\put(76, 50) { $s_{\f,0}$ }
\put(64, 74) { $s_{\f,2}$ }
\put(74, 78) { $s_{\f,1}$ }
\put(25, 73) { $s_{\bb,\infty}$ }
\put(10, 62) { $s_{\bb,1}$ }
\put(17, 66) { $s_{\bb,2}$ }
\put(20, 35) { $s_{\bb,0}$ }
\end{overpic}
\caption{Double twist}
\end{subfigure}
\caption{Bifurcation diagrams for the single twisted (a) and double twisted (b) heteroclinic loop, as found in Theorem~\ref{thm:single_twist} and Theorem~\ref{thm:double_twist}, respectively. The figures show the various functions $s(D)$ in the $(D,s)$-parameter plane, along which different kinds of traveling-wave solutions to~\eqref{SYS1} exist. Note that in both figures the transversal intersection point of $s_{\bb,0}$ and $s_{\f,0}$ is precisely $(\smash{\widehat{D}}(\varepsilon,r),\smash{\widehat{s}}(\varepsilon,r,\zeta))$, where $\smash{\widehat{s}}(\varepsilon,r,\zeta) = - \zeta - \smash{\widehat{\mu}}(\varepsilon,r)$, which corresponds to the heteroclinic loop established in Theorem~\ref{heteroclinicloop}.}
\label{fig:twist}
\end{figure}

\begin{figure}
\centering
\begin{subfigure}[b]{0.3\textwidth}
\centering
\begin{overpic}[scale=0.4]{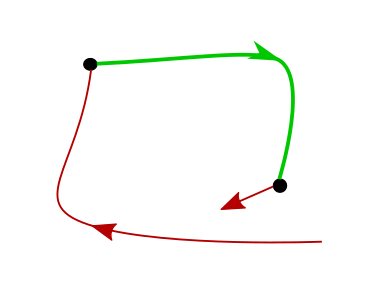}
\put(5,55){$X_1$}
\put(77,24){$X_2$}
\end{overpic}
\caption{For speed $s_{\f,0}$}
\end{subfigure}
\hfill
\begin{subfigure}[b]{0.3\textwidth}
\centering
\begin{overpic}[scale=0.4]{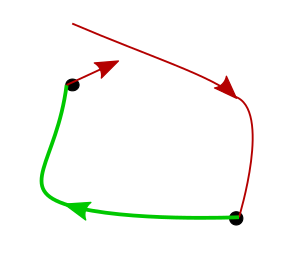}
\put(3,55){$X_1$}
\put(87,12){$X_2$}
\end{overpic}
\caption{For speed $s_{\bb,0}$}
\end{subfigure}
\hfill
\begin{subfigure}[b]{0.3\textwidth}
\centering
\begin{overpic}[scale=0.4]{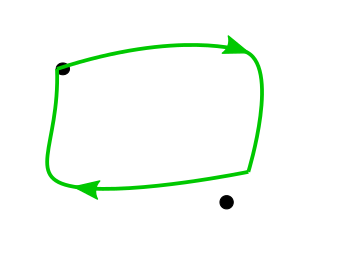}
\put(3,55){$X_1$}
\put(65,10){$X_2$}
\end{overpic}
\caption{For speed $s_{\f,\infty}$}
\end{subfigure}\\
\begin{subfigure}[b]{0.3\textwidth}
\centering
\begin{overpic}[scale=0.4]{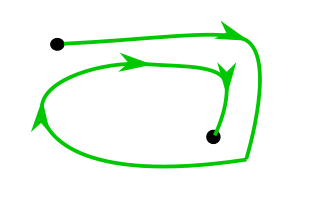}
\put(-1,52){$X_1$}
\put(50,20){$X_2$}
\end{overpic}
\caption{For speed $s_{\f,1}$}
\end{subfigure}
\hfill
\begin{subfigure}[b]{0.3\textwidth}
\centering
\begin{overpic}[scale=0.4]{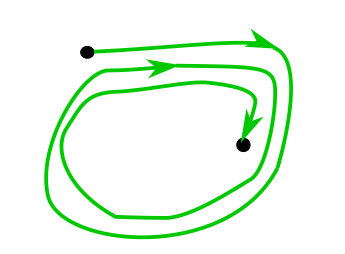}
\put(8,52){$X_1$}
\put(55,25){$X_2$}
\end{overpic}
\caption{For speed $s_{\f,2}$}
\end{subfigure}
\hfill
\begin{subfigure}[b]{0.3\textwidth}
\centering
\begin{overpic}[scale=0.4]{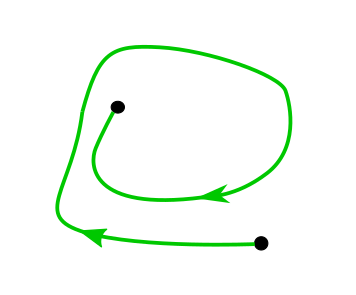}
\put(38,54){$X_1$}
\put(80,16){$X_2$}
\end{overpic}
\caption{For speed $s_{\bb,1}$}
\end{subfigure}
\caption{\label{fig:hets_and_homs}
Schematic depiction of various homoclinic and heteroclinic orbits exhibited by the traveling-wave ODE~\eqref{SYS2} associated with the different bifurcation curves shown in Figure~\ref{fig:twist}. The orbits lie in the vicinity of the heteroclinic loop depicted in Figure~\ref{fig:2} and are in one-to-one correspondence to the traveling waves of the pipe flow model~\eqref{SYS1} portrayed in Figure~\ref{fig:1}: (a) a simple heteroclinic from $X_1$ to $X_2$ (simple front), (b) a simple heteroclinic from $X_2$ to $X_1$ (simple back), (c) a simple homoclinic of $X_1$ (simple impulse of $X_1$), (d) a $1$-heteroclinic from $X_1$ to $X_2$ ($1$-front), (e) a $2$-heteroclinic from $X_1$ to $X_2$ ($2$-front), (f) a $1$-heteroclinic from $X_2$ to $X_1$ ($1$-back).}
\end{figure}

\section{Proof of Theorem~\ref{heteroclinicloop}} \label{sec:slowfast}

The traveling-wave equation~\eqref{SYS2} has the structure
\begin{align*}
\Psi_\xi &= F_{\text{fast}}(\Psi,\Phi),\\
\Phi_\xi &= \varepsilon F_{\text{slow}}(\Psi,\Phi),
\end{align*}
of a fast-slow dynamical system, where $\varepsilon > 0$ is a small parameter. The dynamics in such systems can geometrically be described by fast and slow subsystems that arise in the limit $\varepsilon \downarrow 0$. In our setting of the pipe flow model~\eqref{SYS1}, the fast subsystem captures sharp interfaces, which represent a quick drop or rise in turbulence level while the centerline velocity is unaffected. On the other hand, the slow subsystem describes the dynamics of~\eqref{SYS1} in between these interfaces, where the turbulence and centerline velocity are `slaved' to each other and evolve slowly, cf.~Figure~\ref{fig:1}.

We outline our approach to constructing the desired heteroclinic loop in the fast-slow system~\eqref{SYS2} and proving Theorem~\ref{heteroclinicloop}. The first step is to geometrically assemble a \emph{singular} heteroclinic loop by concatenating orbits of the fast and slow subsystems. We study the slow and fast subsystems in~\S\ref{sec:slow} and~\S\ref{sec:fast} and establish parameter regimes such that the relevant orbits exist. The construction of the singular heteroclinic loop can be found in~\S\ref{sec:singhet}. Subsequently, we will prove that an actual heteroclinic loop exists in the vicinity of the singular one, provided $\varepsilon > 0$ is sufficiently small. In more technical detail, finding heteroclinic connections between the equilibria $X_1$ and $X_2$ boils down to locating intersections between the stable and unstable manifolds of $X_1$ and $X_2$ in the dynamical system~\eqref{SYS2}. Indeed, since the stable manifold of the equilibrium $X_1$ consists of all orbits in~\eqref{SYS2} converging to $X_1$ as $\xi \to \infty$ and the unstable manifold of $X_2$ consists of all orbits in~\eqref{SYS2} converging to $X_2$ as $\xi \to -\infty$, any heteroclinic connecting $X_2$ with $X_1$ must lie in the intersection of the stable manifold of $X_1$ and unstable manifold of $X_2$. Therefore, we need good mathematical control on the relevant stable and unstable manifolds. We obtain such control through geometric singular perturbation theory in~\S\ref{sec:persistence_manifolds}, which allows us to describe the stable and unstable manifolds in terms of the fast and slow subsystems. Subsequently, we employ Melnikov's method in~\S\ref{sec:Meln} to locate intersections between the stable and unstable manifolds for $\varepsilon > 0$ sufficiently small, which yields the existence of the desired heteroclinic loop, see~\S\ref{sec:hetloop}.

\subsection{Slow subsystem} \label{sec:slow}

To capture the slow dynamics in the fast-slow dynamical system~\eqref{SYS2}, we introduce the `stretched' spatial coordinate $\tau = \varepsilon \xi$. In this rescaled spatial coordinate the system reads
\begin{align}
\begin{split}
\varepsilon q_\tau &= p,\\
\varepsilon p_\tau &= D^{-1}\left((u + \mu)p - f(q,u;r)\right),\\
(u-s)u_\tau &= g(q,u).
\end{split} \label{SYS22}
\end{align}
Subsequently setting $\varepsilon = 0$ we arrive at the \emph{slow subsystem}
\begin{align}
\begin{split}
0 &= p,\\
0 &= f(q,u;r),\\
(u-s)u_\tau &= g(q,u).
\end{split} \label{SLOW}
\end{align}
We note that~\eqref{SLOW} is a differential-algebraic system of equations in which the dynamics is one-dimensional as the $q$-component is slaved to the $u$-component through the relation $f(q,u;r) = 0$. That is, orbits in~\eqref{SLOW} are confined to the nullcline
\begin{align*} M_0 = \left\{(q,0,u) \in \R^3 : f(q,u;r) = 0\right\},\end{align*}
which is also called \emph{critical manifold}. It is the union of the line and the parabola
\begin{align*} M_1 = \{(0,0,u) : u \in \R\}, \qquad M_2 = \{(q,0,2-r+(r+0.1)(q-1)^2) : q \in \R\},\end{align*}
which intersect in the point $(0,0,2.1)$, see Figure~\ref{fig:3}. The parabola attains its global minimum value at the point $(1,0,2-r)$. The equilibria of both the slow subsystem~\eqref{SLOW} and the traveling-wave equation~\eqref{SYS2} with $\varepsilon > 0$ are located by intersecting $M_0$ with the second nullcline, which is the hyperbola
\begin{align*}H_0 = \left\{(q,0,u) \in \R^3 : g(q,u) = 0\right\}.\end{align*}
As already mentioned before, one of these equilibria is $X_1 = (0,0,2)$ which corresponds to the laminar flow profile in~\eqref{SYS1}. For Reynolds parameter $r > \frac{2}{3}$ we establish a second, $r$-dependent equilibrium $X_2$, which corresponds to a turbulent steady state in~\eqref{SYS1}.

\begin{figure}
\centering
\includegraphics[scale=0.5]{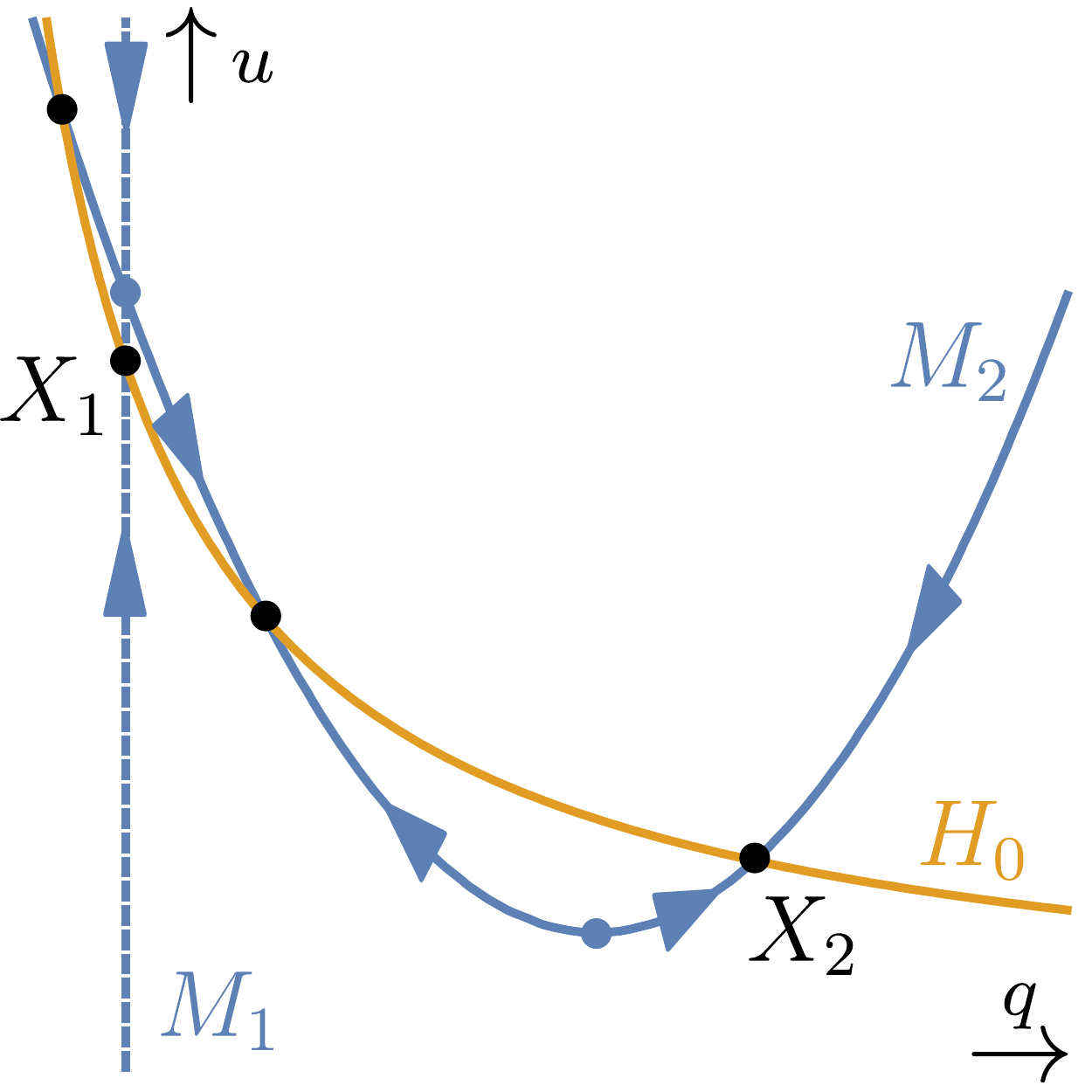}
\caption{Portrait of the dynamics of the slow subsystem~\eqref{SLOW} in the plane $p = 0$ for $r > \frac{2}{3}$ and $s < 2-r$. The dynamics is confined to the critical manifold $M_0 = M_1 \cup M_2$, which is the union of the line $M_1$ (dashed) and the parabola $M_2$ (solid). Equilibria arise at points where the nullclines $H_0$ and $M_0$ intersect. The relevant equilibria $X_1$ and $X_2$ are sinks. The blue dots on $M_0$ indicate fold points in which the flow of the slow subsystem is no longer defined. We stress that these fold points are irrelevant for our analysis as we do not consider the dynamics of~\eqref{SYS2} in their vicinity. We refer to~\cite{KuehnBook,MisRoz} for more background material and references on fold points.}
\label{fig:3}
\end{figure}

\begin{lemma} \label{lem:equilibria}
Let $D > 0$, $\mu,s \in \R$ and $\varepsilon \geq 0$. For each $r > \frac{2}{3}$ system~\eqref{SYS22} has an equilibrium $X_2 = (q_{\bb,+}(r),0,u_{\bb}(r))$, which resides on the right branch of the parabola $M_2$. We have $u_{\bb}(r) \in (\frac{6}{5},\frac{4}{3})$ with
\begin{align}\lim_{r \downarrow \frac{2}{3}} u_{\bb}(r) = \frac{4}{3}, \qquad  \lim_{r \to \infty} u_{\bb}(r) = \frac{6}{5}, \label{ubvalues}\end{align}
and
\begin{align} q_{\bb,+}(r) = 1 + \sqrt{\frac{r+u_{\bb}(r)-2}{r+0.1}} > 0. \label{qbbdef}\end{align}
\end{lemma}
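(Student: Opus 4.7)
The plan is to identify equilibria of~\eqref{SYS22} by setting $q_\tau = p_\tau = u_\tau = 0$, which yields the algebraic system $p = 0$, $f(q,u;r) = 0$ and $g(q,u) = 0$; this characterization is independent of $\varepsilon \geq 0$, $\mu$, $D$ and $s$ (for $u \neq s$). Since $X_2$ is required to sit on the right branch of $M_2$, I would parametrize this branch by $q \geq 1$ via $u = u_{\mathrm{p}}(q) := 2-r + (r+0.1)(q-1)^2$, and parametrize the hyperbola $\{g=0\}$ for $q > -\tfrac{1}{2}$ by $u = u_{\mathrm{h}}(q) := 2(1+q)/(1+2q)$. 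Finding $X_2$ then reduces to locating a root $q_{\bb,+}(r) > 1$ of $u_{\mathrm{p}}(q) = u_{\mathrm{h}}(q)$.

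Existence and uniqueness of such a root follow from a direct monotonicity argument: $u_{\mathrm{p}}$ is strictly increasing on $[1,\infty)$ since $u_{\mathrm{p}}'(q) = 2(r+0.1)(q-1)$, while $u_{\mathrm{h}}'(q) = -2/(1+2q)^2 < 0$ for $q > -\tfrac{1}{2}$. At the endpoints of $[1,2]$ one evaluates $u_{\mathrm{p}}(1) = 2-r$ and $u_{\mathrm{h}}(1) = 4/3$, so $u_{\mathrm{p}}(1) < u_{\mathrm{h}}(1)$ whenever $r > 2/3$, whereas $u_{\mathrm{p}}(2) = 2.1$ and $u_{\mathrm{h}}(2) = 6/5$, so $u_{\mathrm{p}}(2) > u_{\mathrm{h}}(2)$. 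The intermediate value theorem together with strict monotonicity of $u_{\mathrm{p}} - u_{\mathrm{h}}$ then yields a unique root $q_{\bb,+}(r) \in (1,2)$, and the bounds $u_{\bb}(r) \in (6/5, 4/3)$ follow from $u_{\bb}(r) = u_{\mathrm{h}}(q_{\bb,+}(r))$ together with strict monotonicity of $u_{\mathrm{h}}$. Smoothness of $q_{\bb,+}(r)$ and $u_{\bb}(r)$ in $r$ is then immediate from the implicit function theorem, since $(u_{\mathrm{p}} - u_{\mathrm{h}})'(q_{\bb,+}(r)) > 0$.

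To establish the limits~\eqref{ubvalues} I would observe that as $r \downarrow 2/3$ one has $u_{\mathrm{p}}(1) = 2-r \to 4/3 = u_{\mathrm{h}}(1)$, which by continuity of the sandwich in the previous paragraph forces $q_{\bb,+}(r) \downarrow 1$, hence $u_{\bb}(r) \to u_{\mathrm{h}}(1) = 4/3$. For $r \to \infty$, I would divide the matching equation by $r$ to obtain $(2-r)/r + (1 + 0.1/r)(q-1)^2 = (2/r)(1+q)/(1+2q)$; taking $r \to \infty$ yields $(q-1)^2 = 1$ to leading order, and since the root lies in $(1,2)$ this forces $q_{\bb,+}(r) \to 2$, hence $u_{\bb}(r) \to u_{\mathrm{h}}(2) = 6/5$. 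Finally, formula~\eqref{qbbdef} is obtained by solving $f(q,u;r) = q(r+u-2-(r+0.1)(q-1)^2) = 0$ on $q > 1$ for $q$, which gives $(q-1)^2 = (r + u_{\bb}(r) - 2)/(r+0.1)$ and hence the square-root expression with positive sign; the radicand is positive since $u_{\bb}(r) > 6/5 > 2-r$ for $r > 2/3$. No substantive obstacle is expected: the content is an elementary intersection-of-curves calculation, and the only care required is in pinning down the two boundary values $q=1$ and $q=2$ that sandwich the intersection.
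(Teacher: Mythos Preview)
Your proof is correct and takes a genuinely different route from the paper's. The paper parametrizes the intersection problem by the $u$-coordinate: substituting $q$ from the parabola into $g(q,u)=0$ yields a cubic $K(u;r)$ in $u$, and the authors analyze its discriminant (a quartic in $r$) to count real roots, then evaluate $K$ at $u=\tfrac{6}{5},\tfrac{4}{3},2$ to locate them via the intermediate value theorem. You instead parametrize by $q$ and exploit that $u_{\mathrm p}$ is strictly increasing while $u_{\mathrm h}$ is strictly decreasing on $[1,2]$, so the difference $u_{\mathrm p}-u_{\mathrm h}$ is strictly monotone and uniqueness of the intersection on the right branch comes for free, with no discriminant computation needed. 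Your endpoint evaluations $q=1,2$ correspond exactly to the paper's $u=\tfrac{4}{3},\tfrac{6}{5}$ through $u_{\mathrm h}$.

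Your approach is more elementary and self-contained for the stated lemma; the paper's cubic-in-$u$ approach, by contrast, yields additional global information about \emph{all} intersections of $H_0$ with $M_2$ (including those on the left branch, cf.\ Figure~\ref{fig:3}), which is not needed for Lemma~\ref{lem:equilibria} itself but informs the broader picture. Your limit arguments for~\eqref{ubvalues} are slightly informal as written (``continuity of the sandwich'' and the leading-order expansion), but both are easily made rigorous along the lines you indicate, and the paper's own treatment of these limits is no more detailed.
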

\begin{proof}
The hyperbola $H_0$ has one, two or three intersection points with the parabola $M_2$, depending on the value of the Reynolds parameter $r > 0$. The $u$-values of these intersection points are readily seen to correspond to the roots of the cubic polynomial
\begin{align} K(u;r) = 40u^3 - \left(50 r + 169\right) u^2 + \left(160 r + 224\right) u - 120 r -96, \label{defcubic}\end{align}
in $u$. The discriminant of $K(u;r)$ is an upward-facing quartic polynomial in $r$ admitting two real roots, which are located at $r = -0.1$ and at $r = r_* := \frac{1}{50} (38 + 9 \sqrt[3]{3} - 9 \sqrt[3]{9}) > 0$. Thus, the transition from one to three intersection points of the hyperbola $H_0$ and the parabola $M_2$ takes place as the Reynolds parameter $r > 0$ crosses the critical value $r = r_*$. By evaluating the cubic $K(u;r)$ at $u = \frac{6}{5}$, $u = \frac{4}{3}$ and $u = 2$, these intersection points can be located with the aid of the intermediate value theorem. Thus, one finds that for each $r > 0$ there is an intersection point to the left of the line $M_1$. For $r \in (r_*,\frac{2}{3})$ there are three intersection points lying on the left branch of the parabola $M_2$, whereas for $r > \frac{2}{3}$ there is one intersection point $X_2 = (q_{\bb,+}(r),0,u_{\bb}(r))$ with $u_{\bb}(r) \in (\frac{6}{5},\frac{4}{3})$, which resides on the right branch of $M_2$ and satisfies~\eqref{ubvalues}, and two others located on the left branch of $M_2$, see also Figure~\ref{fig:3}. Since $u_{\bb}(r)$ is the smallest root of the cubic $K(u;r)$ for $r > r_*$ and it holds $f(q_{\bb,+}(r),u_{\bb}(r);r) = 0$, we arrive at~\eqref{qbbdef}.
\end{proof}

We require that the equilibria $X_1$ and $X_2$ are both sinks for the slow dynamics~\eqref{SLOW}. Thus, the flow of~\eqref{SLOW}, which is confined to the critical manifold $M_0$, must be directed towards the hyperbola $H_0$. Using that $X_2$ lies on the parabola $M_2$ with minimum $(1,0,2-r)$ and $u_{\bb}(r) \in (\frac{6}{5},\frac{4}{3})$ holds by Lemma~\ref{lem:equilibria}, one readily establishes for which $s$-values this is the case.

\begin{lemma} \label{lem:sinks}
Let $D > 0$, $r > \frac{2}{3}$ and $\mu, s \in \R$. The equilibria $X_1$ and $X_2$ in the slow subsystem~\eqref{SLOW} are sinks if and only if $s < u_{\bb}(r)$. A sufficient condition is $s < \max\{2-r,\frac{6}{5}\}$.
\end{lemma}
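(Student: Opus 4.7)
The plan is to reduce the slow subsystem~\eqref{SLOW} to a scalar equation on each branch of the critical manifold and then linearize at the two equilibria.

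First I would treat $X_1 = (0,0,2)$, which lies on the line $M_1 = \{q=0\}$. On $M_1$ the algebraic constraint $f = 0$ is automatically satisfied, and because $g(0,u) = 2-u$, the slow flow reduces to $u_\tau = (2-u)/(u-s)$. A direct linearization at $u = 2$ gives the eigenvalue $-1/(2-s)$, so $X_1$ is a sink in~\eqref{SLOW} if and only if $s < 2$.

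Next, for $X_2 = (q_{\bb,+}(r),0,u_\bb(r))$ I would parameterize the right branch of $M_2$ as a graph $q = q(u)$, which is smooth near $X_2$ since $q_{\bb,+}(r) > 1$ and $M_2$ has its minimum at $q = 1$. Along this branch the slow flow reads $u_\tau = G(u)/(u-s)$ with $G(u) := g(q(u),u)$. Since $G(u_\bb(r)) = 0$, the linearized eigenvalue at $X_2$ is simply $G'(u_\bb(r))/(u_\bb(r)-s)$. A short computation using $G'(u) = g_q(q,u)\,q'(u) + g_u(q,u)$ together with $g_q = 2(1-u) < 0$, $g_u = -1-2q < 0$, and $q'(u) > 0$ on the right branch shows $G'(u_\bb(r)) < 0$ for $u_\bb(r) \in (\tfrac{6}{5},\tfrac{4}{3})$, by Lemma~\ref{lem:equilibria}. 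Hence $X_2$ is a sink if and only if $u_\bb(r) - s > 0$, i.e.\ $s < u_\bb(r)$.

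Combining both conditions, and using $u_\bb(r) < \tfrac{4}{3} < 2$, the joint sink condition collapses to $s < u_\bb(r)$, yielding the ``if and only if'' claim. For the sufficient condition $s < \max\{2-r,\tfrac{6}{5}\}$, I would simply observe that Lemma~\ref{lem:equilibria} gives $u_\bb(r) > \tfrac{6}{5}$, while the fact that $X_2$ lies strictly above the minimum $(1,0,2-r)$ of the parabola $M_2$ gives $u_\bb(r) > 2-r$; therefore $\max\{2-r,\tfrac{6}{5}\} \le u_\bb(r)$, and $s$ below this maximum automatically satisfies $s < u_\bb(r)$.

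No real obstacle is expected here; the only care needed is in tracking the sign of $G'(u_\bb(r))$ on the correct branch of $M_2$, which is straightforward from the explicit formulas for $f$ and $g$ in~\eqref{SYS11} together with the bounds on $u_\bb(r)$ established in Lemma~\ref{lem:equilibria}.
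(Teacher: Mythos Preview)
Your proposal is correct and fills in exactly the details the paper omits. The paper does not give an explicit proof of this lemma; it only states the geometric intuition that the slow flow on $M_0$ must be directed towards the hyperbola $H_0$ and then asserts that the result ``readily'' follows from $u_\bb(r)\in(\tfrac{6}{5},\tfrac{4}{3})$ and the location of the vertex of $M_2$ at $(1,0,2-r)$. Your explicit linearization of the reduced scalar flows on $M_1$ and on the right branch of $M_2$ is the natural way to make this precise, and your sign analysis of $G'(u_\bb(r))$ via $g_q<0$, $g_u<0$, $q'(u)>0$ is exactly what is needed. The derivation of the sufficient condition from $u_\bb(r)>\tfrac{6}{5}$ and $u_\bb(r)>2-r$ is also correct; note that the second inequality is indeed strict for $r>\tfrac{2}{3}$ since, by Lemma~\ref{lem:equilibria}, $X_2$ lies strictly on the right branch (so $q_{\bb,+}(r)>1$).
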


As already mentioned before, the dynamics in the slow subsystem~\eqref{SLOW} describe the regime in the pipe flow model~\eqref{SYS1} in between fast drops and rises of turbulence, where turbulence and centerline velocity are slaved to each other and evolve slowly. In particular, those orbits segments in~\eqref{SLOW} confined to the line $M_1$, which converge to the equilibrium $X_1$, represent slow recovery of the centerline velocity in~\eqref{SYS1} towards the laminar profile in the absence of turbulence. On the other hand, orbit segments on the right branch of the parabola $M_2$, converging to the equilibrium $X_2$, represent a slow decrease of the centerline velocity in the presence of turbulence up to the point where a new balance between turbulence and centerline velocity, away from the laminar profile, has been reached, see also Figure~\ref{fig:3}.

\subsection{Fast subsystem} \label{sec:fast}

Setting $\varepsilon = 0$ in~\eqref{SYS2} yields the \emph{fast subsystem} or \emph{layer problem}
\begin{align}
\begin{split}
q_\xi &= p,\\
p_\xi &= D^{-1}\left((u + \mu)p - f(q,u;r)\right),\\
u_\xi &= 0,
\end{split} \label{FAST}
\end{align}
in which the variable $u$ is constant, and thus can be regarded as a parameter. More specifically, for each fixed value of the centerline velocity $u$, one has a two-dimensional dynamical system describing the evolution of the turbulence $q$ in that `layer', see Figure~\ref{fig:4}. Thus, the fast subsystem~\eqref{FAST} captures fast transitions in turbulence level in the pipe flow model~\eqref{SYS1}, while the centerline velocity stays to leading order constant. In particular, the sharp interfaces of the traveling waves depicted in Figure~\ref{fig:1} correspond to heteroclinic connections between equilibria in the fast subsystem~\eqref{FAST}. One observes that the equilibria of~\eqref{FAST} are precisely those points on the critical manifold $M_0$.

We establish the relevant heteroclinic connections for the construction of the singular heteroclinic loop, which are the connections in the layers $u = 2$ and $u = u_{\bb}(r)$ in which the equilibria $X_1$ and $X_2$ for the full system~\eqref{SYS2} are located, cf.~Lemma~\ref{lem:equilibria}. In the layer $u = 2$ the fast subsystem~\eqref{FAST} has two additional equilibria residing on the left and right branch of the parabola $M_2$, which are given by $\left(q_{\f,\pm}(r),0,2\right)$ with
\begin{align}q_{\f,\pm}(r) = 1 \pm \sqrt{\frac{r}{r+0.1}}. \label{defqf}\end{align}
Moreover, in the layer $u = u_{\bb}(r)$ system~\eqref{FAST} admits the additional equilibrium $(0,0,u_{\bb}(r))$ on the line $M_1$ and the equilibrium
\begin{align}  \left(q_{\bb,-}(r),0,u_{\bb}(r)\right), \qquad q_{\bb,-}(r) =  1 - \sqrt{\frac{r+u_{\bb}(r)-2}{r+0.1}}, \label{defqb-}\end{align}
on the left branch of the parabola $M_2$. Thus, we wish to establish heteroclinic connections between $X_1$ and $(q_{\f,+}(r),0,2)$ and between $X_2$ and $(0,0,u_{\bb}(r))$. We proceed by rescaling~\eqref{FAST} so that it transforms to the well-known Nagumo (or real Ginzburg-Landau) equation for which the existence theory of heteroclinics is well-established. In particular, we obtain explicit expressions for the heteroclinic solutions, which are relevant for the Melnikov analysis in the upcoming~\S\ref{sec:Meln}. All in all, we arrive at the following result.

\begin{lemma} \label{lem:fastconnect}
Let $D > 0$, $r > \frac{2}{3}$ and $\mu,s \in \R$. Define $\phi \colon \R \to \R$ by
\begin{align} \phi(\chi) = \frac{1}{1+\mathrm{e}^{-\frac{1}{2}\sqrt{2} \chi}}. \label{defphi}\end{align}
If
\begin{align} 2 + \mu = \frac{1}{2}\sqrt{2D(r+0.1)} \left(q_{\f,+}(r) - 2 q_{\f,-}(r)\right), \label{cond:front}\end{align}
then the fast subsystem~\eqref{FAST} admits a heteroclinic front solution
\begin{align}  X_{\f}(\xi) = \left(q_{\f}(\xi;r),p_{\f}(\xi;r),u_{\f}(r)\right) = \left(q_{\f,+}(r)\,\phi\left(q_{\f,+}(r) \sqrt{\frac{r+0.1}{D}} \, \xi\right), q_{\f}'(\xi;r), 2\right), \label{frontsol}\end{align}
connecting the hyperbolic saddles $X_1$ to $(q_{\f,+}(r),0,2)$ within the layer $u = 2$. Moreover, if we have
\begin{align}u_{\bb}(r) + \mu = -\frac{1}{2} \sqrt{2D(r+0.1)}\left(q_{\bb,+}(r)  - 2q_{\bb,-}(r)\right), \label{cond:back} \end{align}
then~\eqref{FAST} possesses the heteroclinic back solution
\begin{align} X_{\bb}(\xi) = \left(q_{\bb}(\xi;r),p_{\bb}(\xi;r),u_{\bb}(r)\right) = \left(q_{\bb,+}(r)\,\phi\left(-q_{\bb,+}(r) \sqrt{\frac{r+0.1}{D}} \, \xi\right), q_{\bb}'(\xi;r), u_{\bb}(r)\right). \label{backsol}\end{align}
connecting the hyperbolic saddles $X_2$ to $(0,0,u_{\bb}(r))$ within the layer $u = u_{\bb}(r)$.
\end{lemma}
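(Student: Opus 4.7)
The plan is to reduce each heteroclinic problem to a planar second-order equation in the relevant constant-$u$ layer and then rescale it to the classical Nagumo (real Ginzburg--Landau) equation, whose explicit front profile is standard.

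For the front, I would fix $u=2$ in~\eqref{FAST}. Because $u$ is constant in every layer, the dynamics reduce to the planar equation $D q_{\xi\xi} = (2+\mu)q_\xi - f(q,2;r)$. The key observation is that, by~\eqref{defqf}, one has the factorization $f(q,2;r) = -(r+0.1)\,q\,(q - q_{\f,+}(r))(q - q_{\f,-}(r))$, and $\partial_q f<0$ at both $q=0$ and $q=q_{\f,+}(r)$, identifying each as a hyperbolic saddle of the planar system. Next I would introduce the rescaling $q = q_{\f,+}(r)\tilde q$ and $\xi = \sigma \tilde \xi$ with $\sigma = \sqrt{D/(r+0.1)}\,/\,q_{\f,+}(r)$, which puts the equation into Nagumo form
\[ \tilde q_{\tilde\xi\tilde\xi} - \tilde c\, \tilde q_{\tilde\xi} + \tilde q(\tilde q - 1)(\tilde q - a) = 0, \]
with $a = q_{\f,-}(r)/q_{\f,+}(r) \in (0,1)$ and $\tilde c = (2+\mu)/\bigl(q_{\f,+}(r)\sqrt{D(r+0.1)}\bigr)$. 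Classical theory yields the heteroclinic $\tilde q(\tilde\xi) = \phi(\tilde\xi)$ from $0$ to $1$ precisely when $\tilde c = (1-2a)/\sqrt 2$. Equating this with the expression for $\tilde c$ produces the matching condition~\eqref{cond:front}, and undoing the rescaling recovers~\eqref{frontsol}.

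The back is handled analogously in the layer $u = u_\bb(r)$. By~\eqref{qbbdef}--\eqref{defqb-}, the reaction term factors as $f(q,u_\bb(r);r) = -(r+0.1)\,q\,(q - q_{\bb,+}(r))(q - q_{\bb,-}(r))$. Using $u_\bb(r) \in (\tfrac65,\tfrac43)$ from Lemma~\ref{lem:equilibria}, one checks that $q_{\bb,+}(r) > q_{\bb,-}(r) > 0$ and $u_\bb(r) - 2.1 < 0$, so that $\partial_q f < 0$ at both $X_2$ and $(0,0,u_\bb(r))$, making them hyperbolic saddles of the planar layer. The parallel rescaling with $q_{\f,+}$ replaced by $q_{\bb,+}$ again reduces to Nagumo form, but the desired orbit is now a \emph{decreasing} heteroclinic from $1$ to $0$. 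I would therefore use the reversed Nagumo front $\tilde q(\tilde\xi) = \phi(-\tilde\xi)$, which solves the rescaled equation at speed $\tilde c = -(1-2a)/\sqrt 2$ with $a = q_{\bb,-}(r)/q_{\bb,+}(r)$. Matching this with $\tilde c = (u_\bb(r)+\mu)/\bigl(q_{\bb,+}(r)\sqrt{D(r+0.1)}\bigr)$ yields~\eqref{cond:back}, and inverting the rescaling produces~\eqref{backsol}.

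There is no genuine obstacle: the lemma is bookkeeping once the Nagumo structure is identified. The only point that requires modest care is orientation, namely tracking whether the rescaled profile should be $\phi(\cdot)$ or $\phi(-\cdot)$ and checking that the sign of the inner argument $\pm q_{*,+}(r)\sqrt{(r+0.1)/D}\,\xi$ produced by the rescaling connects the correct pair of saddles in the correct direction. The final verification that the claimed formulas indeed solve the fast subsystem reduces to the one-line identity $\phi'(\chi) = \phi(\chi)(1-\phi(\chi))/\sqrt 2$, which gives $\phi'' = \phi(\phi-1)(\phi-\tfrac12)$ and from which the Nagumo profile follows immediately.
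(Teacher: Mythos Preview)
Your approach is essentially identical to the paper's: both reduce the planar layer dynamics to the Nagumo traveling-wave equation via the same rescaling $q = q_{*,+}(r)\tilde q$, $\chi = q_{*,+}(r)\sqrt{(r+0.1)/D}\,\xi$, identify the same parameters $a = q_{*,-}/q_{*,+}$ and $\tilde c = (u_*+\mu)/(q_{*,+}\sqrt{D(r+0.1)})$, and match $\tilde c = \pm\sqrt{2}(\tfrac12 - a)$ to obtain~\eqref{cond:front}--\eqref{cond:back}. The paper derives the explicit profile via the parabolic ansatz $\tilde p = b\tilde q(1-\tilde q)$ rather than citing the Nagumo front directly, but this is the same computation you summarize with the identity $\phi' = \phi(1-\phi)/\sqrt{2}$; there is no substantive difference. (One cosmetic point: the sign of the cubic term in your displayed Nagumo equation should read $-\tilde q(\tilde q-1)(\tilde q-a)$ to match the reduction from $Dq_{\xi\xi} = (2+\mu)q_\xi - f$, though your stated speed condition $\tilde c = (1-2a)/\sqrt{2}$ is correct.)
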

\begin{proof}
Our approach is to rescale the fast subsystem~\eqref{FAST} in the relevant layers, so that we arrive at the traveling-wave equation associated with the Nagumo (or real Ginzburg-Landau) equation
\begin{align}
q_t = q_{xx} + q(q-\alpha)(1-q), \label{nagumo}
\end{align}
with parameter $\alpha \in (0,1]$, cf.~\cite{KuehnBook1}. Thus, in the rescaled variables
\begin{align*} q = q_{\f,+}(r) \widetilde{q}, \qquad p = \left(q_{\f,+}(r)\right)^2 \sqrt{\frac{r+0.1}{D}} \, \widetilde{p}, \qquad \chi = q_{\f,+}(r) \sqrt{\frac{r+0.1}{D}} \, \xi,\end{align*}
the fast subsystem~\eqref{FAST} in the layer $u = 2$ reads
\begin{align} \label{nagumotw}
\begin{split}
\widetilde{q}_\chi &= \widetilde{p},\\
\widetilde{p}_\chi &= c_{\f}(r,\mu,D)\widetilde{p} - \widetilde{q}(\widetilde{q}-\alpha_{\f}(r))(1-\widetilde{q}),
\end{split}
\end{align}
with parameters
\begin{align*} c_{\f}(r,\mu,D) = \frac{2+\mu}{q_{\f,+}(r) \sqrt{D(r+0.1)}}, \qquad \alpha_{\f}(r) = \frac{q_{\f,-}(r)}{q_{\f,+}(r)} \in (0,1).\end{align*}
We note that~\eqref{nagumotw} indeed coincides with the traveling-wave equation associated with the Nagumo equation upon substituting the traveling-wave ansatz $q(x+ct)$ into~\eqref{nagumo}. Linearizing~\eqref{nagumotw} about its equilibria $(0,0)$ and $(1,0)$ we find that both are hyperbolic saddles, where we use $\alpha_{\f}(r) \in (0,1)$. We are looking for a heteroclinic connecting $(0,0)$ and $(1,0)$ in~\eqref{nagumotw}. Inserting the parabolic ansatz $\widetilde{p} = b\widetilde{q}(1-\widetilde{q})$ with $b \in \R$ into~\eqref{nagumotw}, one observes that such a heteroclinic exists if
\begin{align} c_{\f}(r,\mu,D) = \sqrt{2}\left(\frac{1}{2} - \alpha_{\f}(r)\right), \label{cond:front2} \end{align}
and is then explicitly given by (any translate of) $(\phi(\chi),\phi'(\chi))$. Thus, undoing the rescaling, we establish that the condition~\eqref{cond:front2} boils down to~\eqref{cond:front}, yielding the existence of the heteroclinic front solution~\eqref{frontsol} to the fast subsystem~\eqref{FAST} connecting the hyperbolic saddles $X_1$ and $(q_{\f,+}(r),0,2)$ in the layer $u = 2$. Using an analogous approach one finds the condition~\eqref{cond:back} for the existence of the heteroclinic back solution~\eqref{backsol} in the layer $u = u_{\bb}(r)$.
\end{proof}

\begin{figure}
\centering
\includegraphics[scale=0.5]{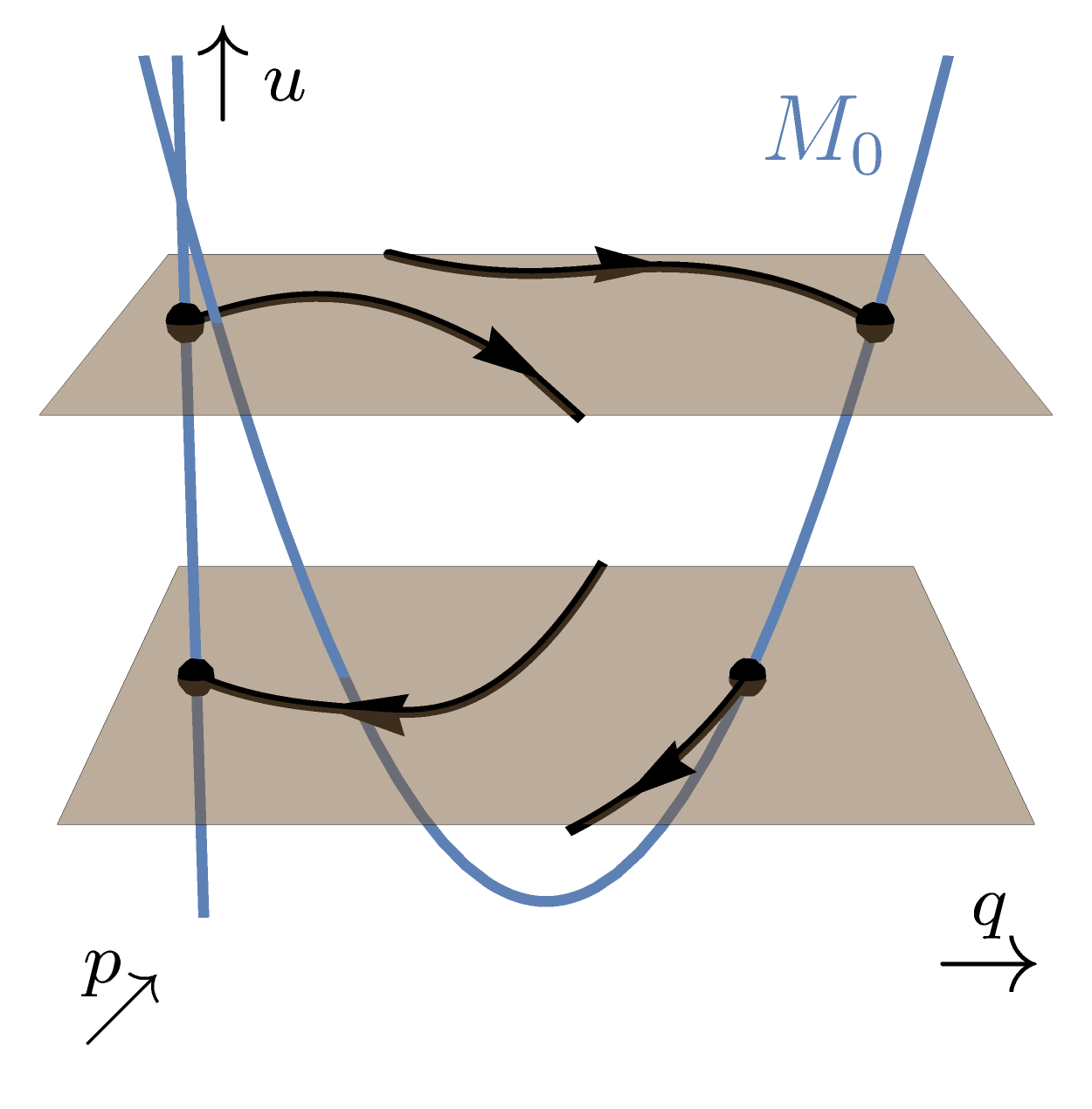} \hspace{1cm} \includegraphics[scale=0.5]{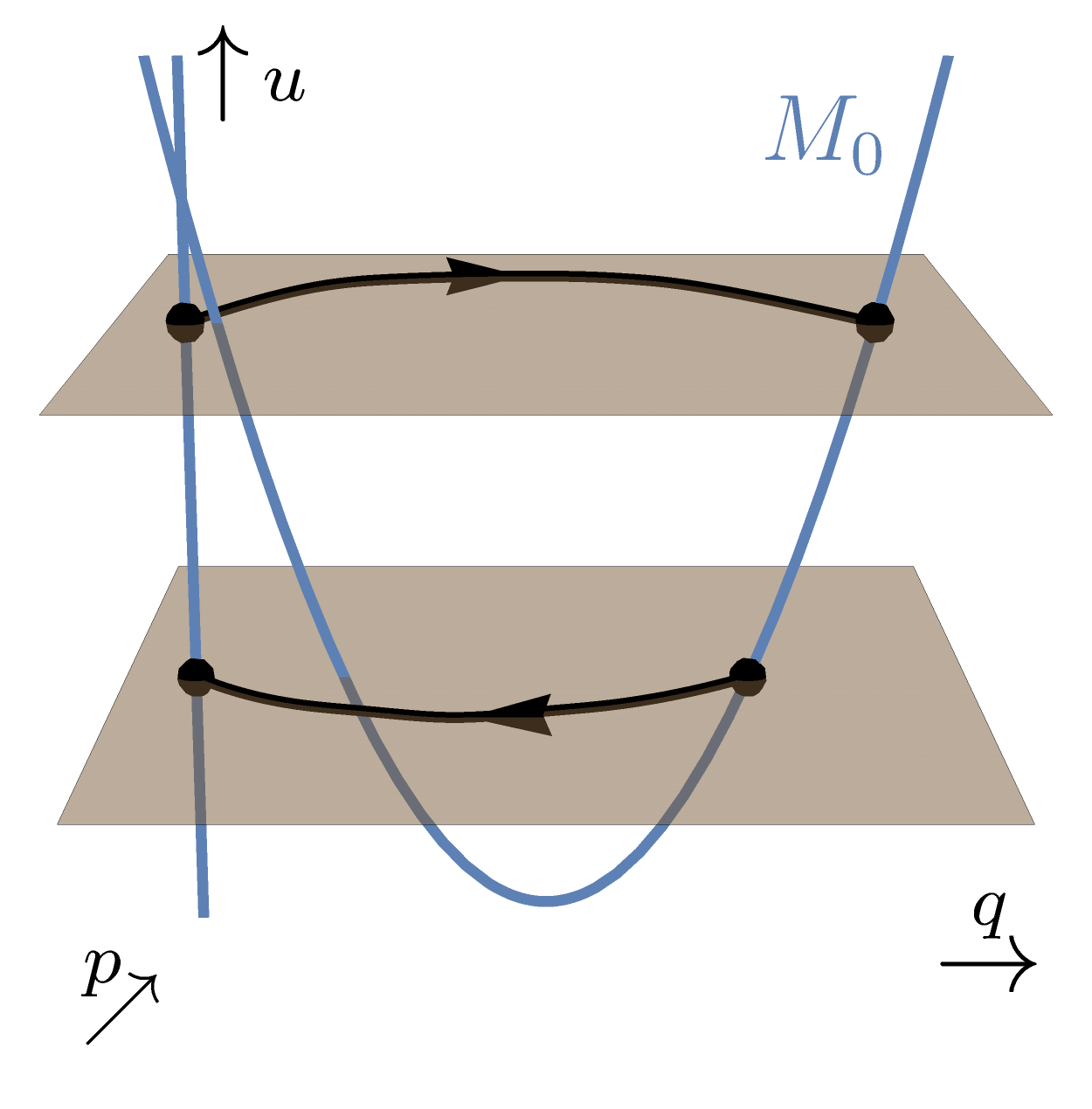}
\caption{Both panels depict the dynamics in two layers of the fast subsystem~\eqref{FAST}. Equilibria arise at intersections of the critical manifold $M_0$ with the layers. In each layer we depict an orbit converging to an equilibrium on the right branch of the parabola $M_2$ as $\xi \to \infty$ (upper layer) or $\xi \to -\infty$ (lower layer), and an orbit converging to an equilibrium on the line $M_1$ as $\xi \to -\infty$ (upper layer) or $\xi \to \infty$ (lower layer). In the right panel these two orbits coincide, so that both layers posses a heteroclinic connection, which is a necessary condition for the existence of a singular heteroclinic loop.}
\label{fig:4}
\end{figure}

\subsection{Construction of singular heteroclinic loop} \label{sec:singhet}

By piecing together orbit segments of the fast and slow subsystems we construct a so-called \emph{singular heteroclinic loop} connecting the equilibria $X_1$ and $X_2$. More specifically, the singular heteroclinic loop consists of four pieces. The first is the heteroclinic front connecting the equilibrium $X_1$ to $(q_{\f,+}(r),0,2)$ in the fast subsystem~\eqref{FAST}, which was established in Lemma~\ref{lem:fastconnect} and corresponds a sharp interface describing a quick rise in turbulence level from the laminar flow while the centerline velocity stays roughly constant. The second is the orbit segment in the slow subsystem~\eqref{SLOW} connecting the point $(q_{\f,+}(r),0,2)$ to the sink $X_2$ on the right-branch of the parabola $M_2$, which describes a gradual decrease of the centerline velocity in the presence of turbulence up to the point where a new balance between turbulence and center velocity has been reached. The third is the heteroclinic back connecting the equilibrium $X_2$ to $(0,0,u_{\bb}(r))$ in the fast subsystem~\eqref{FAST}, which corresponds to a sharp interface describing a quick drop to zero turbulence level while the centerline velocity is to leading order constant. Finally, the fourth is the orbit segment in the slow subsystem~\eqref{SLOW} connecting the point $(0,0,u_{\bb}(r))$ to the sink $X_1$ on the line $M_1$, which describes a gradual recovery of the centerline velocity towards the laminar profile in the absence of turbulence. We refer to Figure~\ref{fig:5} for a schematic depiction of the singular heteroclinic loop.

The existence of the singular heteroclinic loop is a direct consequence of Lemmas~\ref{lem:equilibria},~\ref{lem:sinks} and~\ref{lem:fastconnect}.

\begin{corollary} \label{cor:hetloop}
Fix a Reynolds parameter $r > \frac{2}{3}$ and a speed $s < \min\{u_{\bb}(r),-\mu_0(r)\}$. The functions $\mu_0 \colon (\frac{2}{3},\infty) \to \left(-\frac{8}{5}, \frac{1}{66} (3 \sqrt{115}-65)\right)$ and $D_0 \colon (\frac{2}{3},\infty) \to \left(0,\infty\right)$ given by
\begin{align}
\begin{split}
\mu_0(r) &= \frac{2\left(2q_{\bb,-}(r)-q_{\bb,+}(r)\right) + u_{\bb}(r) \left(2q_{\f,-}(r)-q_{\f,+}(r)\right)}{q_{\bb,+}(r)-2q_{\bb,-}(r)+q_{\f,+}(r)-2q_{\f,-}(r)},\\
D_0(r) &= \frac{2 \left(2-u_{\bb}(r)\right)^2}{(r+0.1) \left(2 q_{\bb,-}(r) - q_{\bb,+}(r) + 2 q_{\f,-}(r) - q_{\f,+}(r)\right)^2}.
\end{split} \label{defD0mu0}
\end{align}
satisfy~\eqref{limitsspeed} and~\eqref{limitsspeed2}. Moreover, the singular heteroclinic loop connecting the equilibria $X_1$ and $X_2$ exists for $D = D_0(r)$ and $\mu = \mu_0(r)$.
\end{corollary}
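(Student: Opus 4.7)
The plan is to execute the concatenation scheme outlined just before the statement of the corollary: starting at $X_1$, traverse the fast heteroclinic to $(q_{\f,+}(r),0,2)$ supplied by Lemma~\ref{lem:fastconnect}, then the slow orbit on the right branch of $M_2$ into $X_2$, then the fast heteroclinic from $X_2$ to $(0,0,u_{\bb}(r))$, and finally the slow orbit on $M_1$ back into $X_1$. The first step is to solve the matching conditions~\eqref{cond:front} and~\eqref{cond:back} simultaneously for $(D,\mu)$. Writing $A := q_{\f,+}(r)-2q_{\f,-}(r)$ and $B := q_{\bb,+}(r)-2q_{\bb,-}(r)$, I would subtract~\eqref{cond:back} from~\eqref{cond:front} to eliminate $\mu$, obtaining $\sqrt{2D(r+0.1)} = 2(2-u_{\bb}(r))/(A+B)$. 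Substituting this back into~\eqref{cond:front} then yields $\mu_0(r)$ in the form~\eqref{defD0mu0}, and squaring produces $D_0(r)$. The smoothness and positivity of $\mu_0, D_0$ over $(\tfrac{2}{3},\infty)$ follow from the explicit formulas once one checks that $A+B > 0$ throughout, which reduces to $3\sqrt{r/(r+0.1)} > 2$ for $r \geq \tfrac{2}{3}$, i.e., $6\sqrt{115}/23 > 2$.

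To close the loop geometrically for $D = D_0(r)$ and $\mu = \mu_0(r)$, I would first observe that $(q_{\f,+}(r),0,2)$ lies on the right branch of $M_2$ (since $q_{\f,+}(r)-1 = \sqrt{r/(r+0.1)}$) and $(0,0,u_{\bb}(r))$ lies on $M_1$. Under the hypothesis $s < u_{\bb}(r)$, Lemma~\ref{lem:sinks} makes $X_1$ and $X_2$ sinks of the slow flow. I would then verify directly from~\eqref{SLOW} that the slow flow at each jump-off point is directed toward the corresponding sink: on $M_1$ at $u = u_{\bb}(r) < 2$ one has $u_\tau = (2-u)/(u-s) > 0$, while on the right branch of $M_2$ at $u = 2$ one has $g(q_{\f,+}(r),2) = -2q_{\f,+}(r) < 0$ and $u-s > 0$, so $u_\tau < 0$. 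Both slow arcs therefore approach the correct sinks monotonically and stay above the fold of $M_2$ at $u = 2-r$, since $u_{\bb}(r) = 2-r+(r+0.1)(q_{\bb,+}(r)-1)^2 > 2-r$. The auxiliary assumption $s < -\mu_0(r)$ simply enforces $\zeta = -s-\mu_0(r) > 0$ through the change of variable $\mu = -\zeta-s$.

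The limits~\eqref{limitsspeed} and~\eqref{limitsspeed2} are obtained by inserting $u_{\bb}(r) \to \tfrac{4}{3}$ as $r \downarrow \tfrac{2}{3}$ and $u_{\bb}(r) \to \tfrac{6}{5}$ as $r \to \infty$ from Lemma~\ref{lem:equilibria} into the closed forms for $\mu_0(r)$ and $D_0(r)$. As $r \to \infty$, both $q_{\f,\pm}(r)$ and $q_{\bb,\pm}(r)$ tend to $1 \pm 1$, immediately yielding $\mu_0 \to -\tfrac{8}{5}$ and $D_0 \to 0$. The harder case is $r \downarrow \tfrac{2}{3}$, where the radicand in $q_{\bb,\pm}$ vanishes so $q_{\bb,\pm} \to 1$, while $q_{\f,\pm} \to 1 \pm \sqrt{20/23} = 1 \pm 2\sqrt{115}/23$, and the resulting quotient of nested surds must be rationalized. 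This algebraic reduction, producing $(3\sqrt{115}-65)/66$ and $10(34+3\sqrt{115})/363$, exploits coincidences such as $1518 = 23 \cdot 66$ and $34^2 - 9 \cdot 115 = 121$, and constitutes the only genuine computational obstacle. Once verified, the corollary follows from Lemmas~\ref{lem:equilibria},~\ref{lem:sinks}, and~\ref{lem:fastconnect}.
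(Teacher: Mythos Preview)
Your proposal is correct and follows essentially the same approach as the paper: solve the two matching conditions~\eqref{cond:front} and~\eqref{cond:back} simultaneously for $(D,\mu)$, verify the denominator $A+B$ is positive via the bounds $A>1$ and $B>-1$ (the paper phrases this as $q_{\f,+}-2q_{\f,-}>6\sqrt{5/23}-1>1$ and $q_{\bb,+}-2q_{\bb,-}>-1$, which is exactly your argument), and obtain the limits from~\eqref{ubvalues} and~\eqref{qlims}. Your additional checks on the slow-flow direction at the jump-off points and fold avoidance are more detailed than the paper, which simply invokes Lemma~\ref{lem:sinks}, but they are correct and in the same spirit.
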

\begin{proof}
By Lemma~\ref{lem:sinks} the relevant orbits in the slow subsystem~\eqref{SLOW} exist. On the other hand, the conditions~\eqref{cond:front} and~\eqref{cond:back} for the existence of the relevant heteroclinic connections in the fast subsystem~\eqref{FAST} constitute an algebraic system of equations in the parameters $r,D$ and $\mu$, which can be uniquely solved for $D$ and $\mu$ yielding the solutions $D=D_0(r)$ and $\mu = \mu_0(r)$. Using $u_{\bb}(r) > 2-r$, one readily observes that
\begin{align*} q_{\f,+}(r) - 2q_{\f,-}(r) > 6\sqrt{\frac{5}{23}} - 1 > 1, \qquad q_{\bb,+}(r) - 2q_{\bb,-}(r) > -1,\end{align*}
for $r > \frac{2}{3}$. So, $D_0(r)$ is positive. Moreover, recalling  $u_{\bb}(r) \in (\frac{6}{5},\frac{4}{3})$ from Lemma~\ref{lem:equilibria}, we find $\mu_0(r) \in \left(-\frac{8}{5}, \frac{1}{66} (3 \sqrt{115}-65)\right)$. Finally,~\eqref{ubvalues} yields
\begin{align}
\lim_{r \to \infty} q_{j,+}(r) = 2, \qquad \lim_{r \to \infty} q_{j,-}(r) = 0, \qquad j = \f,\bb. \label{qlims}
\end{align}
Hence,~\eqref{limitsspeed} and~\eqref{limitsspeed2} follow with the aid of~\eqref{ubvalues} and~\eqref{qlims}.
\end{proof}

We emphasize that the singular heteroclinic loop is purely a geometric object and does not contain actual heteroclinic connections between the equilibria $X_1$ and $X_2$ in~\eqref{SYS2} (even for $\varepsilon = 0$). Indeed, such heteroclinic connections are smooth, whereas the singular heteroclinic loop has sharp edges at $(q_{\f,+}(r),0,2)$ and $(0,0,u_{\bb}(r))$, see Figure~\ref{fig:5}. However, we will prove in the next subsection that for $\varepsilon > 0$ sufficiently small an actual heteroclinic loop between $X_1$ and $X_2$ exists in~\eqref{SYS2} lying in the vicinity of the singular one.

\begin{remark}{ \upshape
In~\cite{Barkleyetal} one finds that the model parameters $D = 0.13$ and $r = 1.2$ in the pipe flow model~\eqref{SYS1} capture the regime where turbulence first begins to expand. It is interesting to note that $D_0(1.2) \approx 0.1286$, which yields that a singular heteroclinic loop exist for nearby parameter values upon selecting the wave speed $s = -\zeta - \mu_0(r)$. This indicates, in addition to the results in the current paper, that the existence of a heteroclinic loop in~\eqref{SYS2} might be intimately connected to the transition to fully turbulent flow.
}\end{remark}

\begin{figure}
\centering
\includegraphics[scale=0.5]{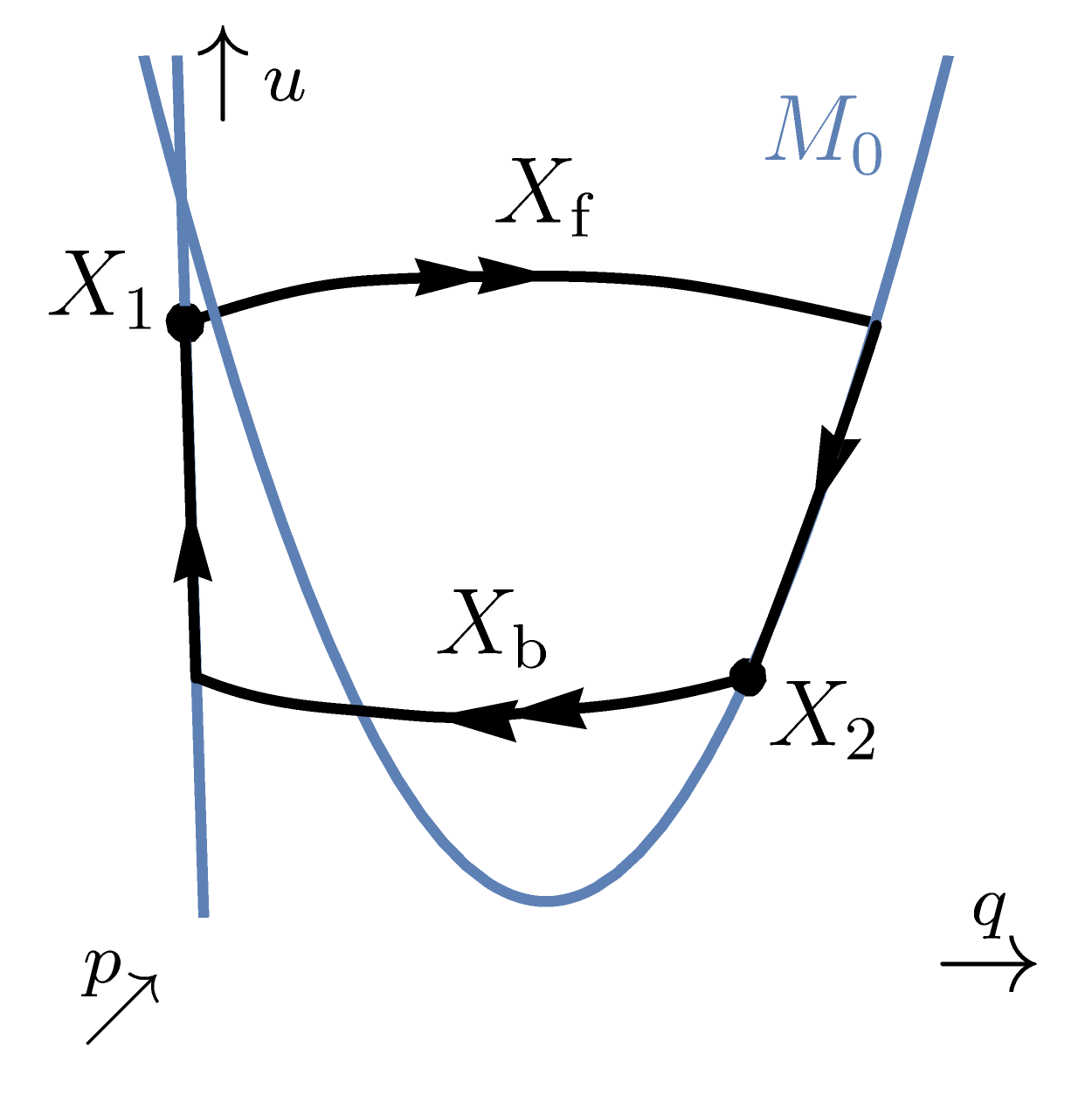} \hspace{1cm} \includegraphics[scale=0.5]{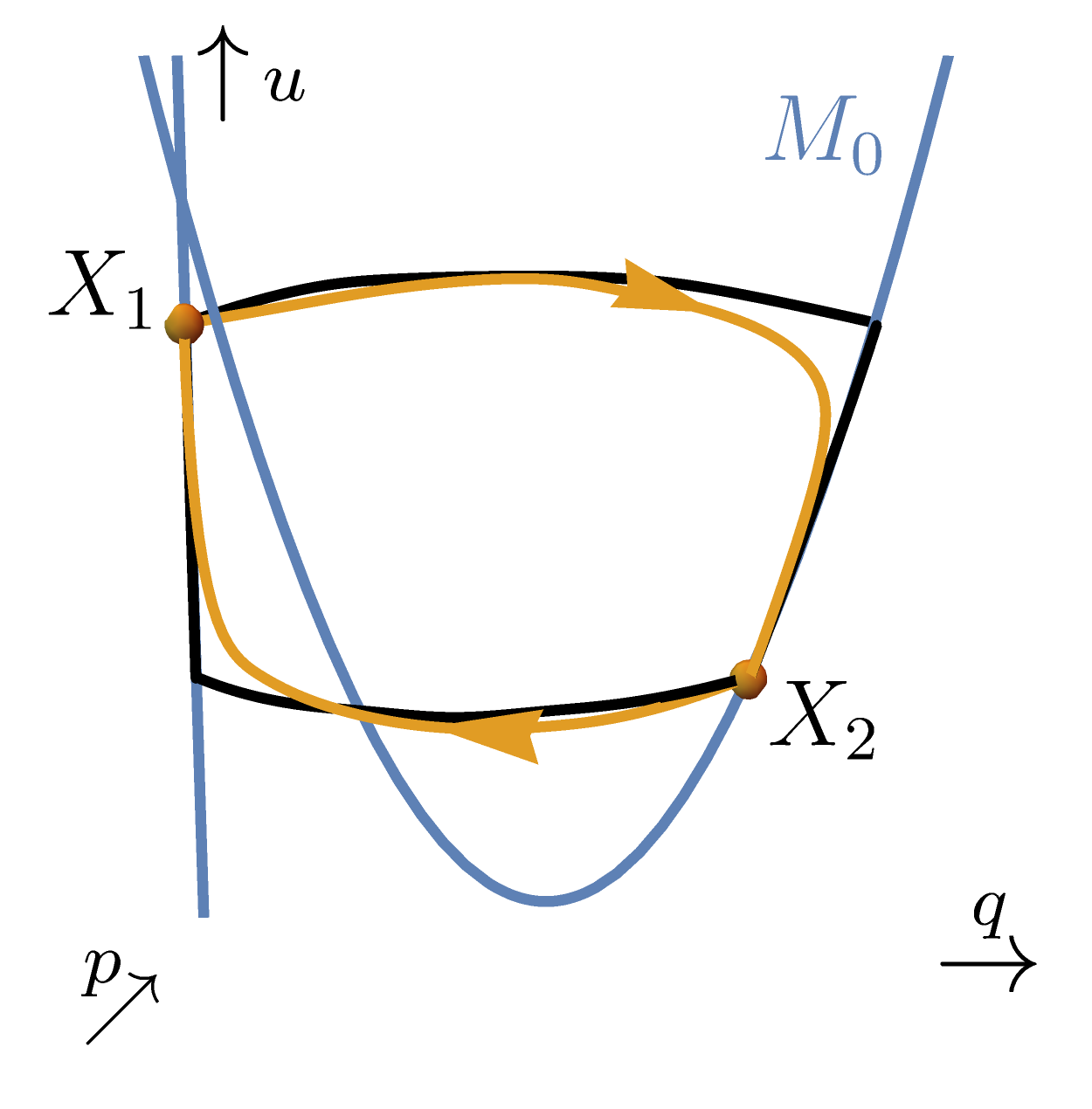}
\caption{The left panel depicts the singular heteroclinic loop consisting of the heteroclinic connections $X_{\f}$ and $X_{\bb}$ in the fast subsystem~\eqref{FAST}, which are established in Lemma~\ref{lem:fastconnect}, and the orbit segments in the slow subsystem~\eqref{SLOW} on the manifold $M_0$ connecting them. The right panel depicts an actual heteroclinic loop (orange) connecting the equilibria $X_1$ and $X_2$ in system~\eqref{SYS2} lying in the vicinity of the singular one (black).}
\label{fig:5}
\end{figure}

\subsection{Dynamics near the singular heteroclinic loop} \label{sec:persistence_manifolds}

We fix a Reynolds parameter $r > \frac{2}{3}$ and a speed $s < \min\{u_{\bb}(r),-\mu_0(r)\}$. By Corollary~\ref{cor:hetloop} there exists a singular heteroclinic loop at parameter values $\pmb \alpha = \pmb \alpha_0$, where we use the short-hand notation
$${\pmb \alpha} = (D,\mu,\varepsilon), \qquad {\pmb \alpha}_0 = {\pmb \alpha}_0(r) = (D_0(r),\mu_0(r),0).$$

In the remaining part of this section, we will prove that an actual heteroclinic loop between the equilibria $X_1$ and $X_2$ exists in the vicinity of the singular one for parameter values $\varepsilon > 0$ and $\pmb \alpha$ close to $\pmb \alpha_0$, see Figure~\ref{fig:5}. This requires knowledge about the dynamics in~\eqref{SYS2} near the singular heteroclinic loop. Of particular interest are the so-called stable and unstable manifolds of the equilibria $X_1$ and $X_2$. The \emph{stable manifold} $W_{\pmb \alpha}^{\s}(X_i)$ is the union of all orbits in~\eqref{SYS2} converging to $X_i$ as $\xi \to \infty$, whereas the \emph{unstable manifold} $W_{\pmb \alpha}^{\uu}(X_i)$ is the union of all orbits in~\eqref{SYS2} converging to $X_i$ as $\xi \to -\infty$. Thus, orbits in $W_{\pmb \alpha}^{\s}(X_1)$ and $W_{\pmb \alpha}^{\uu}(X_1)$ correspond to traveling waves in the pipe flow model~\eqref{SYS1}, whose profiles connect to the laminar state as $\xi \to +\infty$ and $\xi \to -\infty$, respectively. Similarly, orbits in $W_{\pmb \alpha}^{\s}(X_2)$ and $W_{\pmb \alpha}^{\uu}(X_2)$ relate to traveling waves, whose profiles connect to the turbulent steady state as $\xi \to +\infty$ and $\xi \to -\infty$, respectively.

Clearly, heteroclinic fronts connecting $X_1$ with $X_2$ must lie in $W_{\pmb \alpha}^{\uu}(X_1) \cap W_{\pmb \alpha}^{\s}(X_2)$ and heteroclinic backs connecting $X_2$ with $X_1$ lie in $W_{\pmb \alpha}^{\s}(X_1) \cap W_{\pmb \alpha}^{\uu}(X_2)$. Thus, establishing a heteroclinic loop in~\eqref{SYS2} boils down to identifying parameter values $\pmb \alpha$ for which the intersections $W_{\pmb \alpha}^{\uu}(X_1) \cap W_{\pmb \alpha}^{\s}(X_2)$ and $W_{\pmb \alpha}^{\s}(X_1) \cap W_{\pmb \alpha}^{\uu}(X_2)$ are both nonempty.

To understand how such intersections behave under perturbations, the dimension of the stable and unstable manifolds $W_{\pmb \alpha}^{\s/\uu}(X_i), \, i = 1,2$ as geometric objects is of interest. It is a basic result from dynamical systems theory that the dimension is determined by the eigenvalues of the linearization of~\eqref{SYS2} about the equilibrium $X_i$. In the following lemma, we establish that, for $\pmb \alpha$ close to $\pmb \alpha_0$, the linearization of~\eqref{SYS2} about $X_i$ possesses two real eigenvalues of opposite sign, which are bounded away from the imaginary axis, and one real negative eigenvalue converging to $0$ as $\varepsilon \downarrow 0$.

\begin{lemma}[The relative expansion of $X_i$] \label{lem:relative_expansion}
The equilibria $X_1$ and $X_2$ in~\eqref{SYS2} are hyperbolic and relatively expansive in the sense that the eigenvalues of the linearization of~\eqref{SYS2} about $X_i$ satisfy
\begin{equation*}
\lambda_1(X_i) < \lambda_2(X_i) < 0 < \lambda_3(X_i),  \ \text{ and } \lambda_2(X_i) + \lambda_3(X_i) > 0, \qquad i = 1,2,
\end{equation*}
when $\varepsilon >0$ is taken sufficiently small.
\end{lemma}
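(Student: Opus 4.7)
The plan is to exploit the fast–slow structure of~\eqref{SYS2}. First I would compute the Jacobian at a generic equilibrium $X_i = (q^*, 0, u^*)$, which must satisfy $f(q^*, u^*;r) = 0$ and $g(q^*,u^*) = 0$. Because $p^* = 0$ and $g(X_i) = 0$, the Jacobian has the block structure
\begin{equation*}
J(X_i;\pmb{\alpha}) \;=\; \begin{pmatrix} 0 & 1 & 0 \\ -D^{-1} f_q(X_i) & D^{-1}(u^* + \mu) & -D^{-1} f_u(X_i) \\ \varepsilon\,g_q(X_i)/(u^*-s) & 0 & \varepsilon\,g_u(X_i)/(u^*-s) \end{pmatrix},
\end{equation*}
so that at $\varepsilon = 0$ the last row vanishes, the spectrum reduces to $\{0\}$ together with the two eigenvalues of the top-left $2\times 2$ fast block.

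Next I would show that the fast block is a hyperbolic saddle at both $X_1$ and $X_2$. At $X_1 = (0,0,2)$ one checks directly that $f_q(X_1;r) = -0.1 < 0$, while at $X_2$ the defining identity $r + u_{\bb}(r) - 2 = (r+0.1)(q_{\bb,+}(r)-1)^2$ yields the clean expression $f_q(X_2;r) = -2 q_{\bb,+}(r)(r+0.1)(q_{\bb,+}(r)-1)$, which is strictly negative since $q_{\bb,+}(r) > 1$ for $r > \tfrac{2}{3}$. Thus the determinant of the fast block, namely $D^{-1} f_q(X_i)$, is negative, so its eigenvalues are real, bounded away from the imaginary axis, and of opposite sign. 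Call them $\lambda_-^{(i)} < 0 < \lambda_+^{(i)}$. By regular perturbation of the characteristic polynomial these two eigenvalues persist as smooth real hyperbolic eigenvalues $\lambda_1(X_i) = \lambda_-^{(i)} + O(\varepsilon)$ and $\lambda_3(X_i) = \lambda_+^{(i)} + O(\varepsilon)$ of $J(X_i;\pmb{\alpha})$ for all sufficiently small $\varepsilon > 0$.

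It remains to analyze the third eigenvalue $\lambda_2(X_i)$, which bifurcates from $0$. Writing $\lambda_2(X_i) = \varepsilon\,\nu(X_i) + O(\varepsilon^2)$ and substituting into the characteristic polynomial, a leading-order computation gives
\begin{equation*}
\nu(X_i) \;=\; \frac{g_u(X_i)\,f_q(X_i) - g_q(X_i)\,f_u(X_i)}{(u^* - s)\,f_q(X_i)},
\end{equation*}
which, via implicit differentiation of $f(q(u),u;r)=0$ along the critical manifold $M_0$, coincides with the linearization of the slow vector field $u_\tau = g(q(u),u)/(u-s)$ at $u = u^*$. Lemma~\ref{lem:sinks} asserts that $X_1$ and $X_2$ are sinks of the slow subsystem~\eqref{SLOW} whenever $s < u_{\bb}(r)$, which is guaranteed under our standing assumption $\zeta > \tfrac{2}{5}$ via $s = -\zeta - \mu$ and the bounds on $\mu_0(r)$ from Corollary~\ref{cor:hetloop}. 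Hence $\nu(X_i) < 0$ and $\lambda_2(X_i) < 0$ is of order $\varepsilon$ for $\varepsilon > 0$ sufficiently small.

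Combining the three asymptotics, for $\varepsilon$ small enough we have $\lambda_1(X_i)$ bounded above by a negative constant, $\lambda_2(X_i)$ negative and $O(\varepsilon)$ close to zero, and $\lambda_3(X_i)$ bounded below by a positive constant. This yields simultaneously the ordering $\lambda_1(X_i) < \lambda_2(X_i) < 0 < \lambda_3(X_i)$ and the relative-expansion inequality $\lambda_2(X_i) + \lambda_3(X_i) = \lambda_+^{(i)} + O(\varepsilon) > 0$. The main technical point is really the clean identification of the leading-order slow eigenvalue $\nu(X_i)$ with the reduced slow flow, which makes Lemma~\ref{lem:sinks} available to pin down its sign; everything else is routine perturbation of characteristic polynomials.
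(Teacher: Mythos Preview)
Your proof is correct and follows essentially the same fast--slow decomposition as the paper: the fast block gives the two hyperbolic eigenvalues $\lambda_1,\lambda_3$ bounded away from zero, and Lemma~\ref{lem:sinks} pins down the sign of the $O(\varepsilon)$ slow eigenvalue $\lambda_2$. The only stylistic difference is that you carry out the analysis via an explicit Jacobian and a characteristic-polynomial expansion to extract $\nu(X_i)$, whereas the paper argues more geometrically by passing to the rescaled system~\eqref{SYS22} and noting that its eigenvalues are $\varepsilon^{-1}$ times those of~\eqref{SYS2}, so that the bounded negative eigenvalue of the slow subsystem directly yields $\lambda_2(X_i)<0$.
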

\begin{proof}
Taking the limit $\varepsilon \downarrow 0$ in system~\eqref{SYS2} we arrive at the fast subsystem~\eqref{FAST}. By Lemma~\ref{lem:fastconnect} the equilibria $X_1$ and $X_2$ are hyperbolic saddles in~\eqref{FAST} in the layers $u = 2$ and $u=u_{\bb}(r)$, respectively. Hence, for $\varepsilon>0$ sufficiently small, the linearization of~\eqref{SYS2} about $X_i$ possesses two real eigenvalues $\lambda_1(X_i)$ and $\lambda_3(X_i)$ of opposite sign, which are bounded away from $0$ as $\varepsilon \downarrow 0$, and one real eigenvalue $\lambda_2(X_i)$ converging to $0$ as $\varepsilon \downarrow 0$.

Recall that upon introducing the stretched spatial coordinate $\tau = \varepsilon \xi$ system~\eqref{SYS2} transforms into~\eqref{SYS22}. Thus, the eigenvalues of the linearizations of~\eqref{SYS2} and~\eqref{SYS22} about $X_i$ are also related through rescaling by a factor $\varepsilon$. Taking the limit $\varepsilon \downarrow 0$ in system~\eqref{SYS22} we arrive at the slow subsystem~\eqref{SLOW}, for which the equilibrium $X_i$ is a sink by Lemma~\ref{lem:sinks}. Thus, provided $\varepsilon > 0$ is sufficiently small, the linearization of~\eqref{SYS22} about $X_i$ possesses a negative eigenvalue, which stays bounded as $\varepsilon \downarrow 0$. Hence, this eigenvalue must be $\varepsilon^{-1} \lambda_2(X_i)$. This proves the claim.
\end{proof}

Thus, the stable manifold $W_{\pmb \alpha}^{\s}(X_i)$ is two-dimensional and the unstable manifold $W_{\pmb \alpha}^{\uu}(X_i)$ is one-dimensional.
The eigenvalues $\lambda_2(X_i)$ and $\lambda_3(X_i)$ are called the \emph{principal stable and unstable eigenvalues} with corresponding \emph{principal stable and unstable eigenvectors} $e_2(X_i)$ and $e_3(X_i)$, respectively.

We aim to establish a heteroclinic loop by perturbing off the singular one. As outlined before, the singular heteroclinic loop, which arises at $\pmb \alpha = \pmb \alpha_0$, is not an actual heteroclinic loop connecting the equilibria $X_1$ and $X_2$ in~\eqref{SYS2}. Indeed, setting $\varepsilon = 0$ in~\eqref{SYS2} yields the fast subsystem~\eqref{FAST} in which the dynamics is layered, so that no heteroclinic connections between $X_1$ and $X_2$ can exist. Although this seems a serious obstruction to obtain a heteroclinic loop by perturbing off the singular one, it is still possible using \emph{geometric singular perturbation theory (GSPT)}, cf.~\cite{FEN2,Kaper,KuehnBook,Jones}. The crucial observation is that the singular heteroclinic loop is an actual heteroclinic loop connecting the \emph{sets} $M_1$ and $M_2$. Indeed, at $\pmb \alpha = \pmb \alpha_0$, there exist, by Corollary~\ref{cor:hetloop}, forward and backward heteroclinic connections between the line $M_1$ and the (right branch of) the parabola $M_2$, see Figure~\ref{fig:6}.

Hence, it makes sense to define stable and unstable manifolds associated with the relevant segments in $M_1$ and $M_2$. Thus, we set
\begin{align*} Z_{1}(u) = (0,0,u), \qquad Z_{2}(u) = \left(1 + \sqrt{\frac{r+u-2}{r+0.1}},0,u\right),\end{align*}
and take compact subsets
$$K_{1,0} = \{Z_{1}(u) : u \in U_1\} \subset M_1, \qquad K_{2,0} = \{Z_{2}(u) : u \in U_2\} \subset M_2,$$
with $U_1, U_2 \subset \R$ such that the orbit segments of the singular heteroclinic loop on the line $M_1$ and on the right branch of the parabola $M_2$ are strictly contained in $K_{1,0}$ and $K_{2,0}$, respectively. The \emph{stable manifold $W_{\pmb \alpha_0}^{\s}(K_{i,0})$ of the set} $K_{i,0}$ consists of all orbits (locally) converging to $K_{1,0}$ as $\xi \to \infty$, whereas the \emph{unstable manifold $W_{\pmb \alpha_0}^{\uu}(K_{i,0})$ of the set} $K_{i,0}$ consists of all orbits (locally) converging to $K_{i,0}$ as $\xi \to -\infty$. Clearly, the existence of the singular heteroclinic loop, cf.~Corollary~\ref{cor:hetloop}, implies that the unstable manifold $W_{\pmb \alpha_0}^{\uu}(K_{2,0})$ intersects the stable manifold $W_{\pmb \alpha_0}^{\s}(K_{1,0})$ in system~\eqref{FAST} along the heteroclinic back $X_{\bb}(\xi)$, and the unstable manifold $W_{\pmb \alpha_0}^{\uu}(K_{1,0})$ intersects $W_{\pmb \alpha_0}^{\s}(K_{2,0})$ along the heteroclinic front $X_{\f}(\xi)$. We prove in the upcoming sections that these intersections are transversal, i.e.~the singular heteroclinic loop is \emph{nondegenerate}, which is an important prerequisite for applying the results of Deng~\cite{Deng91a}.

\begin{figure}
\centering
\includegraphics[scale=0.5]{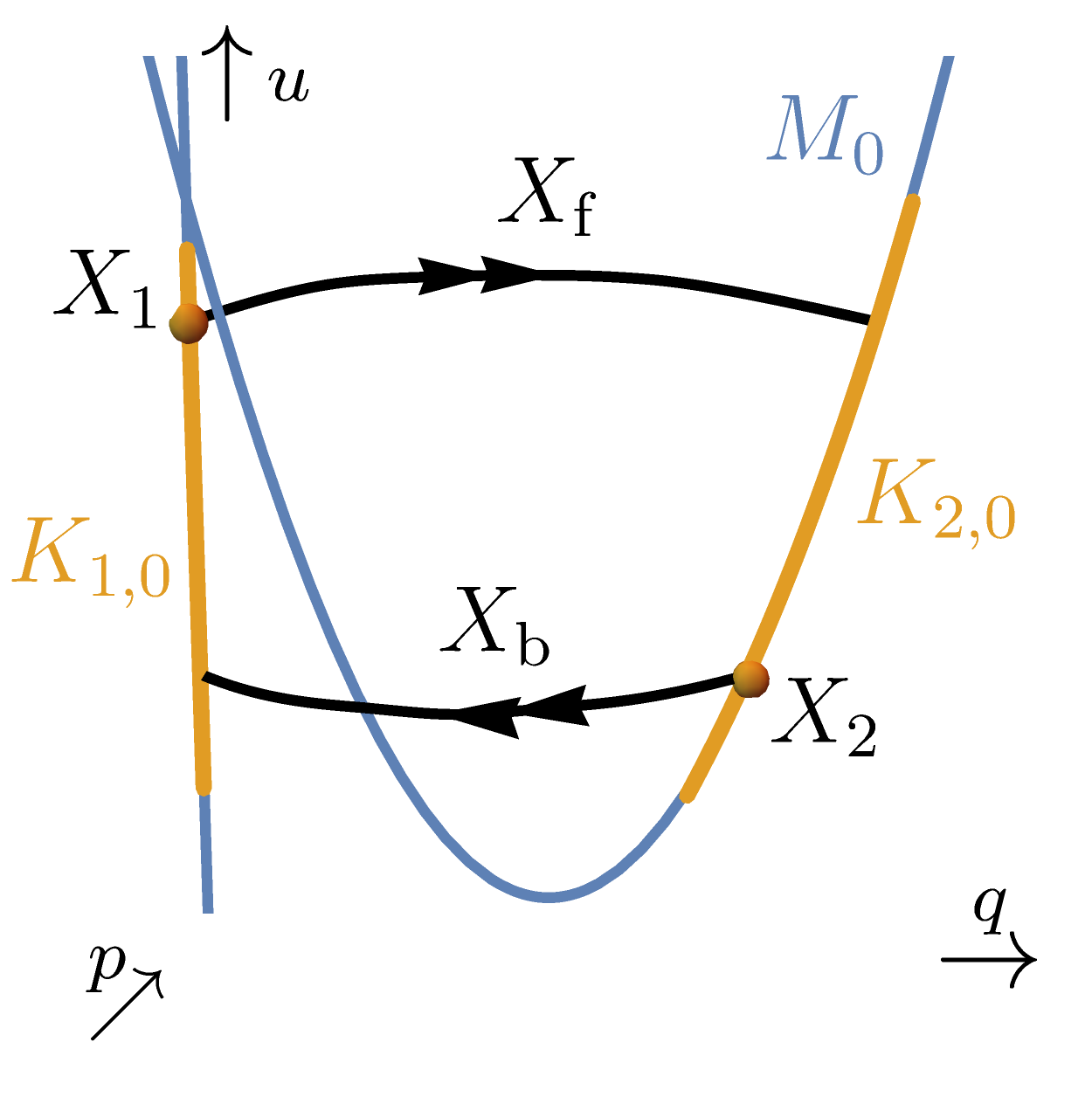}
\caption{Depicted are the compact sets $K_{1,0}$ and $K_{2,0}$ of equilibria in the fast subsystem~\eqref{FAST}, which lie on the critical manifold $M_0$, and the heteroclinic connections $X_{\f}$ and $X_{\bb}$, established in Lemma~\ref{lem:fastconnect}, connecting them.}
\label{fig:6}
\end{figure}

We now employ GSPT to describe the sets $K_{1,0}$ and $K_{2,0}$ and the associated stable and unstable manifolds for $\pmb \alpha$ close to $\pmb \alpha_0$. Linearizing the fast subsystem~\eqref{FAST} about each equilibrium point in $K_{i,0}$ for $i = 1,2$, yields two nonzero eigenvalues so that it is a hyperbolic saddle in the associated layers of the fast subsystem~\eqref{FAST}, i.e.~the compact manifold $K_{i,0}$ is \emph{normally hyperbolic} --- the calculation is similar to the one in the proof of Lemma~\ref{lem:fastconnect}.
GSPT now implies that the manifold $K_{i,0}$ persists as a (locally) invariant one-dimensional manifold $K_{i,{\pmb \alpha}}$ in system~\eqref{SYS2}, which depends smoothly on ${\pmb \alpha}$, provided ${\pmb \alpha}$ is close to ${\pmb \alpha_0}$.

The equilibrium $X_i \in K_{i,0}$ is a global attractor for the dynamics of the slow subsystem~\eqref{SLOW} restricted to $K_{i,0}$ for $i = 1,2$, cf.~Lemma~\ref{lem:sinks}. Since $X_i$ is also an equilibrium of system~\eqref{SYS2} for ${\pmb \alpha} \neq {\pmb \alpha}_0$, it must serve as a global attractor for the dynamics of~\eqref{SYS2} restricted to the invariant manifold $K_{i,{\pmb \alpha}}$, too. Therefore, GSPT implies that the stable manifold $W^{\s}_{\pmb \alpha}(X_i)$ of the equilibrium $X_i$ in system~\eqref{SYS2} coincides for $\varepsilon > 0$ with the stable manifold $W^{\s}_{\pmb \alpha}(K_{i,\pmb \alpha})$ of the invariant manifold $K_{i,\pmb \alpha}$ for $i = 1,2$.

The (un)stable manifold $W^{\uu/\s}_{\pmb \alpha}(K_{i,\pmb \alpha})$ of the invariant manifold $K_{i,\pmb \alpha}$ in~\eqref{SYS2} depends smoothly on ${\pmb \alpha}$ for ${\pmb \alpha}$ close to ${\pmb \alpha}_0$, and is at ${\pmb \alpha} = {\pmb \alpha_0}$ given by the two-dimensional union of (un)stable fibers
\begin{align} W_{\pmb \alpha_0}^{\uu/\s}(K_{i,0}) = \bigcup_{u \in U_i} W_{\pmb \alpha_0}^{\uu/\s}(Z_{i}(u)), \label{fibering}\end{align}
where $W_{\pmb \alpha_0}^{\uu/\s}(Z_{i}(u))$ is the one-dimensional (un)stable manifold of the equilibrium $Z_i(u) \in K_{i,0}$ of the fast subsystem~\eqref{FAST}. Clearly, this implies that $W^{\uu/\s}_{\pmb \alpha}(K_{i,\pmb \alpha})$ is a two-dimensional geometric object.

\subsection{Melnikov analysis} \label{sec:Meln}

As outlined in the previous subsection, our approach is to establish a heteroclinic loop in~\eqref{SYS2} by identifying parameter values $\pmb \alpha$ close to $\pmb \alpha_0$ for which intersections between $W_{\pmb \alpha}^{\uu}(X_1)$ and $W_{\pmb \alpha}^{\s}(X_2)$ and between $W_{\pmb \alpha}^{\s}(X_1)$ and $W_{\pmb \alpha}^{\uu}(X_2)$ exist, where we exploit that $W_{\pmb \alpha}^{\s}(X_i), \, i = 1,2$ coincides with the stable manifold $W_{\pmb \alpha}^{\s}(K_{i,\pmb \alpha})$ of the locally invariant set $K_{i,\pmb \alpha}$ for $\varepsilon > 0$.

To locate such intersections we employ Melnikov's method, cf.~\cite{Holmes,Melnikov,PAL,Szmolyan1}. Let us introduce the necessary mathematical framework, see also Figure~\ref{fig:7}. Thus, let $\Sigma$ be a plane perpendicular to the heteroclinic front $X_{\f}(\xi)$ at $\xi = 0$. Then, the one-dimensional unstable manifold $W^\uu_{\pmb \alpha_0}(X_1)$ intersects $\Sigma$ transversely at $X_{\f}(0)$. Hence, by smooth dependency on parameters, there exists a unique intersection point $X_{\pmb \alpha}^\uu$ between $W^\uu_{\pmb \alpha}(X_1)$ and $\Sigma$  for ${\pmb \alpha}$ close to ${\pmb \alpha}_0$ satisfying $X_{\pmb \alpha_0}^\uu = X_{\f}(0)$. On the other hand, by~\eqref{fibering} the intersection of the two-dimensional stable manifold $W_{\pmb \alpha_0}^{\s}(K_{2,0})$ with $\Sigma$ is a curve through the point $X_{\f}(0)$ parameterized by $u$. Consequently, the vector $e_{\f} = (-p_{\f}'(0;r),q_{\f}'(0;r),0) \in \Sigma$, which is perpendicular to the tangent vector $X_{\f}'(0)$, is transverse to the tangent vector of the curve $\Sigma \cap W_{\pmb \alpha_0}^{\s}(K_{2,0})$ at $X_{\f}(0)$. Moreover, by smooth dependency on parameters, $\Sigma \cap W_{\pmb \alpha}^{\s}(K_{2,\pmb \alpha})$ is also a one-dimensional curve depending smoothly on ${\pmb \alpha}$ for ${\pmb \alpha}$ close to ${\pmb \alpha}_0$. So, the line $\ell_{\pmb \alpha}\subset \Sigma$ through $X_{\pmb \alpha}^\uu$ parallel to $e_{\f}$ intersects the curve $\Sigma \cap W_{\pmb \alpha}^{\s}(K_{2,0})$ in a unique point $X_{\pmb \alpha}^\s$, provided ${\pmb \alpha}$ is close to ${\pmb \alpha_0}$, with $X_{\pmb \alpha_0}^\s = X_{\f}(0)$. We have
\begin{align*}
X_{\pmb \alpha}^\uu - X_{\pmb \alpha}^\s = Q_{\f}({\pmb \alpha};r)e_{\f},
\end{align*}
for some smooth function $Q_{\f} \colon \mathcal U \times (\frac{2}{3},\infty) \to \R$, where $\mathcal U \subset \R^3$ is a small neighborhood of ${\pmb \alpha_0}$.

\begin{figure}
\centering
\includegraphics[scale=0.5]{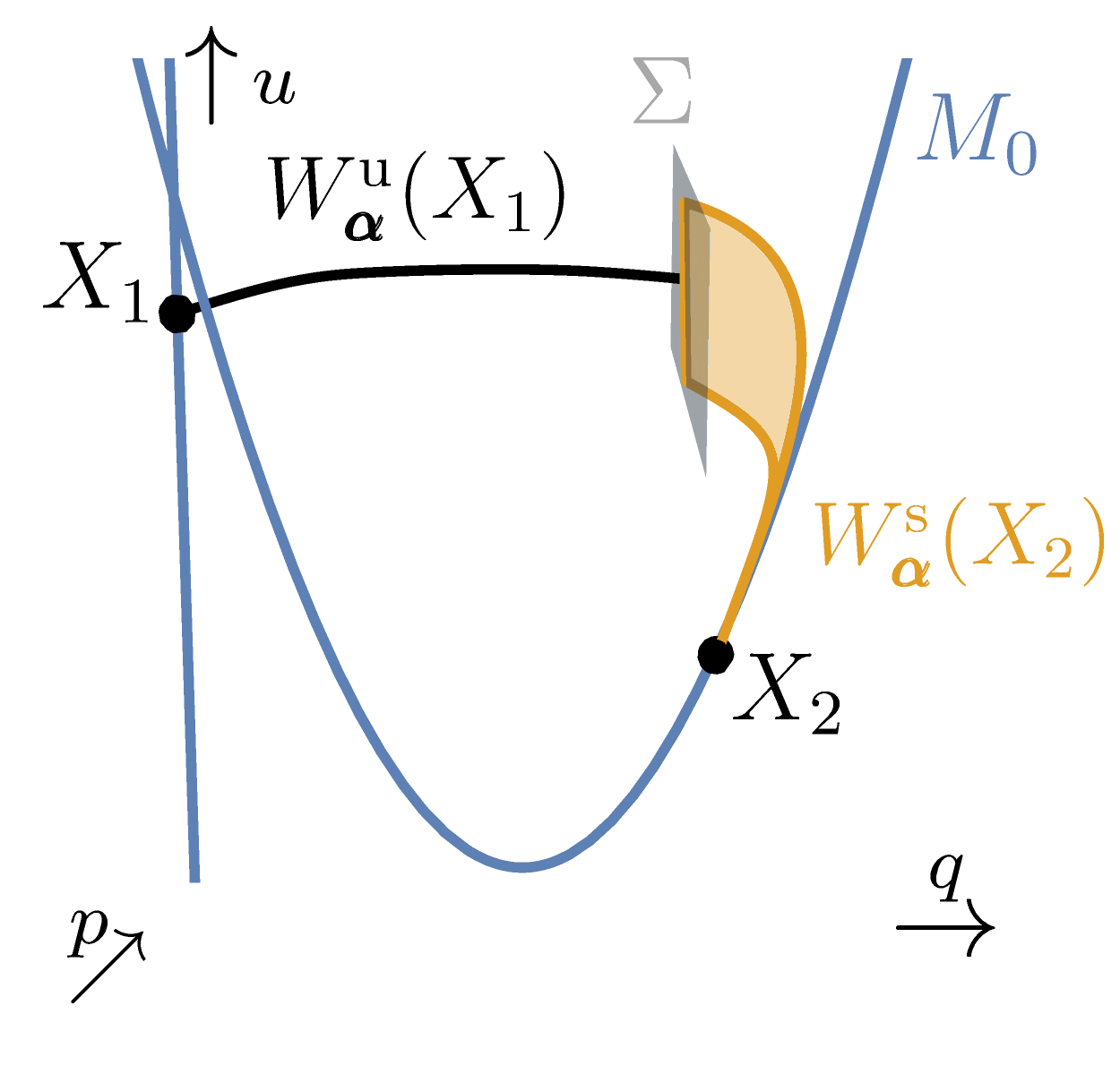} \hspace{1cm} \includegraphics[scale=0.5]{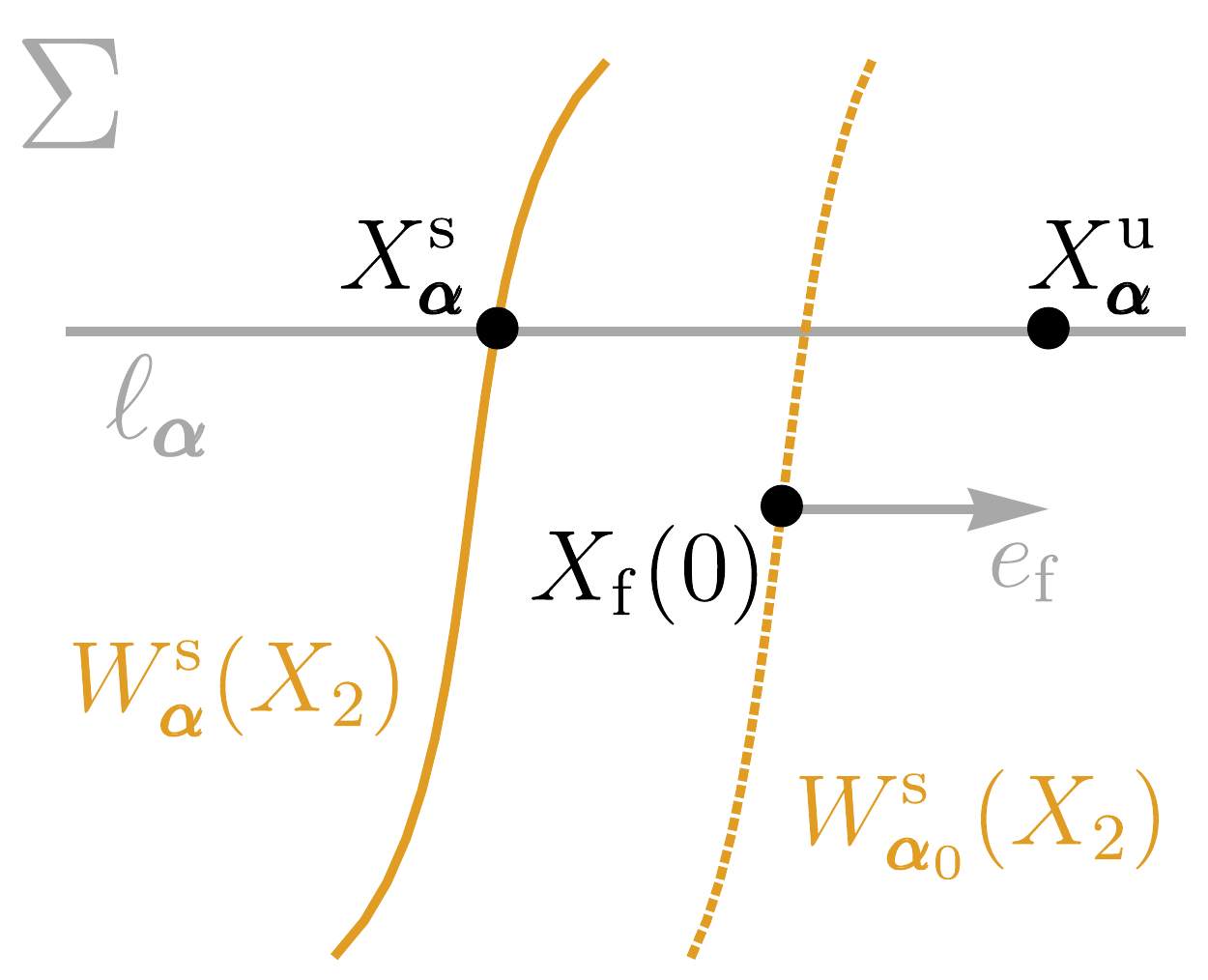}
\caption{The left panel shows the equilibria $X_1$ and $X_2$ on the critical manifold $M_0$ in the dynamical system~\eqref{SYS2}. A top view of the section $\Sigma$, which is perpendicular to the heteroclinic front $X_{\f}(\xi)$ at $\xi = 0$, is depicted in the right panel. The one-dimensional unstable manifold $W^\uu_{\pmb \alpha}(X_1)$ intersects $\Sigma$ transversally in a unique point $X_{\pmb \alpha}^\uu$ satisfying $X_{\pmb \alpha_0}^\uu = X_{\f}(0)$. Moreover, the two-dimensional stable manifold $W_{\pmb \alpha}^{\s}(X_2)$ intersects $\Sigma$ in a curve (orange). The vector $e_{\f}$ is transversal to the curve $W_{\pmb \alpha_0}^{\s}(X_2) \cap \Sigma$ (dashed) at the point $X_{\f}(0)$. The line $\ell_{\pmb \alpha}$ through $X_{\pmb \alpha}^{\uu}$ is parallel to $e_{f}$ and intersects $W_{\pmb \alpha}^{\s}(X_2) \cap \Sigma$ in the point $X_{\pmb \alpha}^{\s}$, where $X_{\pmb \alpha_0}^\s = X_{\pmb \alpha_0}^\uu = X_{\f}(0)$. The Melnikov function $Q_{\f}(\pmb \alpha)$ measures the distance between the points $X_{\pmb \alpha}^\s$ and $X_{\pmb \alpha}^\uu$.}
\label{fig:7}
\end{figure}
The roots of this so-called \emph{Melnikov function} $Q_{\f}(\cdot;r)$ coincide with the parameter values $\pmb \alpha$ for which an intersection of the unstable manifold $W^\uu_{\pmb \alpha}(X_1)$ with the stable manifold $W_{\pmb \alpha}^\s(K_{2,{\pmb \alpha}})$ exists. Since we have $W_{\pmb \alpha}^\s(K_{2,{\pmb \alpha}}) = W^{\s}_{\pmb \alpha}(X_2)$ for $\varepsilon > 0$, such an intersection for $\varepsilon > 0$ thus yields a heteroclinic front in~\eqref{SYS2} connecting the equilibrium $X_1$ to $X_2$. Analogously, one constructs a Melnikov function $Q_{\bb}(\pmb \alpha;r)$, whose roots for $\varepsilon > 0$ correspond to the parameter values $\pmb \alpha$ for which an intersection of the unstable manifold $W^\uu_{\pmb \alpha}(X_2)$ with the stable manifold $W_{\pmb \alpha}^\s(K_{1,{\pmb \alpha}}) = W^{\s}_{\pmb \alpha}(X_1)$ exists, yielding a heteroclinic back in~\eqref{SYS2} connecting $X_2$ to $X_1$.

Hence, to establish a heteroclinic loop one must solve $Q_{\f}(\pmb \alpha;r) = 0 = Q_{\bb}(\pmb \alpha;r)$ with $\pmb \alpha = (D,\mu,\varepsilon)$ and $\varepsilon > 0$. Our approach in~\S\ref{sec:hetloop} will be to apply the implicit function theorem and show that these two equations can be solved for the parameters $D$ and $\mu$ for ${\pmb \alpha}$ close to $\pmb \alpha_0 = \pmb \alpha_0(r)$ yielding functions $D(\varepsilon;r)$ and $\mu(\varepsilon;r)$ with $D(0;r) = D_0(r)$ and $\mu(0;r) = \mu_0(r)$. To verify the conditions for the implicit function theorem, one needs to compute the derivatives $\partial_D Q_{\f/\bb}(\pmb \alpha_0(r);r)$ and $\partial_\mu Q_{\f/\bb}(\pmb \alpha_0(r);r)$. Moreover, we later need the signs of $\partial_D Q_{\f/\bb}(\pmb \alpha_0(r);r)$ and $\partial_\mu Q_{\f/\bb}(\pmb \alpha_0(r);r)$ to determine whether the heteroclinic loop is single or double twisted, as elaborated in the proofs of Theorems~\ref{thm:single_twist} and~\ref{thm:double_twist} in~\S\ref{sec:proof_twists}.

We write the dynamical system~\eqref{SYS2} in the abstract form
\begin{align*} \partial_\xi X = F(X;\pmb \alpha,r),\end{align*}
with $F \colon \R^3 \times \mathcal U \times (\frac{2}{3},\infty) \to \R^3$. Following~\cite{Holmes,Melnikov,PAL}, one finds that the derivatives of $Q_j(\pmb \alpha;r)$ are explicitly given by the so-called \emph{Melnikov integrals}
\begin{align} \partial_{i} Q_{j}(\pmb \alpha_0(r);r) = -\int_\R \Psi_{j}(\xi;r) \cdot \partial_{i} F(X_{j}(\xi);\pmb \alpha_0(r),r) \de \xi, \qquad i = D,\mu, \, j = \f,\bb,\label{melnint}\end{align}
where $\Psi_{j}(\xi;r)$ is the solution to the adjoint variational equation
\begin{align*} \partial_\xi \Psi = -\left(\partial_X F(X_{j}(\xi);\pmb \alpha_0(r),r)\right)^\top \Psi,\end{align*}
about the heteroclinic $X_j(\xi)$ with initial condition $\Psi_{j}(0;r) = e_{j}$. One readily checks that
\begin{align} \Psi_{\f}(\xi;r) = \re^{-\frac{\mu_0(r)+2}{D_0(r)} \xi} \left(-p_{\f}'(\xi),p_{\f}(\xi;r),0\right), \qquad \Psi_{\bb}(\xi;r) = \re^{-\frac{\mu_0(r)+u_{\bb}(r)}{D_0(r)} \xi} \left(-p_{\bb}'(\xi),p_{\bb}(\xi;r),0\right), \label{adjointsol}\end{align}
cf.~\eqref{frontsol} and~\eqref{backsol}. We compute the Melnikov integrals~\eqref{melnint} in~\S\ref{meln:compfront} and~\S\ref{meln:compback}.

To apply the theory of Deng and prove Theorems~\ref{thm:single_twist} and~\ref{thm:double_twist}
we need to show that the singular heteroclinic loop satisfies certain twisting conditions. That is, we need to understand the orientation of the intersection of the two-dimensional unstable manifold $W_{\pmb \alpha_0}^{\uu}(K_{1,0})$ with the two-dimensional stable manifold $W_{\pmb \alpha_0}^{\s}(K_{2,0})$ along the heteroclinic front $X_{\f}(\xi)$ in the fast subsystem~\eqref{FAST}, and similarly for the intersection of $W_{\pmb \alpha_0}^{\uu}(K_{2,0})$ with $W_{\pmb \alpha_0}^{\s}(K_{1,0})$ along the heteroclinic back $X_{\bb}(\xi)$. Since $u$ can be regarded as a parameter in the fast subsystem~\eqref{FAST}, the problem reduces to the study of the Melnikov functions $\mathcal{Q}_{\f}(u,D,\mu)$ associated with the heteroclinic connections $X_{\f}(\xi)$ and $X_{\bb}(\xi)$ in the two-dimensional system
\begin{align} \label{FASTR}
\begin{split}
q_\xi &= p,\\
p_\xi &= D^{-1}\left((u+\mu)p - f(q,u;r)\right),
\end{split}
\end{align}
arising at $(u,D,\mu) = (u_{\f}(r),D_0(r),\mu_0(r))$, with $u_{\f}(r) = 2$, and at $(u,D,\mu) = (u_{\bb}(r),D_0(r),\mu_0(r))$, respectively. We proceed as before. Thus, we write the dynamical system~\eqref{FASTR} in the abstract form
\begin{align*} \partial_\xi Y = \mathcal F(Y;u,D,\mu,r),\end{align*}
with $\mathcal F \colon \R^2 \times \R \times \mathcal V \times (\frac{2}{3},\infty) \to \R^2$, where $\mathcal V  \subset \R^2$ is a small neighborhood of $(D_0(r),\mu_0(r))$, and find that the associated derivatives are explicitly given by
\begin{align} \partial_{i} \mathcal Q_{j}(u_{j}(r),D_0(r),\mu_0(r);r) = -\int_\R \widetilde{\Psi}_{j}(\xi;r) \cdot \partial_{i} \mathcal F(Y_{j}(\xi);u_{j}(r),D_0(r),\mu_0(r),r) \de \xi, \label{melnint2}\end{align}
for $i = u,D,\mu$ and $j = \f,\bb$ with
\begin{align} \widetilde{\Psi}_{\f}(\xi;r) = \re^{-\frac{\mu_0(r)+2}{D_0(r)} \xi} \left(-p_{\f}'(\xi),p_{\f}(\xi;r)\right), \qquad \widetilde{\Psi}_{\bb}(\xi;r) = \re^{-\frac{\mu_0(r)+u_{\bb}(r)}{D_0(r)} \xi} \left(-p_{\bb}'(\xi),p_{\bb}(\xi;r)\right). \label{adjointsol2}\end{align}
Clearly, it holds
\begin{align*} \partial_{i} \mathcal Q_{j}(u_{j}(r),D_0(r),\mu_0(r);r) = \partial_{i} Q_{j}(\pmb \alpha_0(r);r),\end{align*}
for $i = D,\mu$ and $j = \f,\bb$, cf.~\eqref{melnint}. Thus, it remains to determine $\partial_u \mathcal Q_{j}(u_{j}(r),D_0(r),\mu_0(r);r)$, whose computation can also be found in the upcoming sections~\S\ref{meln:compfront} and~\S\ref{meln:compback}.

\subsubsection{Computation of Melnikov integrals along the heteroclinic front} \label{meln:compfront}

Using~\eqref{melnint} and~\eqref{adjointsol} we compute
\begin{align*} \partial_\mu Q_{\f}(\pmb \alpha_0(r);r) &= -\int_\R \frac{\re^{-\frac{\mu_0(r)+2}{D_0(r)} \xi} p_{\f}(\xi;r)^2}{D_0(r)} \de \xi = \frac{q_{\f,+}(r)^2 (r+0.1)}{D_0(r)^2} \widehat{M}_{\f}(r),\end{align*}
with
\begin{align}
\widehat{M}_{\f}(r) = -\frac{q_{\f,+}(r)D_0(r)\sqrt{2}\left(q_{\f,+}(r)-2q_{\f,-}(r)\right)}{2(\mu_0(r) + 2)} \int_\R \re^{-\sqrt{2}\left(\frac{1}{2} - \frac{q_{\f,-}(r)}{q_{\f,+}(r)}\right)\chi} \phi'(\chi)^2 \de \chi < 0. \label{defhatMf}
\end{align}

Moreover, using~\eqref{cond:front},~\eqref{frontsol},~\eqref{melnint} and~\eqref{adjointsol} we obtain
\begin{align*} \partial_D Q_{\f}(\pmb \alpha_0(r);r) &= \int_\R \frac{p_{\f}(\xi;r)\left((\mu_0(r) + 2)p_{\f}(\xi;r) - q_{\f}(\xi;r)\left(r - (r+0.1)(q_{\f}(\xi;r) - 1)^2\right)\right)}{\re^{\frac{\mu_0(r)+2}{D_0(r)} \xi}D_0(r)^2} \de \xi\\
&= \frac{q_{\f,+}(r)^2 (r+0.1)}{D_0(r)^2} M_{\f}(r),
\end{align*}
with
\begin{align}
\begin{split}
& M_{\f}(r) = \frac{q_{\f,+}(r)}{2}\sqrt{2}\left(q_{\f,+}(r)-2q_{\f,-}(r)\right) \int_\R \re^{-\sqrt{2}\left(\frac{1}{2} - \frac{q_{\f,-}(r)}{q_{\f,+}(r)}\right)\chi} \phi'(\chi)^2 \de \chi\\
&\quad + \, \frac{0.1}{r+0.1} \int_\R \re^{-\sqrt{2}\left(\frac{1}{2} - \frac{q_{\f,-}(r)}{q_{\f,+}(r)}\right)\chi} \phi'(\chi)\phi(\chi) \de \chi - 2q_{\f,+}(r)\int_\R \re^{-\sqrt{2}\left(\frac{1}{2} - \frac{q_{\f,-}(r)}{q_{\f,+}(r)}\right)\chi} \phi'(\chi)\phi(\chi)^2 \de \chi \\
&\quad + \, q_{\f,+}(r)^2 \int_\R \re^{-\sqrt{2}\left(\frac{1}{2} - \frac{q_{\f,-}(r)}{q_{\f,+}(r)}\right)\chi} \phi'(\chi)\phi(\chi)^3 \de \chi.
\end{split}\label{defMfr}
\end{align}
Although the integrals in $M_\f(r)$ can be computed explicitly, we refrain from doing so as the obtained expressions are highly involved. Instead, we determine $M_\f(r)$ in the limit $r \to \infty$. Using~\eqref{ubvalues} and~\eqref{qlims} we find
\begin{align*} \lim_{r \to \infty} M_{\f}(r) = \frac{1}{3}.\end{align*}
Hence, for $r > \frac{2}{3}$ sufficiently large, $\partial_D Q_{\f}(\pmb \alpha_0(r);r)$ is positive. The plot in Figure~\ref{fig:11} suggests that $\partial_D Q_{\f}(\pmb \alpha_0(r);r)$ is in fact positive for all $r > \frac{2}{3}$.

As outlined in~\S\ref{sec:Meln}, the third Melnikov integral to compute along the heteroclinic front is $\partial_u \mathcal Q_{\f}(u_{\f}(r),D_0(r),\mu_0(r);r)$. Thus, using~\eqref{cond:front},~\eqref{frontsol},~\eqref{melnint2} and~\eqref{adjointsol2} we obtain
\begin{align*} \partial_u \mathcal Q_{\f}(u_{\f}(r),D_0(r),\mu_0(r);r) &= -\int_\R \frac{\re^{-\frac{\mu_0(r)+2}{D_0(r)} \xi} p_{\f}(\xi;r)\left(p_{\f}(\xi;r) - q_{\f}(\xi;r)\right)}{D_0(r)} \de \xi = \frac{q_{\f,+}(r)^2}{D_0(r)} \widetilde{M}_{\f}(r),
\end{align*}
with
\begin{align}
\begin{split}
\widetilde{M}_{\f}(r) &= -q_{\f,+}(r)\sqrt{\frac{r+0.1}{D_0(r)}} \int_\R \re^{-\sqrt{2}\left(\frac{1}{2} - \frac{q_{\f,-}(r)}{q_{\f,+}(r)}\right)\chi} \phi'(\chi)^2 \de \chi\\ 
&\quad + \,\int_\R \re^{-\sqrt{2}\left(\frac{1}{2} - \frac{q_{\f,-}(r)}{q_{\f,+}(r)}\right)\chi} \phi'(\chi)\phi(\chi) \de \chi.
\end{split} \label{defMtf}
\end{align}
Again, we refrain from computing $\smash{\widetilde{M}}_{\f}(r)$ explicitly, and determine its value in the limit $r \to \infty$. Using~\eqref{limitsspeed2},~\eqref{ubvalues} and~\eqref{qlims} we find
\begin{align*} \lim_{r \to \infty} \widetilde{M}_{\f}(r) = -\infty.\end{align*}
Hence, for $r > \frac{2}{3}$ sufficiently large, $\partial_u \mathcal Q_{\f}(u_{\f}(r),D_0(r),\mu_0(r);r)$ is negative. Numerical computations, see also the plot in Figure~\ref{fig:12}, suggest that $\partial_u \mathcal Q_{\f}(u_{\f}(r),D_0(r),\mu_0(r);r)$ is negative for all $r > 0.72946$, whereas it is positive for $r \in (\frac{2}{3}, 0.72946)$, see also Hypothesis~\ref{hyp}.

\begin{figure}
\centering
\begin{subfigure}[b]{0.45\textwidth}
\centering
\includegraphics[scale=0.6]{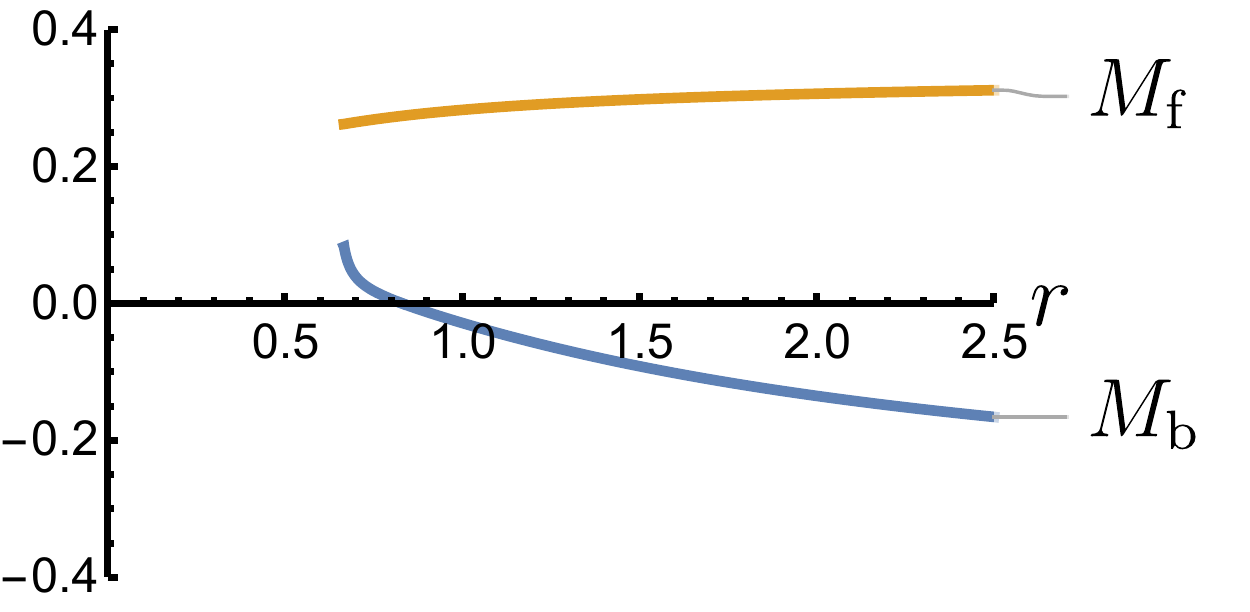}
\caption{Plots of $M_{\f}(r)$ and $M_{\bb}(r)$.}
\label{fig:11}
\end{subfigure}
\hfill
\begin{subfigure}[b]{0.45\textwidth}
\centering
\includegraphics[scale=0.6]{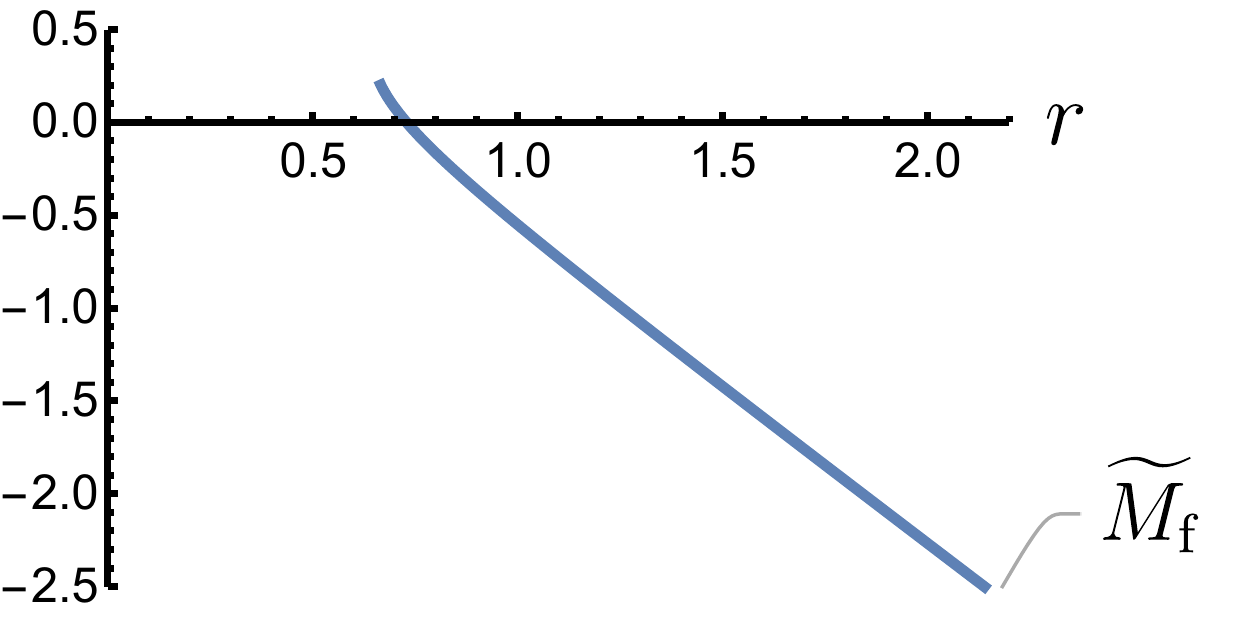}
\caption{Plot of $\smash{\widetilde{M}}_{\f}(r)$.}
\label{fig:12}
\end{subfigure}\\
\vspace{0.2cm}
\begin{subfigure}[b]{0.54\textwidth}
\centering
\includegraphics[scale=0.6]{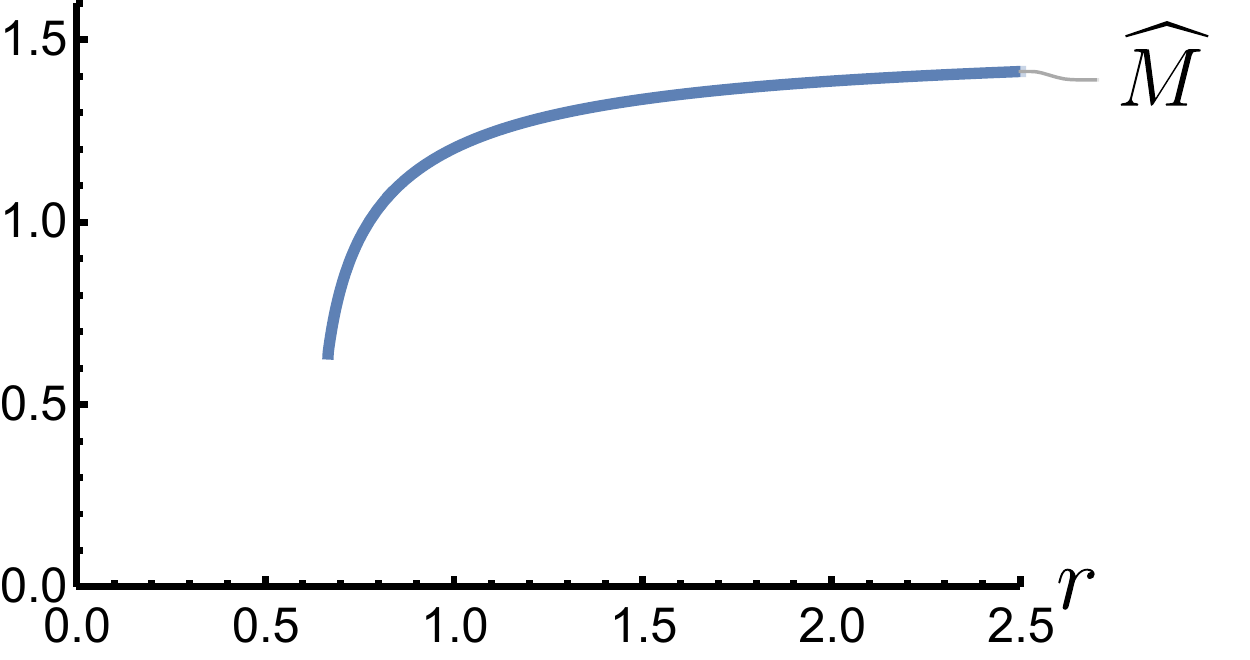}
\caption{Plot of $\smash{\widehat{M}}(r)$.}
\label{fig:122}
\end{subfigure}
\caption{Plots of Melnikov integrals as a function of $r$.}\label{fig:8}
\end{figure}

\subsubsection{Computation of Melnikov integrals along the heteroclinic back}  \label{meln:compback}

Using~\eqref{melnint} and~\eqref{adjointsol} we compute
\begin{align*} \partial_\mu Q_{\bb}(\pmb \alpha_0(r);r) = -\int_\R \frac{\re^{-\frac{\mu_0(r)+u_{\bb}(r)}{D_0(r)} \xi} p_{\bb}(\xi;r)^2}{D_0(r)} \de \xi
\frac{q_{\bb,+}(r)^2 (r+0.1)}{D_0(r)^2} \widehat{M}_{\bb}(r),\end{align*}
with
\begin{align}
\widehat{M}_{\bb}(r) = -\frac{q_{\bb,+}(r)\sqrt{2}\left(q_{\bb,+}(r)-2q_{\bb,-}(r)\right)}{2(\mu_0(r) + u_{\bb}(r))} \int_\R \re^{-\sqrt{2}\left(\frac{1}{2} - \frac{q_{\bb,-}(r)}{q_{\bb,+}(r)}\right)\chi}\phi'(\chi)^2 \de \chi < 0. \label{defhatMb}
\end{align}

Moreover, using~\eqref{cond:back},~\eqref{backsol},~\eqref{melnint} and~\eqref{adjointsol} we arrive at
\begin{align*} \partial_D Q_{\bb}(\pmb \alpha_0(r);r) &= \int_\R \frac{p_{\bb}(\xi;r)\left((\mu_0(r) +u_{\bb}(r))p_{\bb}(\xi;r) - q_{\bb}(\xi;r)\left(r - (r+0.1)(q_{\bb}(\xi;r) - 1)^2\right)\right)}{\re^{\frac{\mu_0(r)+u_{\bb}(r)}{D_0(r)} \xi} D_0(r)^2} \de \xi\\
&= \frac{q_{\bb,+}(r)^2 (r+0.1)}{D_0(r)^2} M_{\bb}(r),
\end{align*}
with
\begin{align}
\begin{split}
M_{\bb}(r) &= -\frac{q_{\bb,+}(r)}{2}\sqrt{2}\left(q_{\bb,+}(r)-2q_{\bb,-}(r)\right) \int_\R \re^{-\sqrt{2}\left(\frac{1}{2} - \frac{q_{\bb,-}(r)}{q_{\bb,+}(r)}\right)\chi}\phi'(\chi)^2 \de \chi\\
&\quad + \, \frac{u_{\bb}(r) - 2.1}{r+0.1} \int_\R \re^{-\sqrt{2}\left(\frac{1}{2} - \frac{q_{\bb,-}(r)}{q_{\bb,+}(r)}\right)\chi} \phi'(\chi)\phi(\chi) \de \chi\\ 
&\quad + \, 2q_{\bb,+}(r)\int_\R \re^{-\sqrt{2}\left(\frac{1}{2} - \frac{q_{\bb,-}(r)}{q_{\bb,+}(r)}\right)\chi} \phi'(\chi)\phi(\chi)^2 \de \chi \\
&\quad -  \, q_{\bb,+}(r)^2 \int_\R \re^{-\sqrt{2}\left(\frac{1}{2} - \frac{q_{\bb,-}(r)}{q_{\bb,+}(r)}\right)\chi} \phi'(\chi)\phi(\chi)^3 \de \chi.
\end{split} \label{defMbr}
\end{align}
Again, we proceed by computing $M_\bb(r)$ in the limit $r \to \infty$. Using~\eqref{ubvalues} and~\eqref{qlims} we find
\begin{align*} \lim_{r \to \infty} M_{\bb}(r) = -\frac{1}{3}.\end{align*}
Hence, for $r > \frac{2}{3}$ sufficiently large, $\partial_D Q_{\bb}(\pmb \alpha_0(r);r)$ is negative. 

For establishing a nondegenerate heteroclinic loop via the implicit function theorem in the upcoming~\S\ref{sec:hetloop}, one requires that the Jacobi matrix $\left(\partial_{D,\mu} Q_i(\pmb \alpha_0(r);r)\right)_{i = \f,\bb}$, associated with the algebraic system of equations $Q_\f(\pmb \alpha;r) = 0 = Q_\bb(\pmb \alpha;r)$, is invertible, which is the case precisely if the quantity
\begin{align} \widehat{M}(r) := \frac{\partial_D Q_{\bb}(\pmb \alpha_0(r);r)}{\partial_\mu Q_{\bb}(\pmb \alpha_0(r);r)} - \frac{\partial_D Q_{\f}(\pmb \alpha_0(r);r)}{\partial_\mu Q_{\f}(\pmb \alpha_0(r);r)} = \frac{M_{\bb}(r)}{\widehat{M}_{\bb}(r)} - \frac{M_{\f}(r)}{\widehat{M}_{\f}(r)}, \label{defhatM}\end{align}
is non-zero. For sufficiently large $r > \frac{2}{3}$, it follows rigorously by the previous observations that $\smash{\widehat{M}(r)}$ is positive, and hence non-zero. On the other hand, the plot in Figure~\ref{fig:122} suggests that  $\smash{\widehat{M}}(r)$ is in fact non-zero for all $r > \frac{2}{3}$, which indicates that the nondegeneracy hypothesis $\smash{\widehat{M}}(r) \neq 0$ is always satisfied.

The last Melnikov integral to compute along the heteroclinic back is $\partial_u \mathcal Q_{\bb}(u_{\bb}(r),D_0(r),\mu_0(r);r)$. Thus, using~\eqref{melnint2} and~\eqref{adjointsol2} we obtain
\begin{align*} \partial_u \mathcal Q_{\bb}(u_{\bb}(r),D_0(r),\mu_0(r);r) &= -\int_\R \frac{\re^{-\frac{\mu_0(r)+u_{\bb}(r)}{D_0(r)} \xi} p_{\bb}(\xi;r)\left(p_{\bb}(\xi;r) - q_{\bb}(\xi;r)\right)}{D_0(r)} \de \xi.
\end{align*}
It directly follows from~\eqref{backsol} that $q_{\bb}(\xi;r)$ is a monotonically decreasing back connecting $q = q_{\bb,+}(r) > 0$ with $q = 0$. Hence, the Melnikov integral $\partial_u \mathcal Q_{\bb}(u_{\bb}(r),D_0(r),\mu_0(r);r)$ is negative for all $r > \frac{2}{3}$.

\subsection{Establishing the heteroclinic loop} \label{sec:hetloop}
We can now use the Melnikov computations of the previous subsection to establish the existence of heteroclinic loops for $\varepsilon > 0$ sufficiently small and particular parameter combinations $(\smash{\widehat{D}}(\varepsilon,r), \smash{\widehat{\mu}}(\varepsilon,r))$. The heteroclinic loop is a critical dynamical object about which the dynamics are organized; in particular, its existence formalizes the occurrence of turbulent puffs, as transitions from the laminar to the turbulent state and back. The nondegeneracy of the heteroclinic loop guarantees the continuation of such heteroclinic connections along different parameter curves, yielding a large variety of traveling waves exhibiting turbulent patches, see Theorems~\ref{thm:single_twist} and~\ref{thm:double_twist} and their proofs in the upcoming~\S\ref{sec:proof_twists}.

\begin{lemma}[Existence of heteroclinic loop] \label{lem:nondegen_heteroclinic}
Let $r > \frac{2}{3}$ be such that $\smash{\widehat{M}}(r) \neq 0$. Then, there exists $\varepsilon_0(r) > 0$ such that for each $\varepsilon \in (0,\varepsilon_0(r))$ there is a parameter combination $\pmb \alpha(\varepsilon;r) = (\smash{\widehat{D}}(\varepsilon,r), \smash{\widehat{\mu}}(\varepsilon,r), \varepsilon)$ such that the ODE~\eqref{SYS2} has a heteroclinic loop, i.e.~it possesses both a heteroclinic orbit $X_{\f}^{\varepsilon}$ from $X_1$ to $X_2$ and a  heteroclinic orbit $X_{\bb}^{\varepsilon}$ from $X_2$ to $X_1$. The stable manifold $W^{\s}_{\pmb \alpha(\varepsilon;r)}(K_{2,\pmb \alpha(\varepsilon;r)})$ and the unstable manifold $W^{\uu}_{\pmb \alpha(\varepsilon;r)}(K_{1,\pmb \alpha(\varepsilon;r)})$ intersect transversally along $X_{\f}^\varepsilon$, and similarly $W^{\s}_{\pmb \alpha(\varepsilon;r)}(K_{1,\pmb \alpha(\varepsilon;r)})$ and $W^{\uu}_{\pmb \alpha(\varepsilon;r)}(K_{2,\pmb \alpha(\varepsilon;r)})$ intersect transversally along $X_{\bb}^\varepsilon$. Finally, the functions $\smash{\widehat{D}}(\varepsilon,r)$ and $\smash{\widehat{\mu}}(\varepsilon,r)$ are smoothly dependent on their parameters and satisfy~\eqref{limits}. 
\end{lemma}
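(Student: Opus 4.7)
The plan is to apply the implicit function theorem (IFT) to the system
\[
Q_\f(\pmb \alpha; r) = 0 = Q_\bb(\pmb \alpha; r),
\]
viewed as two equations in the three parameters $\pmb \alpha = (D, \mu, \varepsilon)$, solving for $(D, \mu)$ as smooth functions of $\varepsilon$ (and of $r$). By Corollary~\ref{cor:hetloop} and the construction in \S\ref{sec:Meln}, both Melnikov functions vanish at $\pmb \alpha = \pmb \alpha_0(r) = (D_0(r), \mu_0(r), 0)$, since this parameter value produces the singular heteroclinic loop. Smoothness of $Q_\f$ and $Q_\bb$ in all their arguments follows from smooth parameter-dependence of the explicit fast heteroclinic profiles of Lemma~\ref{lem:fastconnect}, together with the smooth persistence of the normally hyperbolic slow manifolds $K_{i, \pmb \alpha}$ and their stable and unstable foliations, as supplied by GSPT in \S\ref{sec:persistence_manifolds}.

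The IFT applies provided the $2 \times 2$ Jacobian
\[
J(r) = \begin{pmatrix} \partial_D Q_\f & \partial_\mu Q_\f \\ \partial_D Q_\bb & \partial_\mu Q_\bb \end{pmatrix}_{\pmb \alpha_0(r)}
\]
is invertible. Factoring $\partial_\mu Q_\f(\pmb \alpha_0(r); r)\,\partial_\mu Q_\bb(\pmb \alpha_0(r); r)$ out of $\det J(r)$ and rearranging yields
\[
\det J(r) = -\,\partial_\mu Q_\f(\pmb \alpha_0(r); r)\,\partial_\mu Q_\bb(\pmb \alpha_0(r); r)\,\widehat{M}(r),
\]
where $\widehat{M}(r)$ is precisely the combination defined in~\eqref{defhatM}. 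The two derivatives $\partial_\mu Q_\f(\pmb \alpha_0(r); r)$ and $\partial_\mu Q_\bb(\pmb \alpha_0(r); r)$ are strictly negative by the closed-form expressions~\eqref{defhatMf} and~\eqref{defhatMb}, and $\widehat{M}(r) \neq 0$ by hypothesis, so $J(r)$ is invertible.

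The IFT then yields smooth functions $\widehat{D}(\varepsilon, r)$ and $\widehat{\mu}(\varepsilon, r)$, defined for $\varepsilon$ in a neighborhood of $0$ (whose size may depend on $r$, giving the required $\varepsilon_0(r) > 0$), along which $Q_\f \equiv 0 \equiv Q_\bb$ and satisfying $\widehat{D}(0, r) = D_0(r)$, $\widehat{\mu}(0, r) = \mu_0(r)$, establishing~\eqref{limits}. For $\varepsilon \in (0, \varepsilon_0(r))$, the vanishing of $Q_\f$ at $\pmb \alpha(\varepsilon; r) = (\widehat{D}(\varepsilon,r), \widehat{\mu}(\varepsilon,r), \varepsilon)$ means that the one-dimensional orbit $W^\uu_{\pmb \alpha(\varepsilon;r)}(X_1)$ meets the two-dimensional manifold $W^\s_{\pmb \alpha(\varepsilon;r)}(K_{2,\pmb \alpha(\varepsilon;r)}) = W^\s_{\pmb \alpha(\varepsilon;r)}(X_2)$, yielding a heteroclinic orbit $X^\varepsilon_\f$ from $X_1$ to $X_2$; analogously, $Q_\bb = 0$ produces $X^\varepsilon_\bb$ from $X_2$ to $X_1$. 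Transversality of the two-dimensional manifolds $W^\uu_{\pmb \alpha(\varepsilon;r)}(K_{1,\pmb \alpha(\varepsilon;r)})$ and $W^\s_{\pmb \alpha(\varepsilon;r)}(K_{2,\pmb \alpha(\varepsilon;r)})$ along the one-dimensional orbit $X^\varepsilon_\f$ then follows from the simple-zero property of $Q_\f$ at $\pmb \alpha_0(r)$ (guaranteed by $\partial_\mu Q_\f(\pmb \alpha_0(r); r) \neq 0$) via the standard Melnikov correspondence between simple Melnikov zeros and transversal intersections; similarly for $W^\uu(K_{2,\pmb \alpha}) \cap W^\s(K_{1,\pmb \alpha})$ along $X^\varepsilon_\bb$. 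The main technical step is the algebraic identification of $\det J(r)$ with the quantity $\widehat{M}(r)$, which is a routine manipulation given the Melnikov integrals already computed in \S\ref{meln:compfront}--\S\ref{meln:compback}; the conceptual point is that one must solve both Melnikov equations \emph{simultaneously} for $(D, \mu)$ to obtain a common parameter combination at which the forward and backward heteroclinics coexist, yielding the full loop rather than two unrelated bifurcation curves of fronts and backs.
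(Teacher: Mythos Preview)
Your proof is correct and follows essentially the same strategy as the paper: both hinge on the implicit function theorem applied to the pair of Melnikov conditions $Q_\f = 0 = Q_\bb$, with nondegeneracy supplied by $\widehat{M}(r) \neq 0$ together with $\partial_\mu Q_\f, \partial_\mu Q_\bb \neq 0$. The only organizational difference is that the paper proceeds in two steps --- first solving each Melnikov equation separately for $\mu = \mu_{\f,0}(D;\varepsilon)$ and $\mu = \mu_{\bb,0}(D;\varepsilon)$ via $\partial_\mu Q_j \neq 0$, and then showing these two curves in the $(D,\mu)$-plane cross transversely (which is exactly the condition $\widehat{M}(r) \neq 0$) --- whereas you apply the IFT once to the full $2\times 2$ system. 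Your determinant identity $\det J(r) = -\partial_\mu Q_\f \cdot \partial_\mu Q_\bb \cdot \widehat{M}(r)$ is the algebraic link between the two formulations. The paper's two-step route has the incidental benefit of producing the individual continuation curves $\mu_{\f,0}$ and $\mu_{\bb,0}$ as explicit objects, which are later reused in Lemma~\ref{lem:transverse}; in your version these curves are implicit but could be recovered the same way. For the transversality of the two-dimensional manifolds along $X_\f^\varepsilon$ and $X_\bb^\varepsilon$, the paper likewise deduces it from invertibility of the $(D,\mu)$-Jacobian, so your argument is in line with theirs.
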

\begin{proof}
Note that the persistence of the stable and unstable manifolds about the singular limit $\varepsilon = 0$ is discussed in~\S\ref{sec:persistence_manifolds}. In addition, we have already established there that, for ${\pmb \alpha}_0 = \pmb \alpha_0(r) = (D_0(r),\mu_0(r),0)$, the one-dimensional unstable manifold $W_{\pmb \alpha_0}^{\uu}(X_1)$ intersects the two-dimensional stable manifold $W_{\pmb \alpha_0}^{\s}(K_{2,0})$ along the heteroclinic front solution $X_{\f}(\xi) = \left(q_{\f}(\xi;r),p_{\f}(\xi;r),0\right)$ which implies that, due to the fibering~\eqref{fibering}, the two-dimensional unstable manifold $W_{\pmb \alpha_0}^{\uu}(K_{1,0})$ intersects $W_{\pmb \alpha_0}^{\s}(K_{2,0})$ along the heteroclinic front. Similarly, at ${\pmb \alpha} = {\pmb \alpha}_0$ the one-dimensional unstable manifold $W_{\pmb \alpha_0}^{\uu}(X_2)$, which is contained in the two-dimensional unstable manifold $W_{\pmb \alpha_0}^{\uu}(K_{2,0})$, intersects the two-dimensional stable manifold $W_{\pmb \alpha_0}^{\s}(K_{1,0})$ in system~\eqref{FAST} along the heteroclinic back solution $X_{\bb}(\xi) = \left(q_{\bb}(\xi;r),p_{\bb}(\xi;r),u_{\bb}(r)\right)$.

We can construct the continuation of the intersections of these manifolds for $\varepsilon > 0$ sufficiently small via Melnikov's method, using the computations in~\S\ref{sec:Meln}:
since $\partial_\mu Q_{\f}(\pmb \alpha_0(r);r) \neq 0$, we conclude with the implicit function theorem that the solutions of $Q_{\f}(\pmb \alpha;r)=0$ can be expressed as a function $\mu= \mu_{\f,0}(D; \varepsilon)$, near ${\pmb \alpha}_0=(D_0(r), \mu_0(r),0)$, such that $\mu_{\f,0}(D_0(r);0) = \mu_0(r)$. In particular, the continuation of the front heteroclinic, as an intersection of $W_{\pmb \alpha}^{\uu}(X_1)$ and $W_{\pmb \alpha}^{\s}(K_{2,0}) = W_{\pmb \alpha}^{\s}(X_2)$,  is established for the parameter values ${\pmb \alpha}=( D, \mu_{\f,0}(D; \varepsilon), \varepsilon)$ with $\varepsilon > 0$. The connection between $W_{\pmb \alpha}^{\uu}(X_2)$ and $W_{\pmb \alpha}^{\s}(K_{1,0}) = W_{\pmb \alpha}^{\s}(X_1)$ can be established analogously by recalling from~\S\ref{sec:Meln} that
$\partial_\mu Q_{\bb}(\pmb \alpha_0(r);r)\neq 0$, yielding a function $\mu= \mu_{\bb,0}(D; \varepsilon)$ satisfying $\mu_{\bb,0}(D_0(r);0) = \mu_0(r)$.

Additionally, by the fact that $\widehat{M}(r) \neq 0$, we know, by the implicit function theorem, that
$$\frac{\partial \mu_{f,0}}{\partial D}( D_0(r); 0) = \frac{\partial_D Q_{\f}(\pmb \alpha_0(r);r)}{\partial_\mu Q_{\f}(\pmb \alpha_0(r);r)} \neq \frac{\partial_D Q_{\bb}(\pmb \alpha_0(r);r)}{\partial_\mu Q_{\bb}(\pmb \alpha_0(r);r)} = \frac{\partial \mu_{\bb,0}}{\partial D}( D_0(r); 0).$$
Hence, provided $\varepsilon > 0$ is sufficiently small, the curves $\mu_{\f,0}(\cdot;\varepsilon)$ and $\mu_{\bb,0}(\cdot;\varepsilon)$ in the $(\mu,D)$-parameter plane intersect transversely near $(D_0(r), \mu_0(r))$ at some point $(\smash{\widehat{D}}(\varepsilon,r), \smash{\widehat{\mu}}(\varepsilon,r))$, where we have a heteroclinic loop of trajectories $X_{\f}^{\varepsilon}$ from $X_1$ to $X_2$ and $X_{\bb}^{\varepsilon}$ from $X_2$ to $X_1$. By the implicit function theorem the functions $\smash{\widehat{D}}(\varepsilon,r)$ and $\smash{\widehat{\mu}}(\varepsilon,r)$ depend smoothly on their variables.

Let $\pmb \alpha(\varepsilon;r) = (\smash{\widehat{D}}(\varepsilon,r), \smash{\widehat{\mu}}(\varepsilon,r),\varepsilon)$. Note that $\widehat{M}(r) \neq 0$ implies that the Jacobi matrix
\begin{align*}\begin{pmatrix} \partial_\mu Q_{\f}(\pmb \alpha(\varepsilon;r);r) & \partial_D Q_{\f}(\pmb \alpha(\varepsilon;r);r) \\ \partial_\mu Q_{\bb}(\pmb \alpha(\varepsilon;r) & \partial_D Q_{\bb}(\pmb \alpha(\varepsilon;r)\end{pmatrix},\end{align*}
is invertible for $\varepsilon > 0$ sufficiently small. Hence, the transverse crossing of the stable and unstable manifolds $W^{\uu}_{\pmb \alpha(\varepsilon;r)}(K_{1,\pmb\alpha(\varepsilon;r)})$ and $W^{\s}_{\pmb \alpha(\varepsilon;r)}(K_{2,\pmb\alpha(\varepsilon;r)})$ along the heteroclinic front $X_{\f}^\varepsilon$ follows. Similarly, $W^{\uu}_{\pmb \alpha(\varepsilon;r)}(K_{2,\pmb\alpha(\varepsilon;r)})$ and $W^{\s}_{\pmb \alpha(\varepsilon;r)}(K_{1,\pmb\alpha(\varepsilon;r)})$ intersect transversely along the heteroclinic back $X_{\bb}^\varepsilon$.
\end{proof}

We note that the proof of Theorem~\ref{heteroclinicloop} follows directly from Corollary~\ref{cor:hetloop} and the above Lemma~\ref{lem:nondegen_heteroclinic}.
 
\section{Proofs of Theorems~\ref{thm:single_twist} and~\ref{thm:double_twist}} \label{sec:proof_twists}

We wish to apply Deng's general results~\cite{Deng91a} on the bifurcations of a single twisted and double twisted heteroclinic loop in order to prove Theorems~\ref{thm:single_twist} and~\ref{thm:double_twist}, respectively. To do so, we must verify five conditions for the heteroclinic loop obtained in Lemma~\ref{lem:nondegen_heteroclinic}, see also~\cite[Theorem~2.1]{Deng91b}. In this section, we will establish those conditions one by one.

The first condition concerns the relative expansion of the equilibria $X_1$ and $X_2$ of the heteroclinic loop and is already given by our statement in Lemma~\ref{lem:relative_expansion}. The second condition is the transverse crossing of the stable and unstable manifolds $\smash{W^{\s/\uu}_{\pmb \alpha(\varepsilon;r)}(K_{i,\pmb\alpha(\varepsilon;r)}), i = 1,2}$ along the heteroclinic connections $X_{\f}^\varepsilon$ and $X_{\bb}^\varepsilon$, which was established in Lemma~\ref{lem:nondegen_heteroclinic}. The third condition concerns the nondegeneracy of the heteroclinic loop, which is the content of the following lemma.

\begin{lemma}[Nondegeneracy of heteroclinic loop] \label{lem:nondegen}
Assume Hypothesis~\ref{hyp} is satisfied. There exists $r_0 > \frac{2}{3}$ such that for $r \in (\frac{2}{3},\frac{2}{3}+\gamma) \cup (r_0,\infty)$ the heteroclinic loop established in Lemma~\ref{lem:nondegen_heteroclinic} is nondegenerate in the sense that:
\begin{enumerate}
\item the heteroclinic front $X_\f^{\varepsilon}(\xi)$ is asymptotically tangent to the principal stable eigenvector $e_2(X_2)$ of $X_2$  as $\xi \to \infty$, and the principal unstable eigenvector $e_3(X_1)$ of $X_1$ as $\xi \to -\infty$, respectively, and the same holds true for the heteroclinic back $X_\bb^{\varepsilon}(\xi)$ with $X_2$ and $X_1$ interchanged;
\item the strong inclination conditions
$$ \lim_{\xi \to - \infty} T_{X_\f^{\varepsilon}(\xi)} W_{\pmb \alpha(\varepsilon;r)}^{\s}(X_2) = T_{X_1} W_{\pmb \alpha(\varepsilon;r)}^{\uu}(X_1) + T_{X_1} W_{\pmb \alpha(\varepsilon;r)}^{\s\s}(X_1), $$
are met, where $T_{p} W$ denotes the tangent space of a manifold $W$ at the base point $p \in W$ and
$$W_{\pmb \alpha(\varepsilon;r)}^{\s}(X_i), \ W_{\pmb \alpha(\varepsilon;r)}^{\uu}(X_i), \ \text{and } W_{\pmb \alpha(\varepsilon;r)}^{\s\s}(X_i)$$
 are the (two-dimensional) stable, (one-dimensional) unstable and (one-dimensional) strong stable manifolds, respectively. Similarly, we have
$$ \lim_{\xi \to - \infty} T_{X_\bb^{\varepsilon}(\xi)} W_{\pmb \alpha(\varepsilon;r)}^\s(X_1) = T_{X_2} W_{\pmb \alpha(\varepsilon;r)}^\uu(X_2) + T_{X_2} W_{\pmb \alpha(\varepsilon;r)}^{\s\s}(X_2). $$
\end{enumerate}
\end{lemma}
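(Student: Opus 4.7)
I would handle (i) and (ii) separately, treating the front $X_\f^\varepsilon$ in detail; the statements for $X_\bb^\varepsilon$ follow by the symmetric argument with the roles of $X_1, X_2$ and of $K_{1,\pmb\alpha(\varepsilon;r)}, K_{2,\pmb\alpha(\varepsilon;r)}$ interchanged.

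For part (i), the tangency at $X_1$ is automatic since $X_\f^\varepsilon \subset W^\uu_{\pmb\alpha(\varepsilon;r)}(X_1)$, which is one-dimensional and tangent to $e_3(X_1)$. For the tangency at $X_2$ as $\xi \to +\infty$, I would invoke Fenichel's theorem: the persisted slow manifold $K_{2,\pmb\alpha(\varepsilon;r)}$ lies order-$\varepsilon$ close to the singular segment on the right branch of $M_2$, and is tangent at $X_2$ to $e_2(X_2)$ since the small eigenvalue $\lambda_2(X_2)$ is the rescaled eigenvalue of the reduced slow flow, for which $X_2$ is a sink by Lemma~\ref{lem:sinks}. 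Because $X_\f^\varepsilon$ joins, after its fast transition near the layer $u \approx 2$, onto $W^\s_{\pmb\alpha(\varepsilon;r)}(K_{2,\pmb\alpha(\varepsilon;r)}) = W^\s_{\pmb\alpha(\varepsilon;r)}(X_2)$ and then slides along the slow manifold into $X_2$, its approach is tangent to $e_2(X_2)$.

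For part (ii), I would apply the inclination lemma at the hyperbolic equilibrium $X_1$. Let $v(\xi)$ be a smooth section along $X_\f^\varepsilon$, transverse to the flow in $T_{X_\f^\varepsilon(\xi)} W^\s_{\pmb\alpha(\varepsilon;r)}(X_2)$, so that this tangent plane is spanned by $\partial_\xi X_\f^\varepsilon(\xi)$ and $v(\xi)$ with $v$ obeying the variational equation modulo the flow direction. Near $X_1$, decompose $v(\xi) \sim c_1 e^{\lambda_1 \xi} e_1(X_1) + c_2 e^{\lambda_2 \xi} e_2(X_1) + c_3 e^{\lambda_3 \xi} e_3(X_1)$. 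Since $|\lambda_1(X_1)|$ stays bounded away from zero while $\lambda_2(X_1) \to 0$ as $\varepsilon \downarrow 0$ (Lemma~\ref{lem:relative_expansion}), as $\xi \to -\infty$ the $e_1$-component strictly dominates the $e_2$-component, while the $e_3$-component decays. Provided the nondegeneracy $c_1 \neq 0$ holds, the tangent plane converges as $\xi \to -\infty$ to $T_{X_1} W^\uu(X_1) + T_{X_1} W^{\s\s}(X_1)$, the claimed strong inclination.

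The main obstacle is verifying $c_1 \neq 0$. In the singular limit, the base-point-variation direction of the Fenichel fibration~\eqref{fibering} provides a natural $v^{(0)}$ whose limit at $X_1$ is $(0,0,1)^\top = e_2^{(0)}(X_1)$; combined with the block-diagonal structure of $DF(X_1)|_{\varepsilon=0}$, which forces the left strong stable eigenvector $\ell_1^{(0)}(X_1)$ to have vanishing third component, the duality pairing gives $c_1^{(0)} = 0$, so the desired nondegeneracy is a genuine first-order effect in $\varepsilon$. To capture it, I would expand $c_1(\varepsilon) = \psi_1(0;\varepsilon) \cdot v(0;\varepsilon)$, where $\psi_1$ is the adjoint principal solution dual to the strong stable direction. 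A Fredholm-type calculation then expresses the linear-in-$\varepsilon$ term as an integral along $X_\f$ of the singular adjoint $\widetilde\Psi_\f$ from~\eqref{adjointsol2} against the $\varepsilon$-derivative of the vector field. I expect this integral to coincide, up to a non-vanishing prefactor, with the reduced Melnikov derivative $\partial_u \mathcal Q_\f(u_\f(r), D_0(r), \mu_0(r); r) = q_{\f,+}(r)^2 \widetilde M_\f(r)/D_0(r)$ computed in~\S\ref{meln:compfront}. Its non-vanishing in the two $r$-ranges of the statement then follows from Hypothesis~\ref{hyp} and from $\lim_{r \to \infty} \widetilde M_\f(r) = -\infty$; the analogous statement at $X_2$ for $X_\bb^\varepsilon$ relies on the strict negativity of $\partial_u \mathcal Q_\bb(u_\bb(r), D_0(r), \mu_0(r); r)$ recorded at the end of~\S\ref{meln:compback}.
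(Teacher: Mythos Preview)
Your treatment of part~(i) is fine and essentially matches the paper: the one-dimensionality of $W^\uu(X_1)$ forces tangency to $e_3(X_1)$, and the Fenichel tracking of the slow manifold $K_{2,\pmb\alpha(\varepsilon;r)}$ (equivalently, the observation that $X_\f^\varepsilon$ stays away from $W^{\s\s}(X_2)$, which at $\varepsilon=0$ lies in the layer $u=u_\bb(r)$) gives tangency to $e_2(X_2)$.

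For part~(ii), however, there is a genuine error in your singular-limit computation. You assert that the base-point variation $v^{(0)}(\xi)=\partial_u\gamma(2,\xi)$ of the Fenichel fibration has limit $(0,0,1)^\top$ at $X_1$, and then pair with $\ell_1^{(0)}(X_1)$ to conclude $c_1^{(0)}=0$. But $v^{(0)}(\xi)$ does \emph{not} stay bounded as $\xi\to-\infty$. Its third component is identically~$1$, but its first two components $(v_1,v_2)$ satisfy the \emph{inhomogeneous} planar variational equation along $X_\f$ with forcing $D_0^{-1}(p_\f-q_\f)$, subject to a bounded condition only at $+\infty$. Pairing with the planar adjoint $\widetilde\Psi_\f$ and integrating gives precisely
\[
\lim_{\xi\to-\infty}\widetilde\Psi_\f(\xi)\cdot(v_1,v_2)(\xi)=\partial_u\mathcal Q_\f(u_\f(r),D_0(r),\mu_0(r);r),
\]
and since $\widetilde\Psi_\f(\xi)\sim e^{-\lambda_1\xi}\ell_1^{(2D)}$ grows, the only way this limit is finite and nonzero is that $(v_1,v_2)(\xi)\sim C\,e^{\lambda_1\xi}e_1^{(2D)}$ with $C\propto\partial_u\mathcal Q_\f$. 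Hence $v^{(0)}$ already aligns with the strong stable direction $e_1^{(0)}(X_1)$ at leading order, and $c_1^{(0)}\propto\partial_u\mathcal Q_\f\neq0$ directly at $\varepsilon=0$. The nondegeneracy is \emph{not} an $O(\varepsilon)$ effect, and your proposed Fredholm-type expansion is unnecessary.

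The paper's argument packages this same computation more efficiently: $\partial_u\mathcal Q_\f\neq0$ is exactly the statement that the two-dimensional manifolds $W^\s_{\pmb\alpha_0}(K_{2,0})$ and $W^\uu_{\pmb\alpha_0}(K_{1,0})$ intersect \emph{transversally} along $X_\f$ at $\varepsilon=0$. One then invokes Deng's strong $\lambda$-lemma~\cite{Deng90} (valid at the non-hyperbolic equilibrium $X_1$) to obtain the strong inclination at $\varepsilon=0$, and concludes by robustness for small $\varepsilon>0$. Your eigenvector-decomposition framework would reach the same conclusion once the error above is fixed, but the transversality-plus-$\lambda$-lemma route is both shorter and avoids tracking the inhomogeneous variational solution explicitly.
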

\begin{proof}
(i) Convergence along the principal stable eigenvector $e(X_2)$ for the heteroclinic front can be derived by observing that $X_{\f}^{\varepsilon}$ is, by continuity, not contained in the strong stable manifold $W_{{\pmb \alpha(\varepsilon;r)}}^{\s\s}(X_2)$, because $W_{\pmb \alpha_0(r)}^{\s\s}(X_2)$ lies in the layer $u = u_{\bb}(r)$ in the fast subsystem~\eqref{FAST}, whereas the orbit of $X_{\f}^\varepsilon$ approaches the singular heteroclinic front as $\varepsilon \downarrow 0$, which consists of the heteroclinic $X_\f$ in the fast subsystem~\eqref{FAST} in the layer $u = 2$ and the slow orbit segment in the slow subsystem~\eqref{SLOW} connecting $(q_{f,+}(r),0,2)$ with $X_2$, cf.~Figure~\ref{fig:5}. Convergence along the principal unstable eigenvector of $X_1$ follows directly from the fact that the linearization about the equilibrium $X_1$ has only one unstable eigenvalue, cf.~Lemma~\ref{lem:relative_expansion}. The statement for the heteroclinic back is obtained analogously.

(ii) For showing the strong inclination property, we can use
\begin{align*}
\partial_{u} \mathcal Q_{j}(u_{j}(r),D_0(r),\mu_0(r);r) \neq 0,
\end{align*}
for $j = \f,\bb$, which follows from the analysis in~\S\ref{sec:Meln} in combination with Hypothesis~\ref{hyp} for $r \in (\frac{2}{3},\frac{2}{3}+\gamma) \cup (r_0,\infty)$ for some $r_0 > 0$ sufficiently large. Specifically, this implies that $W_{\pmb \alpha_0}^{\s}(K_{j,0})$ and $W_{\pmb \alpha_0}^{\uu}(K_{i,0})$ intersect transversely and the strong inclination property is then satisfied by the strong $\lambda$-lemma~\cite{Deng90} for the manifolds at $\varepsilon =0$. Due to robustness of the strong inclination property, we can deduce the statement for sufficiently small $\varepsilon >0$.
\end{proof}

The fourth condition, as stated in the lemma below, concerns the continuation of the corresponding heteroclinics associated with fronts and backs, respectively.

\begin{lemma}[Continuation of $X_i^{\varepsilon}$] \label{lem:transverse} Consider the heteroclinic loop established in Lemma~\ref{lem:nondegen_heteroclinic}.
There are two curves $0$-$\text{het}_{12}^\varepsilon$ and $0$-$\text{het}_{21}^\varepsilon$ in the $(D, \mu)$-parameter plane (see also Figure~\ref{fig:twist}), which intersect transversely at $(\smash{\widehat{D}}(\varepsilon,r), \smash{\widehat{\mu}}(\varepsilon,r))$ such that for $\pmb \alpha = (D,\mu,\varepsilon)$, with $\varepsilon > 0$ sufficiently small and $(D,\mu) \in$ $0$-$\text{het}_{12}^\varepsilon$, there is a simple heteroclinic orbit $X_{\f,\pmb \alpha}$ from $X_1$ to $X_2$ being the continuation of $X_\f$ in the sense that $X_{\f,\pmb \alpha(\varepsilon;r)}=X_\f^{\varepsilon}$ and $X_{\f, \pmb \alpha}(\xi)$ is continuous in $\xi$ and $\pmb \alpha = (D,\mu,\varepsilon)$ for $(D,\mu)$ along the curve $0$-$\text{het}_{12}^\varepsilon$. The analogous holds for the curve $0$-$\text{het}_{21}^\varepsilon$ corresponding to heteroclinic orbits $X_{\bb,\pmb \alpha}(\xi)$ being the continuation of the heteroclinic back $X_{\bb}^\varepsilon$.
\end{lemma}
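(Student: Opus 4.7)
The plan is to identify the curves $0$-$\text{het}_{12}^\varepsilon$ and $0$-$\text{het}_{21}^\varepsilon$ with the zero sets at fixed $\varepsilon > 0$ of the Melnikov functions $Q_\f(\cdot;r)$ and $Q_\bb(\cdot;r)$ constructed in~\S\ref{sec:Meln}, and then to deduce the statement via the implicit function theorem together with smooth dependence of invariant manifolds on parameters. Since most of the work has already been done inside the proof of Lemma~\ref{lem:nondegen_heteroclinic}, I expect the argument to be quite short.

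First I would recall from the proof of Lemma~\ref{lem:nondegen_heteroclinic} that, because $\partial_\mu Q_\f(\pmb \alpha_0(r);r) \neq 0$ and $\partial_\mu Q_\bb(\pmb \alpha_0(r);r) \neq 0$, the implicit function theorem applied to $Q_\f(D,\mu,\varepsilon;r)=0$ and to $Q_\bb(D,\mu,\varepsilon;r)=0$ near $\pmb \alpha_0(r) = (D_0(r), \mu_0(r), 0)$ yields smooth functions $\mu = \mu_{\f,0}(D;\varepsilon)$ and $\mu = \mu_{\bb,0}(D;\varepsilon)$ with $\mu_{\f,0}(D_0(r);0) = \mu_{\bb,0}(D_0(r);0) = \mu_0(r)$, whose graphs I would take as the definitions of $0$-$\text{het}_{12}^\varepsilon$ and $0$-$\text{het}_{21}^\varepsilon$ on a small neighborhood of $D_0(r)$. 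The transverse crossing of these two graphs at $(\smash{\widehat{D}}(\varepsilon,r), \smash{\widehat{\mu}}(\varepsilon,r))$ is precisely what was established in the proof of Lemma~\ref{lem:nondegen_heteroclinic} from the nondegeneracy condition $\smash{\widehat{M}}(r) \neq 0$, which guarantees that $\partial_D \mu_{\f,0}$ and $\partial_D \mu_{\bb,0}$ disagree at that point.

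Next I would argue that along $0$-$\text{het}_{12}^\varepsilon$ a heteroclinic orbit $X_{\f,\pmb \alpha}$ from $X_1$ to $X_2$ exists and depends continuously on $\pmb \alpha$. By construction, the vanishing of $Q_\f$ forces the one-dimensional manifold $W^\uu_{\pmb \alpha}(X_1)$ to meet the section $\Sigma$ at a point on the curve $\Sigma \cap W^\s_{\pmb \alpha}(K_{2,\pmb \alpha})$, which for $\varepsilon > 0$ agrees with $\Sigma \cap W^\s_{\pmb \alpha}(X_2)$; this intersection defines $X_{\f,\pmb \alpha}$, and it inherits smooth dependence on $\pmb \alpha$ from the smooth dependence of $W^\uu_{\pmb \alpha}(X_1)$ and $W^\s_{\pmb \alpha}(K_{2,\pmb \alpha})$ on $\pmb \alpha$ (via classical invariant-manifold theory for the former and GSPT, as in~\S\ref{sec:persistence_manifolds}, for the latter). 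Normalising the time parametrisation of $X_{\f,\pmb \alpha}$ by fixing its intersection with $\Sigma$ at $\xi = 0$ then yields joint continuity of $X_{\f,\pmb \alpha}(\xi)$ in $(\xi,\pmb \alpha)$ by continuous dependence of ODE solutions on initial data and parameters, and at $\pmb \alpha = \pmb \alpha(\varepsilon;r)$ the orbit $X_{\f,\pmb \alpha}$ coincides with $X_\f^\varepsilon$ from Lemma~\ref{lem:nondegen_heteroclinic}. The argument along $0$-$\text{het}_{21}^\varepsilon$ for $X_{\bb,\pmb \alpha}$ is entirely analogous, using $Q_\bb$ and the back heteroclinic. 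The only mild subtlety, and the one genuinely new point beyond Lemma~\ref{lem:nondegen_heteroclinic}, is organising the orbits consistently as continuous functions of $\pmb \alpha$ along the whole curve rather than at a single transverse intersection point; fixing the time origin by $\Sigma$ removes the time-translation ambiguity, after which everything is standard.
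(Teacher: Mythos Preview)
Your proposal is correct and follows essentially the same approach as the paper: the curves $0$-$\text{het}_{12}^\varepsilon$ and $0$-$\text{het}_{21}^\varepsilon$ are precisely the graphs $\mu=\mu_{\f,0}(D;\varepsilon)$ and $\mu=\mu_{\bb,0}(D;\varepsilon)$ obtained in the proof of Lemma~\ref{lem:nondegen_heteroclinic}, and the transverse intersection at $(\smash{\widehat{D}}(\varepsilon,r),\smash{\widehat{\mu}}(\varepsilon,r))$ is exactly what was shown there from $\smash{\widehat{M}}(r)\neq 0$. Your added discussion of continuous dependence via fixing the time origin on $\Sigma$ is a reasonable elaboration of a point the paper leaves implicit.
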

\begin{proof}
The follows directly from the existence of the curves $\mu = \mu_{\f,0}(D;\varepsilon)$ and $\mu = \mu_{\bb,0}(D;\varepsilon)$, corresponding to simple heteroclinic front and back connections between $X_1$ and $X_2$, as established in the proof of Lemma~\ref{lem:nondegen_heteroclinic}.
\end{proof}

The final condition concerns the twist properties of the heteroclinic loop and yields the distinction between the statements in Theorem~\ref{thm:single_twist} (single twisted) and Theorem~\ref{thm:double_twist} (double twisted). We will establish this condition in the upcoming two subsections.

\subsection{Single twisted regime}

The twist properties of the established nondegenerate heteroclinic loop are crucial for obtaining the adequate bifurcation diagram, see~\cite[Hypothesis 5.16]{HomburgSandstede}. In fact, for large $r$, the geometry of our model yields a single twist. To describe such a twist along the back we use that the principal stable and unstable eigenvectors $e_2(X_2)$ and $e_3(X_1)$ are transverse to the principal eigenvectors $e_3(X_2)$ and $e_2(X_1)$, which are by Lemma~\ref{lem:nondegen} tangent to the back as $\xi \to \infty$ and $\xi \to -\infty$, respectively, see Figure~\ref{fig:manifolds}a. Similarly, we use the principal eigenvectors $e_2(X_1)$ and $e_3(X_1)$ to describe twist properties along the front, see Figure~\ref{fig:manifolds}b.

\begin{lemma}[Single twist of the heteroclinic loop]
\label{lem:twist}
Consider the heteroclinic loop established in Lemma~\ref{lem:nondegen_heteroclinic}. For sufficiently large $r > \frac{2}{3}$,
the principal eigenvectors $e_2(X_2)$ and $e_3(X_1)$ point to opposite sides of the tangent space $T_{X_\bb^\varepsilon(\xi)} W_{{\pmb \alpha(\varepsilon;r)}}^{\s}(X_1)$ as $\xi \to -\infty$ and $\xi \to +\infty$, respectively, which means that the heteroclinic back $X_\bb^{\varepsilon}$ is twisted. Furthermore, $e_2(X_1)$ and $e_3(X_2)$ point to the same sides of $T_{X_\f^{\varepsilon}(\xi)} W_{{\pmb \alpha(\varepsilon;r)}}^{\s}(X_2)$ as $\xi \to -\infty$ and $\xi \to +\infty$, respectively, which means that the heteroclinic front $X_\f^{\varepsilon}$ is not twisted.
\end{lemma}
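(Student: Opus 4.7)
The plan is to reduce to the singular limit $\varepsilon = 0$, construct an explicit continuous normal field to the tangent plane of $W^{\s}_{\pmb \alpha(\varepsilon;r)}(X_1)$ along the singular back (and analogously $W^{\s}_{\pmb \alpha(\varepsilon;r)}(X_2)$ along the front), and compare its signs at the two endpoints. By smooth dependence in $\varepsilon$, the twist configuration at $\varepsilon = 0$ then persists for all small $\varepsilon > 0$.

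First I treat the back. By Lemma~\ref{lem:nondegen}(ii),
$$ \lim_{\xi \to -\infty} T_{X_\bb^\varepsilon(\xi)} W^{\s}_{\pmb \alpha(\varepsilon;r)}(X_1) = \mathrm{span}(e_3(X_2), e_1(X_2)), $$
so $e_2(X_2)$ is transverse to this plane; similarly $\lim_{\xi \to +\infty} T_{X_\bb^\varepsilon(\xi)} W^{\s}_{\pmb \alpha(\varepsilon;r)}(X_1) = \mathrm{span}(e_1(X_1), e_2(X_1))$ with $e_3(X_1)$ transverse to it. In the singular limit I construct a continuous normal $n(\xi)$ to $T W^{\s}(X_1)$ along the back as follows: on the fast segment (in the layer $u = u_\bb(r)$), the tangent plane is spanned by $\dot X_\bb(\xi)$ and $\partial_u$, because $u$ serves as a parameter in the fast subsystem and $\partial_u$ generates the transverse Fenichel direction, so its normal is proportional to $(-p_\bb'(\xi), p_\bb(\xi;r), 0)$, which up to the exponential weight equals the adjoint solution $\Psi_\bb(\xi;r)$ from~\eqref{adjointsol}; on the slow segment along $M_1$, the Fenichel tangent plane is spanned by $\partial_u$ and the fast stable eigenvector $e^{\s}(u)$ at $(0, 0, u)$, with normal again in the $(q,p)$-plane. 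The two local normals are matched through the corner $(0, 0, u_\bb(r))$ of the singular loop.

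The sign of $\langle n(-\infty), e_2(X_2)\rangle$ then reduces, using that $e_2(X_2)$ is the tangent vector to $M_2$ at $X_2$ in the singular limit, to an expression proportional to $-\partial_u \mathcal{Q}_\bb(u_\bb(r), D_0(r), \mu_0(r); r)$. By Section~\ref{meln:compback}, $\partial_u \mathcal{Q}_\bb$ is strictly negative for all $r > \frac{2}{3}$, pinning down a definite sign. The sign of $\langle n(+\infty), e_3(X_1) \rangle$ is determined by the explicit asymptotics of $e^{\s}(u)$ at $u = 2$, combined with the monotonicity of the back profile: by~\eqref{backsol}, $q_\bb(\xi; r)$ is strictly decreasing from $q_{\bb,+}(r) > 0$ to $0$, so $p_\bb(\xi; r) < 0$ throughout. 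A direct comparison of these two endpoint signs, with careful bookkeeping of the orientation flips incurred when the continuous $n$ passes through the corner, yields opposite signs, so that $e_2(X_2)$ and $e_3(X_1)$ point to opposite sides of the plane field: the back is twisted.

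For the front the same strategy applies, now with $\Psi_\f$ from~\eqref{adjointsol}, the fast heteroclinic $X_\f$ in the layer $u = 2$, the slow segment on the right branch of $M_2$, and the Melnikov integral $\partial_u \mathcal{Q}_\f(u_\f(r), D_0(r), \mu_0(r); r)$. By Section~\ref{meln:compfront} and $\lim_{r\to\infty}\widetilde{M}_\f(r) = -\infty$, this integral is strictly negative for all sufficiently large $r > \frac{2}{3}$. Crucially, $q_\f$ in~\eqref{frontsol} is strictly \emph{increasing} while $q_\bb$ is decreasing, so the orientation produced at the corner $(q_{\f,+}(r), 0, 2)$ has the opposite sign to the back case; this swap produces the \emph{same} sign at both endpoints of the front, placing $e_2(X_1)$ and $e_3(X_2)$ on the same side of the normal field: the front is not twisted. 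The main technical obstacle is the careful matching of the normal field continuously through the singular corners of the loop (where a topological twist could emerge or cancel) and converting the endpoint sign comparison into the Melnikov quantities; this is handled via the explicit Nagumo heteroclinics from Lemma~\ref{lem:fastconnect} together with the adjoint solutions in~\eqref{adjointsol}.
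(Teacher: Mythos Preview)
Your approach is essentially the same as the paper's: reduce to the singular limit $\varepsilon = 0$ by smooth dependence, and then read off the twist from the sign of the Melnikov derivatives $\partial_u \mathcal{Q}_{\bb}$ and $\partial_u \mathcal{Q}_{\f}$ (both negative for large $r$, by~\S\ref{meln:compfront}--\S\ref{meln:compback}). The paper does not track a continuous normal field through the singular corners as you propose; instead it argues directly that the sign of $\partial_u \mathcal{Q}_j$ fixes the relative position of $W^{\s}_{\pmb\alpha_0}(K_{i,0})$ and $W^{\uu}_{\pmb\alpha_0}(K_{j,0})$ and then reads off, from the geometry depicted in Figure~\ref{fig:manifolds}, on which side the principal eigenvectors lie. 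Your attempt to make this figure-based step explicit via a normal field and corner matching is a reasonable elaboration, and your identification of the monotonicity contrast ($q_{\f}$ increasing versus $q_{\bb}$ decreasing) as the geometric source of the front/back asymmetry is a nice mechanistic explanation the paper does not spell out; however, you ultimately assert rather than compute the corner-matching signs, which leaves that step at the same level of sketchiness as the paper's appeal to the figure.
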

\begin{proof}
We argue at the hand of Figure~\ref{fig:manifolds}.
Recall from~\S\ref{sec:persistence_manifolds} that the stable manifold $W^{\s}_{\pmb \alpha}(K_{1,\pmb \alpha})$ depends smoothly on $\pmb \alpha$ for $\pmb \alpha$ close to $\pmb \alpha_0$. Moreover, for $\varepsilon > 0$ we have $W^{\s}_{\pmb \alpha}(K_{1,\pmb \alpha}) = W^{\s}_{\pmb \alpha}(X_1)$. So, by continuous dependency on $\varepsilon$, it suffices to show that $e_3(X_1)$ and $e_2(X_2)$ point to opposite sides of $W_{\pmb \alpha_0}^{\s}(K_{1,0})$. We can deduce this fact from $\partial_u \mathcal Q_{\bb}(\pmb \alpha_0(r);r) <0$, see~\S\ref{sec:Meln},  which determines the relative positions of $W_{\pmb \alpha_0}^{\s}(K_{1,0})$ and $W_{\pmb \alpha_0}^{\uu}(K_{2,0})$, see Figure~\ref{fig:manifolds}a. Hence, we establish the twist of the heteroclinic $X_\bb^{\varepsilon}$.

On the other hand, one can observe analogously that $\partial_u \mathcal Q_{\f}(\pmb \alpha_0(r);r) <0$ implies that $e_3(X_2)$ and $e_2(X_1)$ point to the same side of $W_{\pmb \alpha_0}^{\s}(K_{2,0})$, see Figure~\ref{fig:manifolds}b.
Hence, we obtain that $X_\f^{\varepsilon}$ is not twisted.
\end{proof}

Having shown the above Lemmas~\ref{lem:relative_expansion},~\ref{lem:nondegen_heteroclinic},~\ref{lem:nondegen},~\ref{lem:transverse} and~\ref{lem:twist}, Theorem~\ref{thm:single_twist} now follows from~\cite[Theorem 5.27]{HomburgSandstede}, taking $s = - \mu - \zeta$, see also~\cite{Deng91b}.

\begin{figure}[ht!]
\centering
\begin{subfigure}[b]{0.45\textwidth}
\centering
\begin{overpic}[scale=0.4]{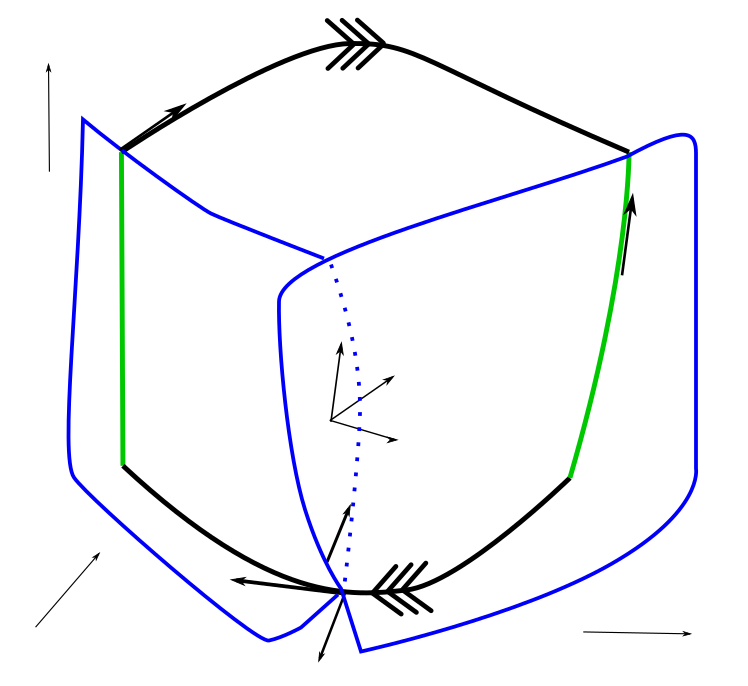}
\put(92, 8) { \scriptsize $q$}
\put(5, 15) { \scriptsize $p$}
\put(2, 85) { \scriptsize $u$}
\put(10, 80) { \scriptsize $e_3(X_1)$}
\put(65, 58) { \scriptsize $e_2(X_2)$}
\put(32, 87) { $X_{\f}$}
\put(59, 23) { $X_{\bb}$}
\put(52, 33) { \scriptsize $v_1$}
\put(38, 45) { \scriptsize $v_2$}
\put(50, 42) { \scriptsize $v_3$}
\put(21, 68) {\scriptsize $W_{\pmb \alpha_0}^{\s}(K_{1,0})$}
\put(50, 68) {\scriptsize $W_{\pmb \alpha_0}^{\uu}(K_{2,0})$}
\put(40, 0) { \scriptsize $e_{\bb}$}
\put(45, 20) { \scriptsize $\mathcal Q_{\bb} e_{\bb}$}
\end{overpic}
\caption{Twist geometry of back}
\end{subfigure}
\hfill
\begin{subfigure}[b]{0.45\textwidth}
\centering
\begin{overpic}[scale=0.4]{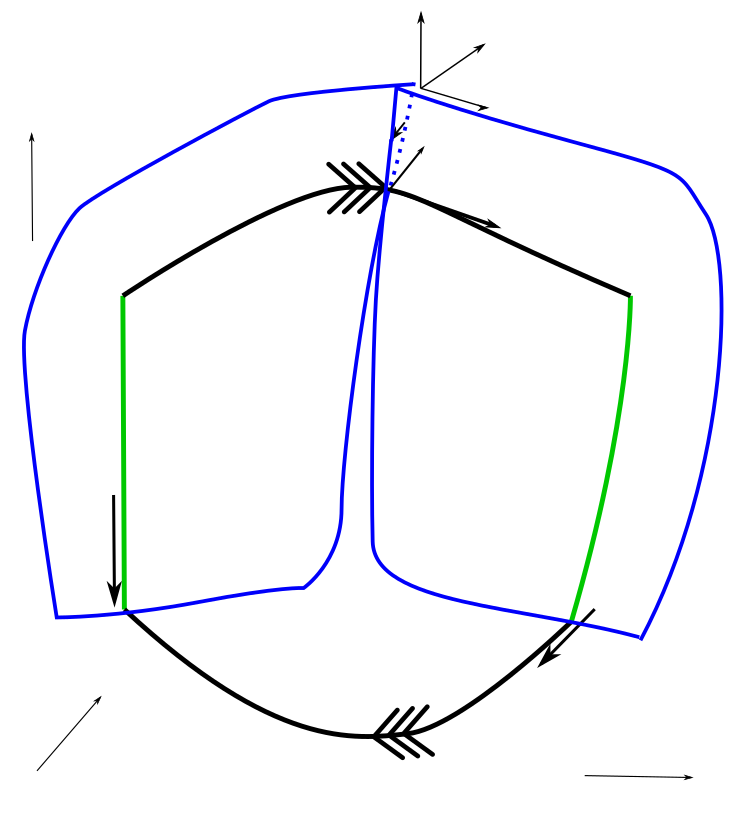}
\put(85, 6) { \scriptsize $q$}
\put(5, 15) { \scriptsize $p$}
\put(2, 85) { \scriptsize $u$}
\put(15, 32) { \scriptsize $e_2(X_1)$}
\put(67, 16) { \scriptsize $e_3(X_2)$}
\put(26, 76) { $X_{\f}$}
\put(30, 15) { $X_{\bb}$}
\put(57, 88) { \scriptsize $v_1$}
\put(45, 97) { \scriptsize $v_2$}
\put(59, 94) { \scriptsize $v_3$}
\put(21, 90) {\scriptsize $W_{\pmb \alpha_0}^{\uu}(K_{1,0})$}
\put(70, 85) {\scriptsize $W_{\pmb \alpha_0}^{\s}(K_{2,0})$}
\put(51, 80) { \scriptsize $e_{\f}$}
\put(37, 84) { \scriptsize $\mathcal Q_{\f} e_{\f}$}
\end{overpic}
\caption{No twist geometry of front}
\end{subfigure}
\caption{Sketches of the intersections of the stable and unstable manifolds and the directions of the principal stable and unstable eigenvectors in the singular limit $\varepsilon \downarrow 0$.
In the left panel, we depict the separation of the manifolds $W_{\pmb \alpha_0}^{\s}(K_{1,0})$
and $W_{\pmb \alpha_0}^{\uu}(K_{2,0})$, as measured by $Q_{\bb}(\pmb \alpha_0(r);r) e_b$. We have $\partial_u \mathcal Q_{\bb}(\pmb \alpha_0(r);r) < 0$ so that $W_{\pmb \alpha_0}^{\s}(K_{1,0})$ points towards the viewer and $e_2(X_2)$ and $e_3(X_1)$ point to opposite sides  of $W_{\pmb \alpha_0}^{\s}(K_{1,0})$, yielding the twist property. In the right panel, we depict the separation of the manifolds  $W_{\pmb \alpha_0}^{\s}(K_{2,0})$
and $W_{\pmb \alpha_0}^{\uu}(K_{1,0})$, as measured by $\mathcal Q_{\f}(\pmb \alpha_0(r);r)e_f$. We depict the situation for $\partial_u \mathcal Q_{\f}(\pmb \alpha_0(r);r) < 0$ so that $W_{\pmb \alpha_0}^{\s}(K_{1,0})$ points towards the viewer and $e_3(X_2)$ and $e_2(X_1)$ point to the same side of $W_{\pmb \alpha_0}^{\s}(K_{2,0})$, yielding no twist property as in Lemma~\ref{lem:twist}. Note that for $\partial_u \mathcal Q_{\f}(\pmb \alpha_0(r);r) > 0$, as in Lemma~\ref{lem:twistw}, there is a twist about the front as well.}
\label{fig:manifolds}
\end{figure}

\subsection{Double twisted regime}

We now turn to the intermediate Reynolds number regime, where we establish that the heteroclinic loop is double twisted.

\begin{lemma}[Double twist of the heteroclinic loop]
\label{lem:twistw}
Consider the heteroclinic loop established in Lemma~\ref{lem:nondegen_heteroclinic}. Assume Hypothesis~\ref{hyp} is satisfied. For $r \in (\frac{2}{3},\frac{2}{3}+\gamma)$, the principal eigenvectors $e_2(X_2)$ and $e_3(X_1)$ point to opposite sides of the tangent space $T_{X_\bb^\varepsilon(\xi)} W_{{\pmb \alpha(\varepsilon;r)}}^{\s}(X_1)$ as $\xi \to -\infty$ and $\xi \to +\infty$, respectively, which means that the heteroclinic back $X_\bb^{\varepsilon}$ is twisted. Furthermore, $e_2(X_1)$ and $e_3(X_2)$ also point to opposite sides of $T_{X_\f^{\varepsilon}(\xi)} W_{{\pmb \alpha(\varepsilon;r)}}^{\s}(X_2)$ as $\xi \to -\infty$ and $\xi \to +\infty$, respectively, which means that the heteroclinic front $X_\f^{\varepsilon}$ is also twisted.
\end{lemma}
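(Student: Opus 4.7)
The plan is to mirror the structure of the proof of Lemma~\ref{lem:twist}, with the only substantive change being that under Hypothesis~\ref{hyp} the sign of the Melnikov derivative $\partial_u \mathcal Q_\f(\pmb \alpha_0(r);r)$ is reversed, which is precisely what causes the second twist to appear. Both twists will be read off the signs of $\partial_u \mathcal Q_\bb$ and $\partial_u \mathcal Q_\f$ at the singular parameter values, together with the continuous dependence of the relevant stable and unstable manifolds on $\pmb \alpha$ established in Section~\ref{sec:persistence_manifolds}, and with the tangency statement in Lemma~\ref{lem:nondegen}(i).

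For the back, I would reuse the single-twist argument essentially verbatim. The computation at the end of Section~\ref{meln:compback} shows that $\partial_u \mathcal Q_\bb(u_\bb(r),D_0(r),\mu_0(r);r) < 0$ holds for \emph{all} $r > \tfrac{2}{3}$; this bound relied only on the monotonicity of the profile $q_\bb(\xi;r)$ and did not use largeness of $r$. Hence the picture in Figure~\ref{fig:manifolds}(a) remains valid in the intermediate regime: the manifolds $W_{\pmb \alpha_0}^{\s}(K_{1,0})$ and $W_{\pmb \alpha_0}^{\uu}(K_{2,0})$ separate in a definite direction along $X_\bb$, and the principal eigenvectors $e_2(X_2)$ and $e_3(X_1)$ lie on opposite sides of $W_{\pmb \alpha_0}^{\s}(K_{1,0})$. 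Smooth dependence of $W_{\pmb \alpha}^{\s}(X_1)$ on $\pmb \alpha$ then propagates the sideness statement to the perturbed heteroclinic $X_\bb^\varepsilon$ for all sufficiently small $\varepsilon > 0$, giving the twist of the back.

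For the front, the key input is Hypothesis~\ref{hyp}. Combined with the identity
\begin{equation*}
\partial_u \mathcal Q_\f(u_\f(r),D_0(r),\mu_0(r);r) = \frac{q_{\f,+}(r)^2}{D_0(r)}\, \widetilde M_\f(r),
\end{equation*}
derived in Section~\ref{meln:compfront}, the hypothesis yields $\partial_u \mathcal Q_\f(\pmb \alpha_0(r);r) > 0$ for $r \in (\tfrac{2}{3},\tfrac{2}{3}+\gamma)$, flipping the sign relative to the large-$r$ regime treated in Lemma~\ref{lem:twist}. Geometrically, as already noted in the caption of Figure~\ref{fig:manifolds}, this reverses the direction in which $W_{\pmb \alpha_0}^{\s}(K_{2,0})$ is displaced from $W_{\pmb \alpha_0}^{\uu}(K_{1,0})$ along $X_\f$, so that $e_2(X_1)$ and $e_3(X_2)$ now lie on opposite sides of $W_{\pmb \alpha_0}^{\s}(K_{2,0})$. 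Smooth persistence of the manifolds for $\varepsilon > 0$ small, together with Lemma~\ref{lem:nondegen}(i), then promotes this singular-limit assertion to the full heteroclinic $X_\f^\varepsilon$, delivering the second twist.

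I do not anticipate substantive obstacles, since the Melnikov integrals, the fibering of the stable and unstable manifolds, and the continuation machinery are already in place from Section~\ref{sec:slowfast}. The only delicate point is the geometric translation of the sign of $\partial_u \mathcal Q_\f$ into a sideness statement about the principal eigenvectors; this was set up once, in the single-twist case, via Figure~\ref{fig:manifolds}(b), and for the double-twist case it only needs to be re-read with the opposite sign.
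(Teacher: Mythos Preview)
Your proposal is correct and follows essentially the same approach as the paper: the back twist is inherited verbatim from the single-twist case via $\partial_u \mathcal Q_\bb(\pmb \alpha_0(r);r)<0$, while the front twist is obtained from Hypothesis~\ref{hyp}, which yields $\partial_u \mathcal Q_\f(\pmb \alpha_0(r);r)>0$ and hence the opposite-side configuration of $e_2(X_1)$ and $e_3(X_2)$ relative to $W_{\pmb \alpha_0}^{\s}(K_{2,0})$, with the conclusion transferred to $\varepsilon>0$ by smooth dependence of the manifolds. The paper's proof is terser but structurally identical, also referring to Figure~\ref{fig:manifolds} and noting that one lands in the double-twisted case of~\cite[Hypothesis~5.16~(iii)]{HomburgSandstede}.
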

\begin{proof}
Consider again Figure~\ref{fig:manifolds}. Observe that the twist of $X_\bb^{\varepsilon}$ is established as in the proof of Lemma~\ref{lem:twist}, since we have again $\partial_u \mathcal Q_{\bb} (\pmb \alpha_0(r);r) u <0$, cf.~\S\ref{sec:Meln}. 

On the other hand, due to $\partial_u \mathcal Q_{\f} (\pmb \alpha_0(r);r) >0$, cf.~Hypothesis~\ref{hyp}, the vectors $e_2(X_1)$ and $e_3(X_2)$ point to opposite sides of $W_{\pmb \alpha_0}^{\s}(K_{2,0})$. Hence, we obtain that, for sufficiently small $\varepsilon >0$, the heteroclinic front $X_\f^{\varepsilon}$ is also twisted. In particular, we are in the situation of~\cite[Hypothesis 5.16 (iii)]{HomburgSandstede}.
\end{proof}

Having shown the above Lemmas~\ref{lem:relative_expansion},~\ref{lem:nondegen_heteroclinic},~\ref{lem:nondegen},~\ref{lem:transverse} and~\ref{lem:twistw}, Theorem~\ref{thm:double_twist} now follows from~\cite[Theorem 5.27]{HomburgSandstede}, see also~\cite{Deng91b}.

\section{Discussion and outlook} \label{sec:outlook}

Our main result identifies the precise mechanism leading to the rise of turbulent spatio-temporal structures in the intermediate Reynolds number regime for the model~\eqref{SYS1}-\eqref{SYS11}. The model was previously validated experimentally and numerically via simulations of the full Navier-Stokes equations, precisely in this intermediate Reynolds number regime~\cite{Barkleyetal}. The main organizing center of the chaotic structures turns out to be a heteroclinic loop in the associated singularly perturbed traveling-wave equation. We anticipate that, although we have identified a large variety of spatio-temporal structures mediating the transition to turbulence, there could be many further interesting bifurcations of~\eqref{SYS1}-\eqref{SYS11}. To investigate these, it seems natural to exploit relations of~\eqref{SYS1}-\eqref{SYS11} with the FitzHugh-Nagumo (FHN) PDE, which is also bistable but has no advective term. For FHN, there are several works, which study bifurcation structures varying multiple parameters over broad ranges~\cite{Sneydetal,Deng91b,GuckenheimerKuehn1,Hastings4}. In the context of FHN, the question of stability of complex spatio-temporal waves has also raised quite a bit of attention~\cite{CarterdeRijkSandstede,Nii1,Sandstede2}. However, it seems doubtful whether long-term asymptotic stability of waves is the correct concept to unravel the transition to turbulence, where transient chaotic structures might play a much more prominent role~\cite{GrebogiOttYorke,LaiTel}. Therefore, we believe that it is of primary importance to understand even more about the organization of invariant solutions in phase and parameter space. 

Furthermore, we have also studied~\eqref{SYS1}-\eqref{SYS11} in the large Reynolds number regime. The question regarding the practical applicability of our results in this regime has already been discussed in Remark~\ref{rem:largeRE}. In fact, our results indicate that some complex spatio-temporal patterns mediating the transition to turbulence in~\eqref{SYS1}-\eqref{SYS11} disappear as the Reynolds number is increased further into the fully turbulent regime. Whether one observes in experiments only dynamics about the one turbulent state, i.e.~the equilibrium $X_2$, and corresponding simple heteroclinic and homoclinic connections towards it, still has to be studied for large $r$. Answering this question will decide if the model~\eqref{SYS1}-\eqref{SYS11} can give physically meaningful answers to turbulence also for large Reynolds number.\medskip

\textbf{Acknowledgments:} CK would like to thank the VolkswagenStiftung for support via a Lichtenberg Professorship. 
ME would like to thank the DFG for support within Germany's Excellence Strategy -- The Berlin Mathematics Research Center MATH+ (EXC-2046/1, project ID: 390685689).

\end{document}